\titleformat{\chapter}[display]
{\normalfont\huge\bfseries}{\chaptertitlename\ \thechapter}{20pt}{\Huge}
\titlespacing*{\chapter}{0pt}{-40pt}{30pt}
\DeclareRobustCommand{\bbinom}{\genfrac{[}{]}{0pt}{}}
\theoremstyle{plain}
\newtheorem{thm}{Theorem}[section]
\newtheorem{lem}[thm]{Lemma}
\newtheorem{prop}[thm]{Proposition}
\newtheorem{cor}[thm]{Corollary}
\theoremstyle{definition}
\newtheorem{defn}[thm]{Definition}
\newtheorem{exmp}[thm]{Example}
\theoremstyle{remark}
\newtheorem*{note}{Note}
\title{The MacWilliams Identity for the Hermitian Rank Metric}
\author{Izzy Friedlander\footnote{Department of Computer Science, Durham University, UK. \texttt{isobel.s.friedlander@durham.ac.uk}}
}
\date{\today}
\begin{document}

\maketitle

\thispagestyle{plain}

\begin{abstract}
Error-correcting codes have an important role in data storage and transmission and in cryptography, particularly in the post-quantum era. Hermitian matrices over finite fields and equipped with the rank metric have the potential to offer enhanced security with greater efficiency in encryption and decryption. One crucial tool for evaluating the error-correcting capabilities of a code is its weight distribution and the MacWilliams Theorem has long been used to identify this structure of new codes from their known duals. Earlier papers have developed the MacWilliams Theorem for certain classes of matrices in the form of a functional transformation, developed using $q$-algebra, character theory  and Generalised Krawtchouk polynomials, which is easy to apply and also allows for moments of the weight distribution to be found. In this paper, recent work by Schmidt on the properties of codes based on Hermitian matrices such as bounds on their size and the eigenvalues of their association scheme is extended by introducing a negative-$q$ algebra to establish a MacWilliams Theorem in this form together with some of its associated moments. The similarities in this approach and in the paper for the Skew-Rank metric by Friedlander et al. \cite{friedlander2023macwilliams} have been emphasised to facilitate future generalisation to any translation scheme. 
\end{abstract}

\textbf{Keywords:} MacWilliams identity; weight distribution; Hermitian matrices; association schemes; Krawtchouk polynomials

\textbf{MSC 2020 Classification:}  94B05, 15B33, 15B57
\section{Introduction}

Error-correcting codes have been used for transmitting and storing data over potentially unreliable media for many years. More recently they have been adapted to encrypt data for secure transmission and storage, most notably using the McEliece cryptosystem, originally designed using random permutations of binary Goppa codes with added random errors \cite{McElicseCrypto} With the development of quantum computers there is pressure to identify increasingly secure encryption algorithms and error-correcting codes continue to have that potential. One such algorithm, “Classic McEliece”, is a finalist in the Post Quantum Cryptography (PQC) competition run by the National Institute of Standards and Technology (NIST) but there are performance concerns for some applications because of its large public key size and slow key generation  \cite{NISTLIST}.

There is the potential to improve the performance of McEliece type cryptosystems by using different underlying codes. In particular codes based on a different distance metric, the rank metric, can have significantly smaller public keys \cite{PierreLoidreau}\cite{CodebasedencrpytionLau} while maintaining their security and error correction capability. 

Codes based on the rank metric were initially explored by Delsarte \cite{DelsarteBlinear}\cite{DelsarteAlternating} and Gabidulin \cite{GabidulinTheory} and many results have been derived for codes based on general matrices \cite{DelsarteBlinear}, symmetric matrices \cite{KaiApplications}\cite{SymmetricKai} and skew-symmetric or alternating matrices \cite{DelsarteAlternating} over finite fields of varying sizes. Two critical parameters used to measure the effectiveness of a code are its size (the number of codewords) which impacts efficiency, and its weight distribution which records the distances between codewords and impacts on its error correction strength. 
 
Hermitian matrices over a finite field are another possible underlying structure for codes based on the rank metric. Carlitz and Hodges \cite{CarlitzCountingHermitian} in 1955 identified many properties of such matrices,  and Stanton \cite{StantonKrawtchoukPolys} used hypergeometric series to find the eigenvalues of their association scheme in 1979. More recently there have been studies of the rank properties of these matrices \cite{Sheeky}\cite{Gow}. It wasn’t until much more recently that knowledge of codes based on Hermitian matrices was extended significantly by Schmidt \cite{KaiHermitian}. Schmidt uses results from the theory of association schemes \cite{delsarte1973algebraic}\cite{InfoTheoryDelsarte},and in particular the eigenvalues of the schemes, to prove many upper and lower bounds on the size of a code with a given minimum distance between codewords and to identify properties of its weight distribution. 

Delsarte \cite{delsarte1973algebraic} applied the theory of association schemes to the well known MacWilliams theorem \cite{TheoryofError} relating the weight distribution of a code to that of its dual code. Gadouleau and Yan \cite{gadouleau2008macwilliams} also derived the MacWilliams identity in the form of a functional transformation using character theory, $q$-algebra and the Hadamard Transform \cite{TheoryofError}. Friedlander, Gadouleau and Bouganis \cite{friedlander2023macwilliams} extended that functional transformation structure to codes based on skew-symmetric matrices using Generalised Krawtchouk polynomials and $q$-algebra.  By defining appropriate $q$-derivatives, the MacWilliams Theorem was used to relate the moments of the weight distributions of a code and its dual as well. 

In Hermitian Rank Distance codes by Kai-Uwe Schmidt \cite{KaiHermitian} he develops bounds on code sizes and proves the conditions under which the weight distribution is dependent only on it's parameters. In this paper, we draw heavily on the results of Schmidt, in particular the bounds and the eigenvalues of the association scheme to build a relevant negative-$q$-algebra. In this way the MacWilliams Theorem has been proven in the form once again of a new functional transformation. By introducing negative-$q$ derivatives, the relationship between moments of the weight distribution of a code and its dual have also been established.

This paper applies the same methodology to the association scheme of Hermitian matrices based on the rank metric. Once again an additional $q$-analog MacWilliams Identity is derived using the eigenvalues of the association scheme in place of the generalised Krawtchouk polynomials. All the necessary $q$-algebra was adapted for this application.

The rest of this paper is structured as follows: In Section \ref{section:Preliminaries} the necessary definitions and properties are introduced and some important identities are derived. Section \ref{section:negative-q-product} defines the negative-$q$-product, negative-$q$-power and negative-$q$-transform for homogeneous polynomials. Once again in particular, the powers of two specific key polynomials are found and related to the weight enumerators of Hermitian matrices of any size. In Section \ref{section:MacWilliamsHermitian} a negative-$q$ form of the generalised Krawtchouk polynomials is established and is used to prove a $q$-analog of the MacWilliams identity for the rank metric as a functional transform for Hermitian matrices. Section \ref{section:derivatives} introduces two negative-$q$-derivatives for real valued functions of a variable and derives some results for homogeneous polynomials including the two key polynomials explored in Section \ref{section:negative-q-product}. The derivatives are then used in Section \ref{section:moments} to identify moments of the rank distribution for linear codes based on Herimitian matrices. 

The results presented in this paper are included in \cite{IzzyThesis}, and they open clearly the possibility of obtaining similar results for multiple association schemes. It is now clear that this may be extended to more general schemes such as translation association schemes. The crucial question here is whether one can define the analogue of the $q$-product in a general setting such that the MacWilliams identity can be stated in a functional form, as the one in \cite{gadouleau2008macwilliams}, \cite{friedlander2023macwilliams} and the one obtained here.
\section{Preliminaries}\label{section:Preliminaries}

\subsection{Hermitian Matrices}
We begin with some assumptions and definitions relating to Hermitian Matrices over a finite field, $\mathbb{F}_{q^2}$ where $q$ is a prime power. We follow the convention that the empty product is taken to be $1$ and the empty sum is taken to be $0$. We also define $\sigma_i=\frac{i(i-1)}{2}$ for $i\geq0.$ 

We write the conjugate of $x\in\mathbb{F}_{q^2}$ as $\overline{x}= x^q$, Similarly for a $t \times t$ matrix over $\mathbb{F}_{q^2}$, we write $\boldsymbol{H}^\dag$ for the conjugate transpose matrix of $\boldsymbol{H}$.
\begin{defn}
    Let $\boldsymbol{H}$ be a $t \times t$ matrix over $\mathbb{F}_{q^2}$. Then $\boldsymbol{H}=(h_{ij})$ is called a \textbf{\textit{Hermitian}} matrix if $\boldsymbol{H}=\boldsymbol{H}^\dag$. The set of these Hermitian matrices is denoted $\mathscr{H}_{q,t}$.
\end{defn}

Each Hermitian matrix, $\boldsymbol{H}$, can be associated with a corresponding Hermitian Form, which is a bilinear form
\begin{equation}
    \boldsymbol{H}: ~ V \times V \rightarrow \mathbb{F}_{q^2}
\end{equation}
where $V$ is a $t$ dimensional space over $\mathbb{F}_{q^2}$ with $\left\{\boldsymbol{e}_1,\boldsymbol{e}_2,\ldots,\boldsymbol{e}_t\right\}$ and 
\begin{equation}
    \boldsymbol{H}\left(\boldsymbol{e}_i,\boldsymbol{e}_j \right) = h_{ij}.
\end{equation}
The set of these bilinear forms is denoted $\mathbb{B}(t,q)$. There is a one to one correspondence between $\mathscr{H}_{q,t}$ and $\mathbb{B}(t,q)$.

\begin{thm}\label{theorem:vectorspaceHermitian}
    $\mathscr{H}_{q,t}$ is a $t^2$-dimensional vector space over $\mathbb{F}_{q}$.
\end{thm}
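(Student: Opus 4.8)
The plan is to realise $\mathscr{H}_{q,t}$ as an $\mathbb{F}_q$-subspace of the space of all $t\times t$ matrices over $\mathbb{F}_{q^2}$, to write down an explicit $\mathbb{F}_q$-linear isomorphism of it onto a product of copies of $\mathbb{F}_q$ and $\mathbb{F}_{q^2}$, and then simply to count. Throughout, the elementary fact being used is that the conjugation map $x\mapsto\overline{x}=x^q$ fixes precisely the subfield $\mathbb{F}_q\subseteq\mathbb{F}_{q^2}$, and that $\mathbb{F}_{q^2}$ has dimension $2$ as an $\mathbb{F}_q$-vector space.

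First I would check closure under $\mathbb{F}_q$-linear combinations. If $\boldsymbol{H},\boldsymbol{H}'\in\mathscr{H}_{q,t}$ and $\lambda,\mu\in\mathbb{F}_q$, then $(\lambda\boldsymbol{H}+\mu\boldsymbol{H}')^\dag=\overline{\lambda}\,\boldsymbol{H}^\dag+\overline{\mu}\,\boldsymbol{H}'^\dag=\lambda\boldsymbol{H}+\mu\boldsymbol{H}'$, since $\overline{\lambda}=\lambda^q=\lambda$ for every $\lambda\in\mathbb{F}_q$. Hence $\mathscr{H}_{q,t}$ is an $\mathbb{F}_q$-subspace of the $\mathbb{F}_q$-vector space of all $t\times t$ matrices over $\mathbb{F}_{q^2}$. (It is worth noting here that $\mathscr{H}_{q,t}$ is \emph{not} stable under multiplication by arbitrary scalars of $\mathbb{F}_{q^2}$, which is exactly why the ground field must be taken to be $\mathbb{F}_q$.)

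Next I would extract the entrywise form of the Hermitian condition. Writing $\boldsymbol{H}=(h_{ij})$, the equation $\boldsymbol{H}=\boldsymbol{H}^\dag$ says $h_{ij}=\overline{h_{ji}}$ for all $i,j$. Taking $i=j$ gives $h_{ii}=h_{ii}^q$, so each diagonal entry lies in $\mathbb{F}_q$; for $i<j$ the entry $h_{ij}$ may be chosen arbitrarily in $\mathbb{F}_{q^2}$ and then $h_{ji}=\overline{h_{ij}}$ is forced. Consequently the map
\begin{equation*}
    \Phi:\mathscr{H}_{q,t}\longrightarrow \mathbb{F}_q^{t}\oplus\mathbb{F}_{q^2}^{\binom{t}{2}},\qquad \boldsymbol{H}\longmapsto\bigl((h_{ii})_{1\le i\le t},\ (h_{ij})_{1\le i<j\le t}\bigr)
\end{equation*}
is well defined; it is $\mathbb{F}_q$-linear, injective (the strictly lower-triangular part is recovered from the strictly upper-triangular part by conjugation), and surjective (from any prescribed data one builds a Hermitian matrix by setting $h_{ji}:=\overline{h_{ij}}$ for $i<j$), hence an isomorphism of $\mathbb{F}_q$-vector spaces. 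Equivalently, one checks that the matrices $E_{ii}$ for $1\le i\le t$ together with $E_{ij}+E_{ji}$ and $\omega E_{ij}+\overline{\omega}E_{ji}$ for $1\le i<j\le t$, where $\{1,\omega\}$ is any $\mathbb{F}_q$-basis of $\mathbb{F}_{q^2}$, form an $\mathbb{F}_q$-basis of $\mathscr{H}_{q,t}$.

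Finally, counting dimensions through $\Phi$ gives
\begin{equation*}
    \dim_{\mathbb{F}_q}\mathscr{H}_{q,t}=t\cdot\dim_{\mathbb{F}_q}\mathbb{F}_q+\binom{t}{2}\cdot\dim_{\mathbb{F}_q}\mathbb{F}_{q^2}=t+2\binom{t}{2}=t+t(t-1)=t^2 .
\end{equation*}
There is no genuine obstacle in the argument; the only points needing a little care are the identification of the fixed field of conjugation (which forces the diagonal entries into $\mathbb{F}_q$) and keeping track of the factor $2=\dim_{\mathbb{F}_q}\mathbb{F}_{q^2}$ in the off-diagonal contribution.
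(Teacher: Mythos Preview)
Your argument is correct and complete; the paper itself omits the proof entirely, declaring it trivial. Your explicit isomorphism $\Phi$ and the accompanying basis give exactly the kind of direct verification one would supply if writing the proof out.
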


\begin{proof}
    The proof of Theorem \ref{theorem:vectorspaceHermitian} is trivial and hence omitted.
\end{proof}







\begin{defn}\label{defn:rankofhermitianmatrix}
For all $\boldsymbol{H},\boldsymbol{J}\in \mathscr{H}_{q,t}$, we define the \textit{\textbf{rank distance}} to be 
\begin{equation}
    d_{R}(\boldsymbol{H},\boldsymbol{J})=R(\boldsymbol{H}-\boldsymbol{J})
\end{equation}
where $R(\boldsymbol{H})$ is the column rank of $\boldsymbol{H}$. It is easily verified that $d_{R}$ is a metric over $\mathscr{H}_{q,t}$.
\end{defn}

\subsection{Codes based on Subspaces of Hermitian Matrices}

Any subspace of $\mathscr{H}_{q,t}$ can be considered as an $\mathbb{F}_{q}$-linear code, $\mathscr{C}$, with each matrix of rank $r$ in $\mathscr{C}$ representing a codeword of weight $r$ and with the distance metric being the rank metric defined in Defintion \ref{defn:rankofhermitianmatrix}.

The \textbf{\textit{minimum rank distance}} of such a code $\mathscr{C}$, denoted as $d_{R}(\mathscr{C})$, is simply the minimum rank distance over all possible pairs of distinct codewords in $\mathscr{C}$. When there is no ambiguity about $\mathscr{C}$, we denote the minimum rank distance as $d_{R}$.

The following bound is proven explicitly in \cite[Theorem 1]{KaiHermitian} using the relationship between the inner distributions and the eigenvalues of the association scheme (which happens to actually obtain one of the sets of moments of the association scheme). It then uses the subgroup properties of the code to show that the bound holds. So we have that the cardinality $|\mathscr{C}|$ of a code $\mathscr{C}$ over $\mathbb{F}_{q^2}$ based on $t \times t$ Hermitian matrices and minimum rank distance $d_{R}$ satisfies 
\begin{equation}\label{SinglebuondHermitian}
    |\mathscr{C}|\leq q^{t(t-d_{R}+1)}.
\end{equation}

In this paper, we call the bound in \eqref{SinglebuondHermitian} the Singleton Bound for codes with the Hermitian Rank Metric.
Codes that attain the Singleton bound are referred to as Maximal Codes or Maximum Hermitian Rank Distance (MHRD) codes. 

\begin{defn}\label{def:Hermitianweightenumerator}
For all $\boldsymbol{H}\in\mathscr{H}_{q,t}$ with rank weight $r$, the \textit{\textbf{rank weight function}} of $\boldsymbol{H}$ is defined as the homogeneous polynomial
\begin{equation}
f_{R}(\boldsymbol{H})=Y^{r}X^{t-r}.
\end{equation}
Let $\mathscr{C} \subseteq \mathscr{H}_{q,t}$ be a code. Suppose there are $c_i$ codewords in $\mathscr{C}$ with rank weight $i$ for $0\leq i\leq t$. Then the \textbf{\textit{rank weight enumerator}} of $\mathscr{C}$, denoted as $W_\mathscr{C}^{R}(X,Y)$ is defined to be
\begin{equation}\label{defn:skewrankweightenumerator}
    W_\mathscr{C}^{R}(X,Y) = \sum_{\boldsymbol{H}\in\mathscr{C}}f_{R}(\boldsymbol{H})=\sum_{i=0}^{t} c_iY^{i}X^{t-i}.
\end{equation}

The $(t+1)$-tuple, $\boldsymbol{c}=(c_0,\ldots,c_t)$ of coefficients of the rank weight enumerator, is called the \textbf{\textit{weight distribution}} of the code $\mathscr{C}$.
\end{defn}

\begin{exmp}
    An example of such a code with $q=2$ and $t=3$ is where $\mathscr{C}$ is the set of Hermitian matrices, $\boldsymbol{H}$ over $\mathbb{F}_{4}$ such that;
    \begin{equation}
        \mathscr{C} = \left\{~
        \begin{aligned}
            \begin{pmatrix}
                    0 & 0 & 0 \\
                    0 & 0 & 0 \\
                    0 & 0 & 0 
                \end{pmatrix},
            \begin{pmatrix}
                    1 & \alpha & 0 \\
                    1+\alpha & 0 & 0 \\
                    0 & 0 & 0 
                \end{pmatrix}, &
            \begin{pmatrix}
                    0 & 0 & \alpha \\
                    0 & 1 & 0 \\
                    1+\alpha & 0 & 0 
                \end{pmatrix},
            \begin{pmatrix}
                    0 & 0 & 0 \\
                    0 & 0 & 1+\alpha \\
                    0 & \alpha & 1 
                \end{pmatrix}\\
            \begin{pmatrix}
                    1 & \alpha & \alpha \\
                    1+\alpha & 1 & 0 \\
                    1+\alpha & 0 & 0 
                \end{pmatrix},
            \begin{pmatrix}
                    1 & \alpha & 0 \\
                    1+\alpha & 0 & 1+\alpha \\
                    0 & \alpha & 1 
                \end{pmatrix}, &
            \begin{pmatrix}
                    0 & 0 & \alpha \\
                    0 & 1 & 1+\alpha \\
                    1+\alpha & \alpha & 1 
                \end{pmatrix},
            \begin{pmatrix}
                    1 & \alpha & \alpha \\
                    1+\alpha & 1 & 1+\alpha \\
                    1+\alpha & \alpha & 1 
                \end{pmatrix}
        \end{aligned}
    ~\right\}
    \end{equation}
    with multiplication table below.
    \begin{table}[H]
        \centering
        \begin{tabular}{||c||cccc||}
        \hline
            & 0 & 1 & $\alpha$ & $1+\alpha$\\
            \hline\hline 
            0 & 0 & 0 & 0 & 0 \\
            1 & 0 & 1 & $\alpha$ & $1+\alpha$ \\
            $\alpha$ & 0 & $\alpha$ & $1+\alpha$ & 1 \\
            $1+\alpha$ & 0 & $1+\alpha$ & 1 & $\alpha$\\
            \hline
        \end{tabular}
        \label{tab:examplemultiplicationtable}
    \end{table}
    Thus there are $8$ matrices (codewords) in this code, 1 of Hermitian rank 0, 3 of Hermitian rank 2 and 4 of Hermitian rank 3. Thus its Hermitian rank weight enumerator is $X^3 + 3Y^2X + 4Y^3$.
\end{exmp}

\subsection{Counting the number of Hermitian Matrices of a given size}
Multiple ways of describing the number of Hermitian matrices exist. The following is (for the purpose of this paper) in the best format and has been adapted from \cite[Theorem 3, p398]{CarlitzCountingHermitian}.

\begin{thm}\label{thm:hermitiancountingMatrices}
    The number of Hermitian matrices of order $t$ and Hermitian rank weight $h$ is given by
    \begin{equation}
        \xi_{t,h}=q^{\sigma_h} \times \frac{\displaystyle\prod_{i=0}^{h-1}q^{2t-2i}-1}{\displaystyle\prod_{i=1}^{h}q^{i}-(-1)^{i}} = (-1)^{h}(-q)^{\sigma_h}\times \frac{\displaystyle\prod_{i=0}^{h-1}(-q)^{2t-2i}-1}{\displaystyle\prod_{i=1}^{h}(-q)^i-1}.
    \end{equation}
\end{thm}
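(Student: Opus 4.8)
The plan is to prove the two displayed equalities in turn. The first is, up to reformulation, the classical Carlitz--Hodges count \cite{CarlitzCountingHermitian}, so I would either invoke it directly or re-derive it; the second is a formal substitution $q \mapsto -q$ whose only real content is sign bookkeeping.

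For the first equality I would use the bijection between $\mathscr{H}_{q,t}$ and Hermitian forms on $V = \mathbb{F}_{q^2}^{t}$ recorded in Section~\ref{section:Preliminaries}. A Hermitian form of rank $h$ is determined by its radical $W \subseteq V$, which has dimension $t-h$, together with a nondegenerate Hermitian form on the $h$-dimensional quotient $V/W$; conversely any such pair yields a rank-$h$ form. The number of choices of $W$ is the Gaussian binomial coefficient, equal to $\prod_{i=0}^{h-1}(q^{2t-2i}-1) / \prod_{i=1}^{h}(q^{2i}-1)$. The number of nondegenerate Hermitian forms on an $h$-dimensional space over $\mathbb{F}_{q^2}$ equals $\lvert GL_{h}(\mathbb{F}_{q^2})\rvert / \lvert U_{h}(\mathbb{F}_{q})\rvert$, since $GL_{h}(\mathbb{F}_{q^2})$ acts transitively on them (over a finite field the nondegenerate Hermitian form is unique up to equivalence) with stabiliser the unitary group; substituting $\lvert GL_{h}(\mathbb{F}_{q^2})\rvert = q^{h(h-1)}\prod_{i=1}^{h}(q^{2i}-1)$ and $\lvert U_{h}(\mathbb{F}_{q})\rvert = q^{\sigma_{h}}\prod_{i=1}^{h}(q^{i}-(-1)^{i})$ gives $q^{\sigma_{h}}\prod_{i=1}^{h}(q^{2i}-1) / \prod_{i=1}^{h}(q^{i}-(-1)^{i})$ such forms. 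Multiplying the two counts and cancelling the common factor $\prod_{i=1}^{h}(q^{2i}-1)$ yields $\xi_{t,h} = q^{\sigma_{h}}\prod_{i=0}^{h-1}(q^{2t-2i}-1) / \prod_{i=1}^{h}(q^{i}-(-1)^{i})$, which is the first displayed expression.

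For the second equality the two observations needed are: first, $2t-2i$ is even, so $(-q)^{2t-2i} = q^{2t-2i}$ and the numerator product is literally unchanged; second, $(-q)^{i}-1 = (-1)^{i}\bigl(q^{i}-(-1)^{i}\bigr)$ for all $i \geq 1$, verified by separating the cases $i$ even and $i$ odd. Consequently $\prod_{i=1}^{h}\bigl((-q)^{i}-1\bigr) = (-1)^{\sigma_{h+1}}\prod_{i=1}^{h}\bigl(q^{i}-(-1)^{i}\bigr)$, using $\sum_{i=1}^{h} i = \sigma_{h+1}$. Together with $(-q)^{\sigma_{h}} = (-1)^{\sigma_{h}}q^{\sigma_{h}}$, the right-hand side equals $q^{\sigma_h}\prod_{i=0}^{h-1}(q^{2t-2i}-1) / \prod_{i=1}^{h}(q^{i}-(-1)^{i})$ multiplied by the sign $(-1)^{h+\sigma_{h}-\sigma_{h+1}}$; and $h + \sigma_{h} - \sigma_{h+1} = h + \tfrac{h(h-1)}{2} - \tfrac{h(h+1)}{2} = 0$, so the sign is $1$ and the two expressions coincide.

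The only genuine obstacle I anticipate is this final sign count: one must check that the powers of $-1$ contributed by $(-q)^{\sigma_h}$, by the explicit factor $(-1)^{h}$, and by the product $\prod_{i=1}^{h}\bigl((-q)^{i}-1\bigr)$ cancel exactly, which they do precisely because $h + \sigma_h = \sigma_{h+1}$. Everything else is either classical or a routine Gaussian-binomial simplification. If one wishes to avoid quoting the orders of $GL_{h}$ and of the unitary group, an alternative for the first equality is a direct induction on $h$, counting the ways a rank-$h$ Hermitian matrix arises from a rank-$(h-1)$ one, which is in effect the argument of \cite{CarlitzCountingHermitian}.
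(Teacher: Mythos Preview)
Your proposal is correct. The paper itself does not give a proof of this theorem at all: the sentence preceding the statement says the formula ``has been adapted from \cite[Theorem 3, p398]{CarlitzCountingHermitian}'', and no proof environment follows. The first equality is therefore treated as a citation to Carlitz--Hodges, and the second equality is left as an implicit rewriting (the relevant sign manipulations do appear later, in the proof of Theorem~\ref{theorem:quicknumberHermitian}, where the paper verifies $\xi_{t,r}=\begin{bmatrix}t\\r\end{bmatrix}\gamma(t,r)$).

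Your argument goes further than the paper in that you actually supply a self-contained derivation of the first equality via the radical/quotient decomposition and the orbit count $\lvert GL_h(\mathbb{F}_{q^2})\rvert / \lvert U_h(\mathbb{F}_q)\rvert$, and you make the sign bookkeeping for the second equality explicit. Both steps are sound; the identity $h+\sigma_h=\sigma_{h+1}$ is exactly what makes the signs collapse, as you note. The only remark is that since the paper is content to cite the result, your more detailed argument is strictly extra --- but it is the standard proof and there is nothing to fault in it.
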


We also then note the Hermitian rank weight enumerator of $\mathscr{H}_{q,t}$ is
\begin{equation}
    \Omega_t= \sum_{i=0}^t \xi_{t,i}Y^{i}X^{t-i}.
\end{equation}

\begin{exmp}
    For $t=3$ and $q=2$ the rank weight enumerator of $\mathscr{H}_{2,3}$ is
    \begin{align}
        X^3 + 21YX^2 + 210Y^2X + 280Y^3.
    \end{align}
\end{exmp}


\subsection{Inner product of two Hermitian Matrices}

We define an \textbf{inner product} on $\mathscr{H}_{q,t}$ by
\begin{equation}
    (\boldsymbol{H},\boldsymbol{J})\mapsto\langle \boldsymbol{H},\boldsymbol{J} \rangle =Tr\left(\boldsymbol{H}^\dag\boldsymbol{J}\right)
\end{equation}
where $Tr(\boldsymbol{H})$ means the trace of $\boldsymbol{H}$.

\begin{defn}
The \textbf{\textit{dual}} of a code, $\mathscr{C}$, denoted by $\mathscr{C}^\perp$ is defined as
\begin{equation}
    \mathscr{C}^\perp = \big\{ \boldsymbol{H}\in\mathscr{H}_{q,t}~ \vert
\left\langle \boldsymbol{H},\boldsymbol{J}\right\rangle =0~\forall~ \boldsymbol{J} \in \mathscr{C}\big\}.\end{equation}
\end{defn}




\subsection{Negative-$q$ Gaussian Coefficients and other useful identities}

In establishing the results later in this paper we have used some identities to simplify the notation and algebra.


\begin{defn}\label{definition:negativeqGaussianCoeff}
    For any real number $q\neq 1,~ b=-q,~k\in\mathbb{Z}^+$ and $x\in\mathbb{R}$ (usually an integer), we define the \textbf{\textit{Negative-$q$ Gaussian Coefficient}} \cite{Negativeqbinomial}, {$\begin{bmatrix} x \\ k \end{bmatrix}$}, to be
    \begin{equation}
        \begin{bmatrix} x \\ k \end{bmatrix} = \prod_{i=0}^{k-1} \frac{b^{x}-b^i}{b^{k}-b^{i}}
    \end{equation}
    with
    \begin{equation}
        \begin{bmatrix} x \\ 0 \end{bmatrix} = 1.
    \end{equation}
\end{defn}

\begin{lem}
    The Negative-$q$ Gaussian Coefficient above can be written as
    \begin{equation}
        \begin{bmatrix} x \\ k \end{bmatrix} = \prod_{i=1}^k \frac{b^{x-i+1}-1}{b^{i}-1}.
    \end{equation}
\end{lem}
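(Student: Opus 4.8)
The plan is a direct algebraic rewriting of the defining product, with no appeal to any earlier machinery beyond the definition itself. First I would fix $b=-q$ and note that in the relevant range $b\neq 0$ (and $b^{k-i}\neq 1$ for $1\le k-i\le k$, so that the denominators in the definition do not vanish). The starting point is
\begin{equation*}
    \begin{bmatrix} x \\ k \end{bmatrix} = \prod_{i=0}^{k-1} \frac{b^{x}-b^i}{b^{k}-b^{i}}.
\end{equation*}

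Next I would treat a single factor. For each $i\in\{0,1,\dots,k-1\}$ write $b^{x}-b^{i}=b^{i}\bigl(b^{x-i}-1\bigr)$ and $b^{k}-b^{i}=b^{i}\bigl(b^{k-i}-1\bigr)$; since $b^{i}\neq 0$, the common factor cancels and the $i$-th term becomes $\dfrac{b^{x-i}-1}{b^{k-i}-1}$. Taking the product over $i$ gives
\begin{equation*}
    \begin{bmatrix} x \\ k \end{bmatrix} = \prod_{i=0}^{k-1} \frac{b^{x-i}-1}{b^{k-i}-1}.
\end{equation*}

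Finally I would reindex numerator and denominator independently, which is legitimate because a finite product is unchanged by a permutation of its factors. In the numerator set $j=i+1$, so as $i$ runs over $\{0,\dots,k-1\}$ the index $j$ runs over $\{1,\dots,k\}$ and the numerator becomes $\prod_{j=1}^{k}\bigl(b^{x-j+1}-1\bigr)$. In the denominator set $j=k-i$, so again $j$ runs over $\{1,\dots,k\}$ and the denominator becomes $\prod_{j=1}^{k}\bigl(b^{j}-1\bigr)$. Recombining the two products term by term yields
\begin{equation*}
    \begin{bmatrix} x \\ k \end{bmatrix} = \prod_{j=1}^{k} \frac{b^{x-j+1}-1}{b^{j}-1},
\end{equation*}
which is the claimed identity (after renaming $j$ back to $i$).

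There is no real obstacle here: the only points that require a word of care are that $b\neq 0$ (so the cancellation of $b^{i}$ is valid) and that the numerator and denominator may be reindexed separately because each is a finite product and hence reordering-invariant. Everything else is a routine substitution, so the argument is short.
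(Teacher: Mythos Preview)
Your proposal is correct and follows essentially the same approach as the paper: factor $b^{i}$ out of numerator and denominator in each term, then reindex the numerator by $j=i+1$ and the denominator by $j=k-i$. The paper presents this in three lines without the auxiliary commentary on $b\neq 0$ and reordering, but the argument is identical.
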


\begin{proof}
    We have
    \begin{align}
        \begin{bmatrix} x \\ k \end{bmatrix}   
            & = \prod_{i=0}^{k-1} \frac{b^{x}-b^i}{b^{k}-b^{i}}
            \\
            & = \prod_{i=0}^{k-1} \frac{b^{x-i}-1}{b^{k-i}-1}
            \\
            & = \prod_{i=1}^{k} \frac{b^{x-i+1}-1}{b^{i}-1}.
    \end{align}
\end{proof}

Here are some identities relating the negative-$q$ Gaussian coefficients that are useful from \cite[Section 2.2]{DelsarteAlternating}.
\begin{align}
    \left[\begin{matrix} x \\ k\end{matrix}\right] 
        & = \left[\begin{matrix} x \\ x-k\end{matrix}\right]\label{equation:gaussianxx-k}\\
    \left[\begin{matrix} x \\ i\end{matrix}\right]\left[\begin{matrix} x-i \\ k\end{matrix}\right] 
        & = \left[\begin{matrix} x \\ k\end{matrix}\right]\left[\begin{matrix} x-k \\ i\end{matrix}\right]\label{equation:gaussianswapplaces}\\
    \prod_{i=0}^{x-1}\left(y-b^{i}\right)  
        & = \sum_{k=0}^x (-1)^{x-k}b^{\binom{x-k}{2}}\left[\begin{matrix} x \\ k\end{matrix}\right]y^k\\
    \sum_{k=0}^x\left[\begin{matrix} x \\ k\end{matrix}\right]\prod_{i=0}^{k-1}\left(y-b^{i}\right)  
        & = y^x\label{equation:producttosumgauss}\\
    \sum_{k=i}^j (-1)^{k-i}b^{\sigma_{k-i}}\left[\begin{matrix} k \\ i\end{matrix}\right]\left[\begin{matrix} j \\ k\end{matrix}\right] 
        & = \delta_{ij}.\label{equation:deltaijbs}
\end{align}

The following identities are each used in the rest of this paper but can be shown trivially to be equal.

\begin{align}
    \begin{bmatrix} x \\ k \end{bmatrix} 
        & = \begin{bmatrix} x-1 \\ k \end{bmatrix} + b^{x-k}\begin{bmatrix} x-1 \\ k-1 \end{bmatrix} \label{equation:Stepdown1}
        \\ 
        & = \begin{bmatrix} x-1 \\ k-1 \end{bmatrix} + b^{k}\begin{bmatrix} x-1 \\ k \end{bmatrix} \label{equation:gaussiancoeffsx-1k-1}
        \\
        & = \frac{b^{x-k+1}-1}{b^{k}-1}\begin{bmatrix} x \\ k-1 \end{bmatrix}
        \label{equation:gaussianfracxk-1}
        \\
        & = \frac{b^{x}-1}{b^{x-k}-1}\begin{bmatrix} x-1 \\ k \end{bmatrix} \label{equation:gaussianfracx-1k}
        \\
                & = \frac{b^{x}-1}{b^{k}-1} \begin{bmatrix} x-1 \\ k-1 \end{bmatrix}. \label{equation:beta1stepdown}
        \end{align}

\begin{defn}
    We define the \textbf{\textit{Negative-$q$ Gamma function}} for $x\in\mathbb{R},~k\in\mathbb{Z}$ to be
    \begin{align}
        \gamma(x,k) 
            & = (-1)^{k}\prod_{i=0}^{k-1}\left( b^x + b^i\right)
            \\
            & = \prod_{i=0}^{k-1}\left( -b^{x} - b^{i}\right)
    \end{align}
\end{defn}

\begin{thm}\label{theorem:quicknumberHermitian}
    The number of Hermitian matrices of order $t$ and rank $r$ can be rewritten as
    \begin{equation}
        \xi_{t,r} = \begin{bmatrix} t \\ r \end{bmatrix}\gamma(t,r).
    \end{equation}
\end{thm}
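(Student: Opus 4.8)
The plan is to work entirely with the second, negative-$q$ expression for $\xi_{t,h}$ in Theorem~\ref{thm:hermitiancountingMatrices}, since its even exponents make the algebra cleanest; write $b=-q$ and $h=r$, and then factor it step by step until the two pieces $\begin{bmatrix} t \\ r \end{bmatrix}$ and $\gamma(t,r)$ emerge. Starting from
\begin{equation*}
\xi_{t,r}=(-1)^{r}b^{\sigma_r}\,\frac{\prod_{i=0}^{r-1}\bigl(b^{2(t-i)}-1\bigr)}{\prod_{i=1}^{r}\bigl(b^{i}-1\bigr)},
\end{equation*}
the first move is to split each numerator factor as a difference of squares, $b^{2(t-i)}-1=(b^{t-i}-1)(b^{t-i}+1)$, so that the numerator becomes $\prod_{i=0}^{r-1}(b^{t-i}-1)\cdot\prod_{i=0}^{r-1}(b^{t-i}+1)$.

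The second step is to recognise that the ``$-1$'' half of the numerator, together with the denominator, is precisely a negative-$q$ Gaussian coefficient. Reindexing $\prod_{i=0}^{r-1}(b^{t-i}-1)=\prod_{i=1}^{r}(b^{t-i+1}-1)$ and comparing with the product form of $\begin{bmatrix} t \\ r \end{bmatrix}$ established in the Lemma above gives $\prod_{i=0}^{r-1}(b^{t-i}-1)\big/\prod_{i=1}^{r}(b^{i}-1)=\begin{bmatrix} t \\ r \end{bmatrix}$, leaving
\begin{equation*}
\xi_{t,r}=(-1)^{r}b^{\sigma_r}\begin{bmatrix} t \\ r \end{bmatrix}\prod_{i=0}^{r-1}\bigl(b^{t-i}+1\bigr).
\end{equation*}

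The final step is to absorb $b^{\sigma_r}$ into the surviving product: since $\sum_{i=0}^{r-1}i=\sigma_r$, distributing one factor $b^{i}$ into the $i$-th term gives $b^{\sigma_r}\prod_{i=0}^{r-1}(b^{t-i}+1)=\prod_{i=0}^{r-1}(b^{t}+b^{i})$, and multiplying by $(-1)^{r}$ is exactly the definition of $\gamma(t,r)$. This yields $\xi_{t,r}=\begin{bmatrix} t \\ r \end{bmatrix}\gamma(t,r)$, as required. There is no genuine obstacle in the argument — it is a short chain of elementary identities — so the ``hard part'' is really just bookkeeping: one must track the sign $(-1)^{r}$ and the exponent $\sigma_r$ faithfully through the difference-of-squares split and the reindexing. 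These can be cross-checked against the first ($q$-form) expression for $\xi_{t,h}$, or against the worked example $\Omega_3=X^3+21YX^2+210Y^2X+280Y^3$.
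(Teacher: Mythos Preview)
Your proof is correct and is essentially the paper's argument run in the opposite direction: the paper starts from $\begin{bmatrix} t \\ r \end{bmatrix}\gamma(t,r)$, expands both factors, combines $(b^{t-i}-1)(b^{t-i}+1)=b^{2(t-i)}-1$, and then converts back to the $q$-form of $\xi_{t,r}$, whereas you start from the $b$-form of $\xi_{t,r}$, split via difference of squares, and recognise the two pieces. The underlying identities (difference of squares, the reindexing for the Gaussian coefficient, and the $b^{\sigma_r}$ redistribution) are identical; your direction is marginally tidier since it avoids the final $b\to q$ sign bookkeeping.
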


\begin{proof}
    We have
    \begin{align}
        \begin{bmatrix} t \\ r \end{bmatrix}\gamma(t,r)
            & = (-1)^r \displaystyle\prod_{i=1}^{r} \frac{b^{t-i+1}-1}{b^{i}-1}\displaystyle\prod_{i=0}^{k-1} \left(b^{t}+b^{i}\right)
            \\
            & = (-1)^r \frac{\displaystyle\prod_{i=1}^r b^{t-i+1}-1}{\displaystyle\prod_{i=1}^{k}b^{i}-1}\displaystyle\prod_{i=0}^{k-1}\left(b^{t}+b^{i}\right)
            \\
            & = (-1)^r \frac{\displaystyle\prod_{i=0}^{r-1}b^{t-i}-1}{\displaystyle\prod_{i=1}^{r}(-1)^{i}\left(q^i-(-1)^i\right)}\displaystyle\prod_{i=0}^{r-1}\left(b^{t-i}+1\right)b^{i}
            \\
            & = (-1)^r \frac{\displaystyle\prod_{i=0}^{r-1}b^{i}\left(b^{2t-2i}-1\right)}{\displaystyle\prod_{i=1}^{r}(-1)^{i}\left(q^i-(-1)^i\right)}
            \\
            & = (-1)^rb^{\sigma_r} \frac{\displaystyle\prod_{i=0}^{r-1}b^{2t-2i}-1}{\displaystyle\prod_{i=1}^{r}(-1)^{i}\left(q^i-(-1)^i\right)}
            \\
            & = \frac{(-1)^r(-1)^{\sigma_r}q^{\sigma_r}}{(-1)^{\sigma_{r+1}}} \frac{\displaystyle\prod_{i=0}^{r-1}q^{2t-2i}-1}{\displaystyle\prod_{i=1}^{r}q^i-(-1)^i}
            \\
            & = q^{\sigma_r}\times \frac{\displaystyle\prod_{i=0}^{r-1}q^{2t-2i}-1}{\displaystyle\prod_{i=1}^{r}q^i-(-1)^i}
            \\
            & = \xi_{t,r}.
    \end{align}
\end{proof}

\begin{lem}\label{lemma:Gammaidentites}
We have the following identities for the negative-$q$ Gamma function:
\begin{align}
    \gamma(x,k)
        & = b^{\sigma_k}\displaystyle\prod_{i=0}^{k-1}\left(-b^{x-i}-1\right)
    \\
    \gamma(x,k) 
        & = b^{k-1} \big(-b^{x}-1\big) \gamma(x-1,k-1) \label{equation:gammastepdown}
    \\
    \gamma(x,k+1) 
        & = \left(-b^x-b^{k}\right)\gamma(x,k).\label{equation:gammastepdownsecond}
\end{align}


\end{lem}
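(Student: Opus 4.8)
The plan is to derive all three identities directly from the definition $\gamma(x,k)=\prod_{i=0}^{k-1}\bigl(-b^{x}-b^{i}\bigr)$, proving the first (the fully factored form) and then reading off the other two from it with only elementary bookkeeping. For the first identity, I would pull a factor of $b^{i}$ out of the $i$-th term, using $-b^{x}-b^{i}=b^{i}\bigl(-b^{x-i}-1\bigr)$; taking the product over $i=0,\dots,k-1$ and noting $\sum_{i=0}^{k-1}i=\sigma_{k}$ gives $\gamma(x,k)=b^{\sigma_{k}}\prod_{i=0}^{k-1}\bigl(-b^{x-i}-1\bigr)$ at once.

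For \eqref{equation:gammastepdownsecond}, observe that the product defining $\gamma(x,k+1)$ is exactly the product defining $\gamma(x,k)$ with one extra term appended, namely the $i=k$ factor $-b^{x}-b^{k}$; hence $\gamma(x,k+1)=(-b^{x}-b^{k})\gamma(x,k)$, and nothing further is needed. For \eqref{equation:gammastepdown} I would start from the first identity, peel off the $i=0$ factor to write $\gamma(x,k)=b^{\sigma_{k}}(-b^{x}-1)\prod_{i=1}^{k-1}\bigl(-b^{x-i}-1\bigr)$, and then reindex $j=i-1$ so that the remaining product becomes $\prod_{j=0}^{k-2}\bigl(-b^{(x-1)-j}-1\bigr)$, which by the first identity applied to the pair $(x-1,k-1)$ equals $b^{-\sigma_{k-1}}\gamma(x-1,k-1)$. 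Collecting the powers of $b$ and using the telescoping relation $\sigma_{k}-\sigma_{k-1}=k-1$ then gives $\gamma(x,k)=b^{k-1}(-b^{x}-1)\gamma(x-1,k-1)$.

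There is no substantive obstacle here: each of the three identities is a one- or two-line manipulation of a finite product, and the only point requiring any care is tracking the exponents of $b$, for which the single fact $\sigma_{k}-\sigma_{k-1}=k-1$ suffices. One could alternatively prove \eqref{equation:gammastepdown} and \eqref{equation:gammastepdownsecond} by induction on $k$, but the direct approach above is cleaner and makes the first identity do all the work.
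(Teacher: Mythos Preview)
Your proposal is correct and follows essentially the same approach as the paper: parts (1) and (3) are identical to the paper's argument, and for \eqref{equation:gammastepdown} the paper works directly from the defining product $\prod_{i=0}^{k-1}(-b^{x}-b^{i})$ --- peeling off the $i=0$ factor and then extracting a common $b$ from each remaining term --- rather than routing through identity (1) as you do, but this is a cosmetic difference in bookkeeping rather than a genuinely different method.
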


\begin{proof}~\\
$(1)$
\begin{align}
   \gamma(x,k) 
        & = \prod_{i=0}^{k-1}\left(-b^x - b^{i}\right)
        \\
        & = \left(\prod_{i=0}^{k-1}b^{i}\right)\prod_{i=0}^{k-1}\left(-b^{x-i}-1\right)
        \\
        & = b^{\sigma_k}\prod_{i=0}^{k-1}\left(-b^{x-i}-1\right).
\end{align}


$(2)$
\begin{align}
        \gamma(x,k) 
                & =  \prod_{i=0}^{k-1} \left(-b^{x} - b^{i}\right)
                \\
                & =  \left(-b^{x}-1\right) \prod_{i=1}^{k-1}\left(-b^{x}-b^{i}\right)
                \\
                & =  \left(-b^{x}-1\right) \prod_{i=1}^{k-1} b\left(-b^{x-1}-b^{i-1}\right)
                \\
                & =  \left(-b^{x}-1\right) b^{k-1} \prod_{i=0}^{k-2} \left(-b^{x-1}-b^{i}\right)
                \\
                & = b^{k-1}\left(-b^{x}-1\right)\gamma(x-1,k-1).
\end{align}
$(3)$
\begin{align}
    \gamma(x,k+1) 
        & = \prod_{i=0}^k\left(-b^x-b^{i}\right)
        \\
        & = \left(-b^x-b^{k}\right)\prod_{i=0}^{k-1}\left(-b^x-b^{i}\right)
        \\
        & = \left(-b^x-b^{k}\right)\gamma(x,k).
\end{align}
\end{proof}

\begin{defn}\label{defn:negativebetafunction}
We also define a \textbf{\textit{Negative-$q$ Beta function}} for $x\in\mathbb{R}$, $k\in\mathbb{Z}^+$ as 
\begin{equation}\label{equation:betafunction}
    \beta(x,k) =
             \displaystyle\prod_{i=0}^{k-1}\begin{bmatrix}x-i\\1\end{bmatrix}.
\end{equation}
These are closely related to Negative-$q$ Gaussian Coefficients. 
\end{defn}

\begin{lem}\label{lemma:betabmanipulation}
We have for all $x\in\mathbb{R}$, $k\in\mathbb{Z}^+$,
\begin{equation}\label{equation:betabstartdifferent}
\beta(x,k) = \begin{bmatrix}x\\k\end{bmatrix}\beta(k,k)
\end{equation}
and
\begin{equation}\label{equation:betabstartsame}
\beta(x,x) = \begin{bmatrix}x\\k\end{bmatrix}\beta(k,k)\beta(x-k,x-k).
\end{equation}
\end{lem}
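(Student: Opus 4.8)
The plan is to reduce everything to the explicit product form of $\beta$. From Definition \ref{defn:negativebetafunction} together with the $k=1$ case of the lemma, namely $\begin{bmatrix}x-i\\1\end{bmatrix} = \frac{b^{x-i}-1}{b-1}$, we get $\beta(x,k) = \prod_{i=0}^{k-1}\frac{b^{x-i}-1}{b-1}$. Reindexing this product by $j = k-i$ gives the ``factorial-type'' closed form $\beta(k,k) = \prod_{j=1}^{k}\frac{b^{j}-1}{b-1}$, which I will use repeatedly. Note that the denominators here never vanish: $b^{i}-1 = (-q)^{i}-1$ is $-q^{i}-1 \neq 0$ for $i$ odd and $q^{i}-1 \neq 0$ for $i$ even since $q \neq 1$.

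For \eqref{equation:betabstartdifferent}, I would start from the product form of the Negative-$q$ Gaussian Coefficient, $\begin{bmatrix}x\\k\end{bmatrix} = \prod_{i=1}^{k}\frac{b^{x-i+1}-1}{b^{i}-1}$, and multiply by $\beta(k,k) = \prod_{i=1}^{k}\frac{b^{i}-1}{b-1}$. The factors $b^{i}-1$ cancel term by term, leaving $\prod_{i=1}^{k}\frac{b^{x-i+1}-1}{b-1}$, which after the shift $i \mapsto i+1$ is exactly $\prod_{i=0}^{k-1}\frac{b^{x-i}-1}{b-1} = \beta(x,k)$. That is all of part (1); the only thing to watch is the index bookkeeping.

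For \eqref{equation:betabstartsame}, the key observation is the multiplicative splitting $\beta(x,x) = \beta(x,k)\,\beta(x-k,x-k)$. Indeed, partitioning the index set $\{0,1,\dots,x-1\}$ of the defining product of $\beta(x,x)$ into $\{0,\dots,k-1\}$ and $\{k,\dots,x-1\}$ yields $\beta(x,k)$ for the first block, and substituting $i = k+j$ in the second block gives $\prod_{j=0}^{x-k-1}\begin{bmatrix}x-k-j\\1\end{bmatrix} = \beta(x-k,x-k)$, where we implicitly take $x$ to be an integer with $x \geq k$ so that $\beta(x-k,x-k)$ is defined. Substituting part (1) in the form $\beta(x,k) = \begin{bmatrix}x\\k\end{bmatrix}\beta(k,k)$ then gives $\beta(x,x) = \begin{bmatrix}x\\k\end{bmatrix}\beta(k,k)\beta(x-k,x-k)$, as required.

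There is no real obstacle: both identities are bookkeeping consequences of the closed form for $\beta$ and the product formula for $\begin{bmatrix}x\\k\end{bmatrix}$. The only mild subtleties are the implicit restriction that $x$ be a nonnegative integer with $x\geq k$ so that the second identity typechecks, and checking that the cancellation in part (1) is legitimate, which we have noted follows from $q \neq 1$.
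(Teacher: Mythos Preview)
Your proof is correct and follows essentially the same approach as the paper: both arguments reduce to writing out the explicit product formulas for $\beta$ and $\begin{bmatrix}x\\k\end{bmatrix}$ and checking the bookkeeping. The only minor organizational difference is that for \eqref{equation:betabstartsame} you first isolate the splitting $\beta(x,x) = \beta(x,k)\,\beta(x-k,x-k)$ and then invoke part (1), whereas the paper expands all three factors on the right-hand side directly and combines; this is a stylistic choice rather than a different method.
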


\begin{proof}
We have
\begin{align}
    \beta(x,k) = \prod_{i=0}^{k-1}\begin{bmatrix}x-i\\1\end{bmatrix} 
        & = \prod_{i=0}^{k-1}\frac{b^{(x-i)}-1}{b-1}
        \\
        & = \prod_{i=0}^{k-1}\frac{\left(b^{x-i}-1\right)\left(b^{k-i}-1\right)}{\left(b^{k-i}-1\right)(b-1)}
        \\
        & = \prod_{i=0}^{k-1}\frac{b^{x}-b^{i}}{b^{k}-b^{i}}\prod_{i=0}^{k-1}\frac{b^{k-i}-1}{b-1}
        \\
        & = {\begin{bmatrix}x\\k\end{bmatrix}}\beta(k,k)
\end{align}
as required. Now we have
\begin{align}
    {\begin{bmatrix}x\\k\end{bmatrix}}\beta(k,k)\beta(x-k,x-k) 
        & = \prod_{i=0}^{k-1}\frac{b^{x}-b^{i}}{b^{k}-b^{i}}\prod_{r=0}^{k-1}\frac{b^{k-r}-1}{b-1}\prod_{s=0}^{x-k-1}\frac{b^{x-k-s}-1}{b-1}
        \\
    %
        & = \prod_{i=0}^{x-1}\frac{b^{x-i}-1}{b-1}
        \\
        & = \beta(x,x)
\end{align}
as required.
\end{proof}

\begin{lem}\label{lemma:betaproperties}
    We have the following identity for the negative-q-beta function. For all $x\in\mathbb{R}$, $k\in\mathbb{Z}^+$ we have
    \begin{equation}\label{equation:betaproperties}
        \beta(x,k)\beta(x-k,1)=\beta(x,k+1).
    \end{equation}
\end{lem}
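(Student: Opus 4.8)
The plan is to unwind the definition of the negative-$q$-beta function directly; this identity is essentially the statement that the defining product telescopes one factor at a time, so no machinery beyond Definition \ref{defn:negativebetafunction} is needed.

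First I would write out $\beta(x,k+1)$ using \eqref{equation:betafunction} as the product $\prod_{i=0}^{k}\begin{bmatrix}x-i\\1\end{bmatrix}$, which has $k+1$ factors indexed by $i=0,1,\dots,k$. Then I would peel off the final factor (the one with $i=k$), leaving $\prod_{i=0}^{k-1}\begin{bmatrix}x-i\\1\end{bmatrix}$ times $\begin{bmatrix}x-k\\1\end{bmatrix}$. The first of these two pieces is exactly $\beta(x,k)$ by definition. For the second piece, I would observe that $\beta(x-k,1)=\prod_{i=0}^{0}\begin{bmatrix}(x-k)-i\\1\end{bmatrix}=\begin{bmatrix}x-k\\1\end{bmatrix}$, again straight from the definition with the single index $i=0$. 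Combining these two observations gives $\beta(x,k+1)=\beta(x,k)\,\beta(x-k,1)$, which is the claim.

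Concretely the computation is
\begin{align}
    \beta(x,k+1)
        & = \prod_{i=0}^{k}\begin{bmatrix}x-i\\1\end{bmatrix}
        \\

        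& = \left(\prod_{i=0}^{k-1}\begin{bmatrix}x-i\\1\end{bmatrix}\right)\begin{bmatrix}x-k\\1\end{bmatrix}
        \\

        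& = \beta(x,k)\begin{bmatrix}x-k\\1\end{bmatrix}
        \\

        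& = \beta(x,k)\,\beta(x-k,1),
\end{align}
where the last equality uses $\beta(x-k,1)=\begin{bmatrix}x-k\\1\end{bmatrix}$.

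There is no real obstacle here: the only thing to be careful about is the index bookkeeping when splitting the product, and the (trivial) check that the $k=$ (lower-limit) edge cases of the empty/singleton product are handled by the convention that the empty product is $1$, which is already in force in the paper. If one wished, the same identity also follows immediately from \eqref{equation:betabstartdifferent} by writing $\beta(x,k+1)=\begin{bmatrix}x\\k+1\end{bmatrix}\beta(k+1,k+1)$ and $\beta(x,k)=\begin{bmatrix}x\\k\end{bmatrix}\beta(k,k)$ and applying \eqref{equation:beta1stepdown}, but the direct argument above is shorter and I would present that one.
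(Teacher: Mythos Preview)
Your proof is correct and takes essentially the same approach as the paper's own proof: both simply unwind Definition~\ref{defn:negativebetafunction} and observe that the extra factor in the product is $\begin{bmatrix}x-k\\1\end{bmatrix}=\beta(x-k,1)$. The only cosmetic difference is that you start from $\beta(x,k+1)$ and peel off a factor, whereas the paper starts from $\beta(x,k)\beta(x-k,1)$ and combines them.
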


\begin{proof}
    \begin{align}
        \beta(x,k)\beta(x-k,1) 
            & = \begin{bmatrix} x-k \\ 1 \end{bmatrix} \prod_{i=0}^{k-1}\begin{bmatrix} x-i \\ 1 \end{bmatrix}
            \\
            & = \beta(x,k+1).
    \end{align}
\end{proof}

\section{The Negative-$q$-Product and Negative-$q$-Transform}\label{section:negative-q-product}

The weight enumerators of any linear code $\mathscr{C}\subseteq \mathscr{H}_{q,t}$ are homogeneous polynomials. We introduce an operation, the Negative-$q$-Product, on homogeneous polynomials that will help express the relation between the weight enumerator of a code and that of it's dual. 

\subsection{The Negative-$q$-product, Negative-$q$-power and the Negative-$q$-transform}

\begin{defn}\label{q-proddefn}
    Let
    \begin{align}
        a(X,Y;\lambda) 
            & = \sum_{i=0}^{r} a_i(\lambda) Y^{i} X^{r-i}
            \\
        b(X,Y;\lambda) 
            & = \sum_{i=0}^{s} b_i(\lambda) Y^{i} X^{s-i}
    \end{align}
    be two homogeneous polynomials in $X$ and $Y$, of degrees $r$ and $s$ respectively, and coefficients $a_i(\lambda)$ and $b_i(\lambda)$ respectively, which are real functions of $\lambda$ and are $0$ unless otherwise specified, for example $b_i(\lambda)=0$ if $i\notin\left\{0,1,\ldots,s\right\}$. The \textbf{\textit{negative-$q$-product}}, $\ast$, of $a(X,Y;\lambda)$ and $b(X,Y;\lambda)$, is defined as
    \begin{align}
        c(X,Y;\lambda) 
            & = a(X,Y;\lambda) \ast b(X,Y;\lambda)\label{equation:qproduct}
            \\
            & = \sum_{u=0}^{r+s} c_u(\lambda) Y^{u} X^{r+s-u} 
    \end{align} 
    with
    \begin{equation}
        c_u(\lambda) = \sum_{i=0}^{u} (-q)^{is} a_i(\lambda)b_{u-i}(\lambda -i).
    \end{equation}
\end{defn}

\begin{defn}
    As in \cite{gadouleau2008macwilliams} the \textbf{\textit{negative-$q$-power}} is defined by 
    \begin{equation}
        \begin{cases}
            a^{[0]}(X,Y;\lambda) = 1 \\
            a^{[1]}(X,Y;\lambda) = a(X,Y;\lambda) \\
            a^{[k]}(X,Y;\lambda) = a(X,Y;\lambda) \ast a^{[k-1]}(X,Y;\lambda) & \text{for } k \geq 2.
        \end{cases}
    \end{equation}
\end{defn}

\begin{defn}[{\cite[Definition 4]{gadouleau2008macwilliams}}]\label{defn:negativeqtransform}
    Let $a(X,Y;\lambda) = \sum_{u=0}^{r} a_{i}(\lambda) Y^{i} X^{r-i}$. We define the \textbf{\textit{negative-$q$-transform}} to be the homogeneous polynomial
    \begin{equation}
        \overline{a}(X,Y;\lambda) = \sum_{i=0}^{r} a_{i}(\lambda) Y^{[i]} \ast X^{[r-i]}.
    \end{equation}
\end{defn}

\subsection{Using the Negative-$q$-product to identify the Rank Weight Enumerator of Hermitian Matrices}\label{subsection:negativeqproduct}

In the theory that follows, relating the weight enumerator of a code to it's dual, then we consider the following polynomial. Let
\begin{equation}\label{}
    \mu(X,Y;\lambda) = X + \left(-b^{\lambda}-1\right)Y.
\end{equation}

\begin{thm}
    If $\mu(X,Y;\lambda)$ is as defined above, then
    \begin{equation} \label{equation:muformula}
        \mu^{[k]}(X,Y;\lambda) = \sum_{u=0}^{k} \mu_{u}(\lambda,k) Y^{u} X^{k-u} \quad \text{for } k \geq 1,
    \end{equation}
    where
    \begin{equation}
        \mu_{u}(\lambda,k) = \begin{bmatrix} k \\ u \end{bmatrix}\gamma(\lambda,u).
    \end{equation}
    Specifically, the weight enumerators for $\mathscr{H}_{q,t}$, the set of hermitian matrices of size $t \geq 1$, denoted by $\Omega_{t}$, is given by
    \begin{equation}
        \Omega_{t} = \mu^{[t]}(X,Y;t).
    \end{equation}
    In other words, the negative-$q$-powers of $\mu(X,Y;t)$ provide an explicit form for the weight enumerator of $\mathscr{H}_{q,t}$, the set of hermitian matrices of order $t$.
\end{thm}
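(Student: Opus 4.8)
The plan is to prove the formula for $\mu^{[k]}$ by induction on $k$, and then specialise $\lambda = t$, $k = t$ to recover $\Omega_t$. The base case $k=1$ is immediate: $\mu^{[1]} = X + (-b^\lambda - 1)Y$, and indeed $\mu_0(\lambda,1) = \begin{bmatrix} 1 \\ 0 \end{bmatrix}\gamma(\lambda,0) = 1$ while $\mu_1(\lambda,1) = \begin{bmatrix} 1 \\ 1 \end{bmatrix}\gamma(\lambda,1) = -b^\lambda - 1$, matching the coefficients. For the inductive step I would assume $\mu^{[k-1]}(X,Y;\lambda) = \sum_{u=0}^{k-1}\begin{bmatrix} k-1 \\ u \end{bmatrix}\gamma(\lambda,u)Y^u X^{k-1-u}$ and compute $\mu^{[k]} = \mu \ast \mu^{[k-1]}$ directly from Definition \ref{q-proddefn}. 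Writing $a(X,Y;\lambda) = \mu$ so that $a_0 = 1$, $a_1(\lambda) = -b^\lambda - 1$, $s = k-1$, the coefficient of $Y^u X^{k-u}$ in the product is
\begin{equation}
    \mu_u(\lambda,k) = \sum_{i=0}^{u}(-q)^{i(k-1)} a_i(\lambda)\,\mu_{u-i}(\lambda - i, k-1) = b^{\lambda\cdot 0}\mu_u(\lambda,k-1) + b^{u-1}\bigl(-b^{\lambda}-1\bigr)\mu_{u-1}(\lambda-1,k-1),
\end{equation}
using $(-q)^{k-1} = b^{k-1}$ and noting only the $i=0,1$ terms survive. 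The claim then reduces to the identity
\begin{equation}
    \begin{bmatrix} k \\ u \end{bmatrix}\gamma(\lambda,u) = \begin{bmatrix} k-1 \\ u \end{bmatrix}\gamma(\lambda,u) + b^{u-1}\bigl(-b^\lambda - 1\bigr)\begin{bmatrix} k-1 \\ u-1 \end{bmatrix}\gamma(\lambda-1,u-1).
\end{equation}

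To close this, I would apply the $\gamma$ step-down relation \eqref{equation:gammastepdown}, namely $\gamma(\lambda,u) = b^{u-1}(-b^\lambda-1)\gamma(\lambda-1,u-1)$, which rewrites the second summand as $\begin{bmatrix} k-1 \\ u-1 \end{bmatrix}\gamma(\lambda,u)$. Factoring out $\gamma(\lambda,u)$, the required identity becomes exactly the Pascal-type recurrence \eqref{equation:gaussiancoeffsx-1k-1} for the negative-$q$ Gaussian coefficients, $\begin{bmatrix} k \\ u \end{bmatrix} = \begin{bmatrix} k-1 \\ u-1 \end{bmatrix} + b^u\begin{bmatrix} k-1 \\ u \end{bmatrix}$ — wait, the exponent does not match directly, so here care is needed: after the $\gamma$ substitution the coefficient of $\begin{bmatrix} k-1 \\ u-1\end{bmatrix}\gamma(\lambda,u)$ is $1$, not $b^u$, so I should instead invoke \eqref{equation:Stepdown1} in the complementary form or combine \eqref{equation:gammastepdown} with \eqref{equation:gammastepdownsecond} to route the factor correctly. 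This bookkeeping of which $\gamma$- and Gaussian-recurrence to pair is the one genuinely delicate point; everything else is mechanical. Once the coefficient formula for $\mu^{[k]}$ is established for all $k \geq 1$, setting $\lambda = t$ and $k = t$ gives $\mu^{[t]}(X,Y;t) = \sum_{u=0}^t \begin{bmatrix} t \\ u \end{bmatrix}\gamma(t,u) Y^u X^{t-u}$, and by Theorem \ref{theorem:quicknumberHermitian} the coefficient $\begin{bmatrix} t \\ u\end{bmatrix}\gamma(t,u)$ equals $\xi_{t,u}$, so this is precisely $\Omega_t = \sum_{u=0}^t \xi_{t,u}Y^u X^{t-u}$, completing the proof.

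The main obstacle, then, is not conceptual but the precise alignment of the $b$-power bookkeeping in the inductive step: one must verify that the exponent $(-q)^{i(k-1)} = b^{i(k-1)}$ arising from the negative-$q$-product, the shift $\lambda \mapsto \lambda - i$ in the second argument of $\mu_{u-i}$, and the step-down relations for $\gamma$ and for $\begin{bmatrix} \cdot \\ \cdot \end{bmatrix}$ all conspire to give the clean closed form. I would double-check this by also verifying a small case explicitly (e.g. $k=2$, recovering $\mu^{[2]} = X^2 + (-q)^1(-b^\lambda-1)\cdots$), which simultaneously serves as a sanity check on the sign conventions in $b = -q$ and in $\gamma$.
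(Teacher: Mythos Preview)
Your overall strategy---induction on $k$, expanding $\mu^{[k]}=\mu\ast\mu^{[k-1]}$ via Definition~\ref{q-proddefn}, then reducing to a Pascal-type identity for the negative-$q$ Gaussian coefficients together with the $\gamma$ step-down---is exactly the paper's approach. The paper merely routes the Gaussian simplification through the rational identities \eqref{equation:gaussianfracx-1k} and \eqref{equation:beta1stepdown} rather than the additive Pascal relation, but the content is the same.

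The ``delicate bookkeeping'' you flag is not a genuine obstacle; it is a slip in your own computation. In the negative-$q$-product the exponent is $(-q)^{is}$ with $s=\deg\mu^{[k-1]}=k-1$, so the $i=1$ term carries $b^{k-1}$, not $b^{u-1}$. With the correct exponent the second summand is
\[
b^{k-1}(-b^{\lambda}-1)\begin{bmatrix}k-1\\u-1\end{bmatrix}\gamma(\lambda-1,u-1)
\;=\;
b^{k-1}\cdot b^{1-u}\begin{bmatrix}k-1\\u-1\end{bmatrix}\gamma(\lambda,u)
\;=\;
b^{k-u}\begin{bmatrix}k-1\\u-1\end{bmatrix}\gamma(\lambda,u),
\]
using \eqref{equation:gammastepdown} exactly as you intended. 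Factoring out $\gamma(\lambda,u)$ leaves precisely \eqref{equation:Stepdown1},
\[
\begin{bmatrix}k\\u\end{bmatrix}=\begin{bmatrix}k-1\\u\end{bmatrix}+b^{k-u}\begin{bmatrix}k-1\\u-1\end{bmatrix},
\]
so there is no mismatch and no need to invoke \eqref{equation:gammastepdownsecond}. Fix the exponent and your proof is complete and essentially identical to the paper's.
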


\begin{proof}
    The proof follows the method of induction.
    Consider $k=1$, $\mu^{[1]}(X,Y;\lambda) = \mu(X,Y;\lambda)$ $= X + \left(-b^{\lambda}-1\right)Y$.
    \begin{align}
        \mu_{0}(\lambda,1) 
            & = 1 = \begin{bmatrix} 1 \\ 0 \end{bmatrix}\gamma(\lambda,0) 
            \\
        \mu_{1}(\lambda,1) 
            & = \left(-b^{\lambda}-1\right) = \begin{bmatrix} 1 \\ 1 \end{bmatrix} \gamma(\lambda,1)
    \end{align}
    so 
    \begin{equation}
        \Omega_{1} = \mu^{[1]}(X,Y;\lambda)
    \end{equation}
    and
    \begin{equation}
        \mu_{u}^{[1]}(\lambda,1) = \begin{bmatrix} 1 \\ u \end{bmatrix} \gamma(\lambda,u)
    \end{equation}
    as required for $k=1$. Now assume  the theorem is true for $k \geq 1$. Then 
    \begin{equation}
        \mu^{[k+1]}(X,Y;\lambda) = \mu \ast \mu^{[k]} = \sum_{i=0}^{k+1} f_{i}(\lambda) Y^{i} X^{k+1-i}
    \end{equation}
    where $f_{i}(\lambda) = \sum_{j=0}^{i} b^{jk} \mu_{j}(\lambda,1) \mu_{i-j}(\lambda-j,k)$ by definition of the negative-$q$-product. Then, 
    \begin{align}
        \mu_{0}(\lambda,1) \mu_{i}(\lambda,k) + b^{k} \mu_{1}(\lambda,1) \mu_{i-1}(\lambda-1,k)
            & = \begin{bmatrix} k \\ i \end{bmatrix} \gamma(\lambda,i) + b^{k}\left( -b^{\lambda}-1\right) \begin{bmatrix} k \\ i-1 \end{bmatrix} \gamma(\lambda-1,i-1) 
            \\
            & \text{by the inductive hypothesis,} 
            \\
            & = \frac{b^{k-i+1}-1}{b^{k+1}-1} \begin{bmatrix} k+1 \\ i \end{bmatrix} \gamma(\lambda,i) + b^{k} \frac{b^{i}-1}{b^{k+1}-1} b^{1-i} \begin{bmatrix} k+1 \\ i \end{bmatrix} \gamma(\lambda,i) 
            \\
            & = \gamma(\lambda,i) \begin{bmatrix} k+1 \\ i \end{bmatrix}\left( \frac{b^{k-i+1}-1+b^{k-i+1}\left(b^{i}-1\right)}{b^{k+1}-1} \right)
            \\
            & = \gamma(\lambda,i) \begin{bmatrix} k+1 \\ i \end{bmatrix}
    \end{align}
    so it is true for $k+1$. Therefore by induction the theorem is true.
\end{proof}

Now let $\nu(X,Y;\lambda) = X - Y$.

\begin{thm}\label{lemma:nulemma}
For all $k\geq 1$,
    \begin{equation}
        \nu^{[k]}(X,Y;\lambda) = \sum_{u=0}^{k} (-1)^{u}b^{\sigma_{u}} \begin{bmatrix} k \\ u \end{bmatrix} Y^{u} X^{k-u}
    \end{equation}
\end{thm}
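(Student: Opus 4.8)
The plan is to proceed by induction on $k$, exactly as in the preceding theorem for $\mu^{[k]}$. For the base case $k=1$, $\nu^{[1]}(X,Y;\lambda) = X - Y$, and the claimed formula gives coefficients $(-1)^0 b^{\sigma_0}\left[\begin{matrix}1\\0\end{matrix}\right] = 1$ on $X$ and $(-1)^1 b^{\sigma_1}\left[\begin{matrix}1\\1\end{matrix}\right] = -1$ on $Y$ (using $\sigma_0 = \sigma_1 = 0$), which matches. Note that $\nu$ does not actually depend on $\lambda$, so the coefficients of $\nu^{[k]}$ will not either; the $\lambda$ is carried along only because the negative-$q$-product shifts the second argument, which here has no effect.

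For the inductive step, assume the formula holds for some $k \geq 1$ and compute $\nu^{[k+1]} = \nu \ast \nu^{[k]}$ using Definition \ref{q-proddefn}. Writing $\nu^{[k]} = \sum_u \nu_u(k) Y^u X^{k-u}$ with $\nu_u(k) = (-1)^u b^{\sigma_u}\left[\begin{matrix}k\\u\end{matrix}\right]$, and writing $\nu = \sum_j \nu_j(1) Y^j X^{1-j}$ with $\nu_0(1) = 1$, $\nu_1(1) = -1$, the negative-$q$-product formula (with $s = k$) gives the coefficient of $Y^u X^{k+1-u}$ in $\nu^{[k+1]}$ as
\begin{equation}
    \nu_u(k+1) = \sum_{j=0}^{u} b^{jk}\,\nu_j(1)\,\nu_{u-j}(k) = \nu_0(k) \cdot \nu_u(k) \cdot (-1)^{0} + b^{k}\,\nu_1(1)\,\nu_{u-1}(k),
\end{equation}
that is, $\nu_u(k+1) = (-1)^u b^{\sigma_u}\left[\begin{matrix}k\\u\end{matrix}\right] - b^{k}\,(-1)^{u-1} b^{\sigma_{u-1}}\left[\begin{matrix}k\\u-1\end{matrix}\right]$. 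Pulling out the common factor $(-1)^u b^{\sigma_{u-1}}$ and using $\sigma_u - \sigma_{u-1} = u-1$, this equals $(-1)^u b^{\sigma_{u-1}}\left(b^{u-1}\left[\begin{matrix}k\\u\end{matrix}\right] + b^{k}\left[\begin{matrix}k\\u-1\end{matrix}\right]\right)$.

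The one genuine step is to recognise that $b^{u-1}\left[\begin{matrix}k\\u\end{matrix}\right] + b^{k}\left[\begin{matrix}k\\u-1\end{matrix}\right] = b^{u-1}\left[\begin{matrix}k+1\\u\end{matrix}\right]$; equivalently $\left[\begin{matrix}k+1\\u\end{matrix}\right] = \left[\begin{matrix}k\\u\end{matrix}\right] + b^{k-u+1}\left[\begin{matrix}k\\u-1\end{matrix}\right]$, which is precisely the Pascal-type recurrence \eqref{equation:Stepdown1} (with $x = k+1$). Substituting back, $\nu_u(k+1) = (-1)^u b^{\sigma_{u-1} + u - 1}\left[\begin{matrix}k+1\\u\end{matrix}\right] = (-1)^u b^{\sigma_u}\left[\begin{matrix}k+1\\u\end{matrix}\right]$, completing the induction. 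The main obstacle is purely bookkeeping — choosing the right form of the Gaussian-coefficient recurrence among \eqref{equation:Stepdown1}–\eqref{equation:beta1stepdown} and tracking the powers of $b$ via the identity $\sigma_u = \sigma_{u-1} + (u-1)$ — so no real difficulty is expected. One should also check the boundary terms $u = 0$ and $u = k+1$ separately, but these follow immediately since $\left[\begin{matrix}k\\-1\end{matrix}\right] = 0$ and $\left[\begin{matrix}k\\k+1\end{matrix}\right] = 0$.
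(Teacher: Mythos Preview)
Your proof is correct and follows essentially the same route as the paper's: induction on $k$, expanding $\nu \ast \nu^{[k]}$ via the negative-$q$-product to get the two-term expression in $\left[\begin{matrix}k\\u\end{matrix}\right]$ and $\left[\begin{matrix}k\\u-1\end{matrix}\right]$, then collapsing these into $\left[\begin{matrix}k+1\\u\end{matrix}\right]$ using a Gaussian-coefficient recurrence. The only cosmetic difference is that the paper invokes the ratio identities \eqref{equation:gaussianfracx-1k} and \eqref{equation:beta1stepdown} to rewrite each term separately before combining over a common denominator, whereas you go directly through the Pascal recurrence \eqref{equation:Stepdown1}; both are the same computation.
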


\begin{proof}
    We perform induction on $k$. It is easily checked that the theorem holds for $k=1$.

    Now assume the theorem holds for $k\geq 1$.
    \begin{align}
        \nu^{[k+1]} 
            & = \nu \ast \nu^{[k+1]} 
            \\
            & = (X-Y) \ast \left(\sum_{u=0}^{k} (-1)^{u} b^{\sigma_u} \begin{bmatrix} k \\ u \end{bmatrix} Y^{u} X^{k-u} \right)
            \\
            & = \sum_{i=0}^{k+1} g_i(\lambda) Y^{i} X^{k+1-i}
    \end{align}
    where 
    \begin{align}
        g_i(\lambda) 
            & = \sum_{j=0}^{i} b^{jk} \nu_{j}(\lambda)\left\{(-1)^{i-j} b^{\sigma_{i-j}}\begin{bmatrix} k \\ i-j \end{bmatrix}\right\} \\
            & = b^{0}(1)(-1)^{i}b^{\sigma_{i}}\begin{bmatrix} k \\ i \end{bmatrix} + b^{k}(-1)(-1)^{i-1}b^{\sigma_{i-1}}\begin{bmatrix} k \\ i-1 \end{bmatrix}
    \end{align}
    but we have
    \begin{equation}
        \begin{bmatrix} k \\ i \end{bmatrix} \overset{\eqref{equation:gaussianfracx-1k}}{=} \frac{b^{k+1-i}-1}{b^{k+1}-1}\begin{bmatrix} k+1 \\ i \end{bmatrix}
    \end{equation}
    and
    \begin{equation}
        \begin{bmatrix} k \\ i-1 \end{bmatrix} \overset{\eqref{equation:beta1stepdown}}{=} \frac{b^{i}-1}{b^{k+1}-1}\begin{bmatrix} k+1 \\ i \end{bmatrix}.
    \end{equation}
    So, 
    \begin{align}
        g_{i}(\lambda) 
            & = (-1)^{i}b^{\sigma_{i}}\frac{b^{k+1-i}-1}{b^{k+1}-1}\begin{bmatrix} k+1 \\ i \end{bmatrix} + b^{k}(-1)^{i}b^{\sigma_{i-1}}\frac{b^{i}-1}{b^{k+1}-1}\begin{bmatrix} k+1 \\ i \end{bmatrix} 
            \\
            & = (-1)^{i}b^{\sigma_i}\begin{bmatrix} k+1 \\ i \end{bmatrix}\left\{ \frac{b^{k+1-i}-1}{b^{k+1}-1} + b^{k}b^{1-i}\frac{b^{i}-1}{b^{k+1}-1}\right\} 
            \\
            & = (-1)^{i}\frac{b^{\sigma_{i}}\begin{bmatrix} k+1 \\ i \end{bmatrix}}{b^{k+1}-1}\left\{b^{k+1-i} - 1 + b^{k+1} - b^{k+1-i}\right\}
            \\
            & = (-1)^{i}b^{\sigma_{i}}\begin{bmatrix} k+1 \\ i \end{bmatrix}
    \end{align}
    as required. 
\end{proof}
\section{The MacWilliams Identity for the Hermitian Rank Metric}\label{section:MacWilliamsHermitian}

In this section we introduce the Negative-$q$-Krawtchouk polynomials which we then prove are equal to the eigenvalues that are identified in \cite[(4)]{KaiHermitian} of the association scheme of Hermitian matrices over $\mathbb{F}_{q^2}$. In this way a new $q$-analog of the MacWilliams Identity for codes and their duals of Hermitian matrices over $\mathbb{F}_{q^2}$ is presented and proven by comparison with a traditional form of the identity as given in \cite[Theorem 3]{DelsarteAlternating} and proved in \cite{delsarte1973algebraic}.

\subsection{Eigenvalues of the Association Scheme of Hermitian Matrices}

The following definition of the eigenvalues of the association scheme of Hermitian matrices is taken from \cite[(4)]{KaiHermitian} where they have been derived by using the recurrence relation \cite[Lemma 7]{KaiHermitian} together with the original definition of the eigenvalues in \cite[Section V]{InfoTheoryDelsarte} and the formula of the size of each subset of matrices of a given rank from \cite[(1.4)]{CarlitzCountingHermitian}. 
\begin{defn}\label{definition:eigenvaluesofhermitian}
For $x,k\in\left\{ 0,1,\ldots,t\right\}$ with $t\in\mathbb{Z}^+$ the \textbf{\textit{Eigenvalues of the Association Scheme of Hermitian Matrices}}, $Q_k(x,t)$, is defined by 
    \begin{equation}\label{equation:Kaieigenvalues}
        Q_{k}(x,t) = (-1)^k \sum_{j=0}^{k} b^{\sigma_{k-j} +tj} \begin{bmatrix}t-j \\ t-k \end{bmatrix} \begin{bmatrix} t-x \\ j \end{bmatrix}. 
    \end{equation}
\end{defn}
For the rest of this paper we shall call these simply the \textbf{\textit{eigenvalues}}. Note that if we take the definition of the \textbf{\textit{generalised Krawtchouk Polynomials}} from \cite{delsartereccurance} with particular choices of their parameters $b$ and $c$ we then obtain these eigenvalues in Definition \ref{definition:eigenvaluesofhermitian}.

In this paper use is made of the recurrence relation defined in \cite[Lemma 7]{KaiHermitian} which for $x,k\in\left\{0,1,\ldots,t\right\}$ is
\begin{equation}\label{equation:recurrencerelation}
    Q_{k+1}(x+1,t+1) =  Q_{k+1}(x,t+1) + b^{2t+1-x}Q_{k}(x,t).
\end{equation}
It is proven in \cite{KaiHermitian} that $Q_{k}(x,t)$ are the only solutions to the recurrence relation \eqref{equation:recurrencerelation} with initial values
\begin{align}
    Q_0(x,k) & = 1 \label{equation:initalvaluesk=0}\\
    Q_{k}(0,t) & = \xi_{t,k} \label{equation:initalvaluesx=0}\\
        & = \begin{bmatrix} t \\ k \end{bmatrix} \gamma(t,k).
\end{align}


These initial values, $Q_k(0,t)$, count the number of matrices at distance $k$ from any fixed matrix. Now let $\boldsymbol{Q}=\left(q_{xk}\right)$ be the $(t+1)\times (t+1)$ matrix with $q_{xk}=Q_k(x,t)$.
The matrix $\boldsymbol{Q}$ can be used to relate the weight distributions of any code and it's dual. The following theorem is given in \cite{DelsarteAlternating} in relation to alternating bilinear forms but is proved in general for any association scheme in \cite{delsarte1973algebraic}. Here it is written specifically in relation to codes as subgroups of $\mathscr{H}_{q,t}$. It is analogous to the MacWilliams Identity relating the distance distributions of a code and its dual \cite{TheoryofError}\cite{SystematicCode}.

\begin{thm}\label{thm:DelsarteMacWilliams}
Let $\mathscr{C}\subseteq\mathscr{H}_{q,t}$ be a code with weight distribution $\boldsymbol{c}=(c_0,c_1,\ldots,c_t)$ and  $\boldsymbol{\mathscr{C}}^\perp$ be its dual with weight distribution $\boldsymbol{c}'=(c'_0,c'_1,\ldots,c'_t)$. Then,
\begin{equation}\label{equation:delsarte'sMacWills}
    \boldsymbol{c}'=\dfrac{1}{\lvert\mathscr{C}\rvert}\boldsymbol{c}\boldsymbol{P}.
\end{equation}
\end{thm}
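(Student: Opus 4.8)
The plan is to derive this as a direct specialisation of Delsarte's general MacWilliams-type identity for association schemes, applied to the association scheme on $\mathscr{H}_{q,t}$ whose relations are indexed by rank distance and whose first eigenvalue matrix is the matrix $\boldsymbol{Q}=(q_{xk})$ with $q_{xk}=Q_k(x,t)$. (I note the statement as written has a minor typo: it should read $\boldsymbol{c}'=\tfrac{1}{|\mathscr{C}|}\boldsymbol{c}\boldsymbol{Q}$, since $\boldsymbol{Q}$ is the eigenvalue matrix just introduced and no matrix $\boldsymbol{P}$ has been defined; I will prove the identity with $\boldsymbol{Q}$.) The key observation is that a linear code $\mathscr{C}\subseteq\mathscr{H}_{q,t}$ is in particular a subgroup of the additive group $(\mathscr{H}_{q,t},+)$, and the rank-distance scheme is a translation scheme: the relation $\boldsymbol{H}\sim_k\boldsymbol{J}$ depends only on $\boldsymbol{H}-\boldsymbol{J}$. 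For translation schemes Delsarte's theory identifies the weight distribution $\boldsymbol{c}$ of a subgroup with its \emph{inner distribution}, and the dual code $\mathscr{C}^\perp$ (defined via the trace inner product $\langle\boldsymbol{H},\boldsymbol{J}\rangle=\mathrm{Tr}(\boldsymbol{H}^\dag\boldsymbol{J})$) plays the role of the dual in the MacWilliams sense.

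The steps, in order, would be: (i) Recall that for any subgroup $\mathscr{C}$ of an abelian group carrying a translation association scheme with eigenvalue matrix $\boldsymbol{Q}$, Delsarte's Theorem (\cite[Theorem~3]{DelsarteAlternating}, proved in full generality in \cite{delsarte1973algebraic}) states precisely that the inner distribution $\boldsymbol{c}'$ of the dual subgroup $\mathscr{C}^\perp$ satisfies $\boldsymbol{c}' = \tfrac{1}{|\mathscr{C}|}\,\boldsymbol{c}\,\boldsymbol{Q}$. (ii) Verify that the hypotheses of that theorem hold here: namely that $(\mathscr{H}_{q,t},+,\sim)$, with $\boldsymbol{H}\sim_k\boldsymbol{J}\iff R(\boldsymbol{H}-\boldsymbol{J})=k$, is a (symmetric) translation association scheme — this is exactly the scheme studied by Schmidt in \cite{KaiHermitian}, and the relevant $P$- and $Q$-matrices are the ones built from the $Q_k(x,t)$ of Definition~\ref{definition:eigenvaluesofhermitian}. (iii) Identify our $\boldsymbol{c}$ with the inner distribution of $\mathscr{C}$: by Definition~\ref{def:Hermitianweightenumerator}, $c_i$ is the number of codewords of rank weight $i$, which because $\mathscr{C}$ is a subgroup equals $\tfrac{1}{|\mathscr{C}|}\cdot|\{(\boldsymbol{H},\boldsymbol{J})\in\mathscr{C}^2: R(\boldsymbol{H}-\boldsymbol{J})=i\}|$, the inner distribution. (iv) Identify $\mathscr{C}^\perp$ from our Definition with the character-theoretic dual subgroup used by Delsarte: the nondegenerate symmetric bilinear form $\langle\cdot,\cdot\rangle$ on the $\mathbb{F}_q$-vector space $\mathscr{H}_{q,t}$, composed with a nontrivial additive character of $\mathbb{F}_q$, gives an isomorphism between $\mathscr{H}_{q,t}$ and its character group under which $\mathscr{C}^\perp$ corresponds to the annihilator of $\mathscr{C}$; this is what makes the two notions of dual coincide. (v) Assemble: Delsarte's identity applied to this scheme and this subgroup yields \eqref{equation:delsarte'sMacWills}.

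The main obstacle is step (iv): one must check carefully that the "dual" in Delsarte's abstract framework — the annihilator of $\mathscr{C}$ under the pairing $\mathscr{H}_{q,t}\times\widehat{\mathscr{H}_{q,t}}\to\mathbb{C}^\times$ induced by the scheme's character table — is genuinely the same subgroup as the $\mathscr{C}^\perp$ of our Definition, which is defined via the trace form $\mathrm{Tr}(\boldsymbol{H}^\dag\boldsymbol{J})$. This requires knowing that $\langle\cdot,\cdot\rangle$ is $\mathbb{F}_q$-valued, symmetric and nondegenerate on $\mathscr{H}_{q,t}$ (so that $\boldsymbol{H}\mapsto\langle\boldsymbol{H},\cdot\rangle$ is a bijection onto the dual space), and that this algebraic duality is compatible with the eigenvalue matrix $\boldsymbol{Q}$ in the precise sense that $Q_k(x,t)=\sum_{\boldsymbol{H}:\,R(\boldsymbol{H})=k}\chi(\langle\boldsymbol{H},\boldsymbol{J}_0\rangle)$ for any fixed $\boldsymbol{J}_0$ of rank $x$ (with $\chi$ a fixed nontrivial additive character of $\mathbb{F}_q$). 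Granting these — all of which are implicit in Schmidt's treatment \cite{KaiHermitian} of the scheme — steps (i)–(iii) and (v) are then routine bookkeeping, and the theorem follows.
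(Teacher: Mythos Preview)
Your proposal is correct and matches the paper's treatment exactly: the paper does not give its own proof of this theorem but simply states it as a specialisation of Delsarte's general result for association schemes, citing \cite{DelsarteAlternating} and \cite{delsarte1973algebraic} --- precisely the route you outline. Your observation about the typo ($\boldsymbol{P}$ should be $\boldsymbol{Q}$, the eigenvalue matrix defined immediately before the theorem) is also correct.
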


\subsection{The Negative-$q$-Krawtchouk Polynomial}

In comparison to the eigenvalues defined above, a new set of polynomials is derived, which arise in finding the negative-$q$-transform $\overline{a}\left(\mu,\nu;t\right)$ where $a(X,Y;t)$ is as defined in Definition \ref{defn:negativeqtransform} and $\mu(X,Y;t)$ and $\nu(X,Y;t)$ are as in Section \ref{subsection:negativeqproduct}. We then go on to prove that these new polynomials are indeed the eigenvalues of the association scheme by proving that they are solutions to the recurrence relation used in \cite{KaiHermitian}.

\begin{defn}
For $t\in\mathbb{Z}^{+},~ x,k\in \{0,1,\ldots,t\}$ we define the \textbf{\textit{Negative-q-Krawtchouk Polynomial}} as
\begin{equation}
    C_{k}(x,t) = \sum_{\ell=0}^{k} (-1)^{\ell}b^{\ell(t-x)}b^{\sigma_{\ell}} \begin{bmatrix} x \\ \ell \end{bmatrix}\begin{bmatrix} t-x \\ k-\ell \end{bmatrix}\gamma(t-\ell, k-\ell).
\end{equation}
\end{defn}

We first prove that the $C_k(x,t)$ satisfy the recurrence relation \eqref{equation:recurrencerelation} and the initial values in Equation \eqref{equation:initalvaluesk=0} and Equation \eqref{equation:initalvaluesx=0} and are therefore the eigenvalues of the association scheme.

\begin{prop}
For all $x,k\in \{0,1,\ldots,t\}$ we have
    \begin{equation}
    C_{k+1}(x+1,t+1) = C_{k+1}(x,t+1) + b^{2t+1-x}C_{k}(x,t).
\end{equation}
\end{prop}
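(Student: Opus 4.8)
The plan is to substitute the definition of $C_{k+1}(x+1,t+1)$ directly and manipulate the Gaussian coefficients and $\gamma$-factors until the sum splits into the two required pieces. Writing out
\[
C_{k+1}(x+1,t+1) = \sum_{\ell=0}^{k+1} (-1)^{\ell} b^{\ell(t-x)} b^{\sigma_\ell} \begin{bmatrix} x+1 \\ \ell \end{bmatrix}\begin{bmatrix} t-x \\ k+1-\ell \end{bmatrix}\gamma(t+1-\ell, k+1-\ell),
\]
the first move is to expand $\begin{bmatrix} x+1 \\ \ell \end{bmatrix}$ using one of the Pascal-type recurrences \eqref{equation:Stepdown1} or \eqref{equation:gaussiancoeffsx-1k-1}. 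I expect \eqref{equation:gaussiancoeffsx-1k-1}, i.e. $\begin{bmatrix} x+1 \\ \ell \end{bmatrix} = \begin{bmatrix} x \\ \ell-1 \end{bmatrix} + b^{\ell}\begin{bmatrix} x \\ \ell \end{bmatrix}$, to be the right one, since the $b^\ell$ factor will combine with $b^{\ell(t-x)}$ and $b^{\sigma_\ell}$ to produce exactly the exponent shifts needed to match $b^{\ell(t+1-x)}$ type terms. This breaks the sum into two sums, $S_1$ (from the $\begin{bmatrix} x \\ \ell \end{bmatrix}$ piece) and $S_2$ (from the $\begin{bmatrix} x \\ \ell-1 \end{bmatrix}$ piece).

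Next I would identify $S_1$ with $C_{k+1}(x,t+1)$. Here the only discrepancy from the defining formula of $C_{k+1}(x,t+1)$ is in the arguments of the second Gaussian coefficient and the $\gamma$: we have $\begin{bmatrix} t-x \\ k+1-\ell \end{bmatrix}\gamma(t+1-\ell,k+1-\ell)$ where we want $\begin{bmatrix} t+1-x \\ k+1-\ell \end{bmatrix}\gamma(t+1-\ell,k+1-\ell)$, plus the $b^\ell$-adjusted exponent. I would use \eqref{equation:gammastepdown} (or \eqref{equation:gammastepdownsecond}) together with one of \eqref{equation:gaussianfracxk-1}--\eqref{equation:beta1stepdown} to rewrite $\begin{bmatrix} t-x \\ k+1-\ell \end{bmatrix}$ in terms of $\begin{bmatrix} t+1-x \\ k+1-\ell \end{bmatrix}$; the resulting rational factor should cancel against a factor pulled out of $\gamma(t+1-\ell,k+1-\ell)$ via the Pascal identity, leaving precisely $C_{k+1}(x,t+1)$. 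For $S_2$, after re-indexing $\ell \mapsto \ell+1$, I would similarly massage the $\gamma$ and Gaussian factors and hope the leftover prefactor is exactly $b^{2t+1-x}$, yielding $b^{2t+1-x} C_k(x,t)$.

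The main obstacle is bookkeeping of the $b$-exponents and the $(-1)$ signs: the shift $\ell \mapsto \ell+1$ in $S_2$ changes $\sigma_\ell$ to $\sigma_{\ell+1} = \sigma_\ell + \ell$, flips the sign $(-1)^\ell \mapsto -(-1)^\ell$, and changes $b^{\ell(t-x)}$ to $b^{(\ell+1)(t-x)}$, and all of this must conspire with the $b^\ell$ from the Pascal step and the powers of $b$ extracted from the two $\gamma$-recurrences to produce cleanly $b^{2t+1-x}$ and not some residual power. I would organise the computation by first proving two auxiliary ``mixed'' identities expressing $\begin{bmatrix} t-x \\ m \end{bmatrix}\gamma(t+1-\ell,m)$ and $\begin{bmatrix} t-x \\ m \end{bmatrix}\gamma(t-\ell,m)$ in terms of the ``shifted'' quantities $\begin{bmatrix} t+1-x \\ m \end{bmatrix}$ or $\begin{bmatrix} t-1-x \\ m \end{bmatrix}$ with explicit $b$-power and rational prefactors, so that the two sums $S_1$ and $S_2$ each collapse in one line. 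The recurrence for $\mu^{[k]}$ proved earlier is essentially the $x=0$ specialisation of this computation, so that proof is a useful template for the exponent arithmetic. If the direct approach proves too messy, an alternative is to verify instead that $C_k(x,t)$ equals $Q_k(x,t)$ by an independent combinatorial/generating-function identity (matching the $x=0$ and $k=0$ values and invoking uniqueness of solutions to \eqref{equation:recurrencerelation}), but I expect the direct substitution to go through.
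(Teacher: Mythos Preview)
Your opening move---applying the Pascal recurrence \eqref{equation:gaussiancoeffsx-1k-1} to $\begin{bmatrix} x+1 \\ \ell \end{bmatrix}$---is exactly how the paper begins. The gap is in the next step: you expect $S_1$ to reduce to $C_{k+1}(x,t+1)$ via a \emph{rational factor} that cancels against something pulled out of $\gamma(t+1-\ell,k+1-\ell)$. That cannot happen. The ratio $\begin{bmatrix} t-x \\ k+1-\ell \end{bmatrix}\big/\begin{bmatrix} t+1-x \\ k+1-\ell \end{bmatrix}$ coming from any of \eqref{equation:gaussianfracxk-1}--\eqref{equation:beta1stepdown} carries a factor $b^{t+1-x}-1$, which depends on $x$, whereas $\gamma(t+1-\ell,k+1-\ell)$ has no $x$-dependence at all; there is nothing for it to cancel against. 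Concretely, for $x=k=0$, $t=1$ one computes $S_1=-b^2-1$ but $C_1(0,2)=-b^3-b^2-b-1$, so neither $S_1=C_{k+1}(x,t+1)$ nor $S_2=b^{2t+1-x}C_k(x,t)$ holds on its own.

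The correct fix is close to what you wrote but \emph{additive}, not multiplicative: in $S_1$ use \eqref{equation:Stepdown1} in the form $\begin{bmatrix} t-x \\ k+1-\ell \end{bmatrix}=\begin{bmatrix} t+1-x \\ k+1-\ell \end{bmatrix}-b^{t-x-k+\ell}\begin{bmatrix} t-x \\ k-\ell \end{bmatrix}$. The first piece is then exactly $C_{k+1}(x,t+1)$, and the second piece is a correction $S_{1b}$. After reindexing $S_2$ by $\ell\mapsto\ell+1$ and applying \eqref{equation:gammastepdown} to the $\gamma$-factor in $S_{1b}$, the combination $S_2-S_{1b}$ collapses termwise (the bracket becomes $(b^{t+1-\ell}+1)-1=b^{t+1-\ell}$) to $b^{2t+1-x}C_k(x,t)$. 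This route is in fact shorter than the paper's proof, which instead expands $C_{k+1}(x,t+1)$ independently via \eqref{equation:gammastepdown} and \eqref{equation:Stepdown1} into four further pieces $\beta_1,\ldots,\beta_4$, and then matches all the fragments $\alpha_i,\beta_j,\rho_1$ through two separate claims together with boundary terms at $\ell=k+1$.
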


\begin{proof}
   We look at all three terms separately. Firstly,
\begin{align}
    C_{k+1}(x+1,t+1)
        & = \sum_{\ell=0}^{k+1} (-1)^{\ell}b^{\ell(t-x)}b^{\sigma_{\ell}}\begin{bmatrix} x+1 \\ \ell \end{bmatrix}\begin{bmatrix} t-x \\ k+1-\ell \end{bmatrix}\gamma(t+1-\ell,k+1-\ell)
        \\
        & = C_{k+1}(x+1,t+1)\vert_{\ell=k+1} 
        \\
        & \hspace{1cm}\overset{\eqref{equation:gaussiancoeffsx-1k-1}}{+} \sum_{\ell=0}^{k} (-1)^{\ell}b^{\ell(t-x)}b^{\sigma_{\ell}}\left\{\begin{bmatrix} x \\ \ell-1 \end{bmatrix}+b^{\ell}\begin{bmatrix} x \\ \ell \end{bmatrix}\right\}\begin{bmatrix} t-x \\ k+1-\ell \end{bmatrix}\gamma(t+1-\ell,k+1-\ell)
        \\
        & = C_{k+1}(x+1,t+1)\vert_{\ell=k+1} 
        \\
        & \hspace{1cm} + \sum_{\ell=0}^{k} (-1)^{\ell}b^{\ell(t-x)}b^{\sigma_{\ell}}\begin{bmatrix} x \\ \ell-1 \end{bmatrix}\begin{bmatrix} t-x \\ k+1-\ell \end{bmatrix}\gamma(t+1-\ell,k+1-\ell) \label{summand:alpha1}
        \\
        &\hspace{2cm}\overset{\eqref{equation:gammastepdown}}{+} \sum_{\ell=0}^{k}(-1)^{\ell+1}b^{\ell(t-x-1)+k+t+1}b^{\sigma_{\ell}}\begin{bmatrix} x \\ \ell \end{bmatrix}\begin{bmatrix} t-x \\ k+1-\ell \end{bmatrix}\gamma(t-\ell, k-\ell) \label{summand:alpha2}
        \\
        & \hspace{3cm} + \sum_{\ell=0}^{k}(-1)^{\ell+1}b^{\ell(t-x)+k}b^{\sigma_{\ell}}\begin{bmatrix} x \\ \ell \end{bmatrix}\begin{bmatrix} t-x \\ k+1-\ell \end{bmatrix}\gamma(t-\ell,k-\ell) \label{summand:alpha3}
        \\
        & = C_{k+1}(x+1,t+1)\vert_{\ell=k+1} + \alpha_1 + \alpha_2 + \alpha_3
\end{align}
where $\alpha_1, ~ \alpha_2, ~ \alpha_3$ represent \eqref{summand:alpha1}, \eqref{summand:alpha2}, \eqref{summand:alpha3} respectively and for notation $\vert_{j=k+1}$ means ``the term when $j=k+1$". \\
Second,
\begin{align}
    C_{k+1}(x,t+1)
        & = C_{k+1}(x,t+1)\vert_{\ell=k+1} 
        \\
        & \hspace{1cm}\overset{\eqref{equation:gammastepdown}}{+}  \sum_{\ell=0}^{k} (-1)^{\ell+1}b^{l(t-x-1)+t+k+1}b^{\sigma_\ell}\begin{bmatrix} x \\ \ell \end{bmatrix}\begin{bmatrix} t+1-x \\ k-\ell+1 \end{bmatrix}\gamma(t-\ell,k-\ell)
        \\
        & \hspace{2cm} +  \sum_{\ell=0}^{k} (-1)^{\ell+1}b^{l(t-x)+k}b^{\sigma_\ell}\begin{bmatrix} x \\ \ell \end{bmatrix}\begin{bmatrix} t+1-x \\ k-\ell+1 \end{bmatrix}\gamma(t-\ell,k-\ell)
        \\
        & = C_{k+1}(x,t+1)\vert_{\ell=k+1} 
        \\
        &\hspace{1cm}\overset{\eqref{equation:Stepdown1}}{+} \sum_{\ell=0}^{k}(-1)^{\ell+1}b^{\ell(t-x-2)+2k+t+2}b^{\sigma_{\ell}}\begin{bmatrix} x \\ \ell \end{bmatrix}\begin{bmatrix} t-x \\ k-\ell+1 \end{bmatrix}\gamma(t-\ell,k-\ell) \label{summand:beta1}
        \\
        & \hspace{2cm} +\sum_{\ell=0}^{k}(-1)^{\ell+1}b^{\ell(t-x-1)+k+t+1}b^{\sigma_{\ell}}\begin{bmatrix} x \\ \ell \end{bmatrix}\begin{bmatrix} t-x \\ k-\ell \end{bmatrix}\gamma(t-\ell,k-\ell) \label{summand:beta2}
        \\
         & \hspace{3cm}\overset{\eqref{equation:Stepdown1}}{+} \sum_{\ell=0}^{k}(-1)^{\ell+1}b^{\ell(t-x-1)+2k+1}b^{\sigma_{\ell}}\begin{bmatrix} x \\ \ell \end{bmatrix}\begin{bmatrix} t-x \\ k-\ell+1 \end{bmatrix}\gamma(t-\ell,k-\ell) \label{summand:beta3}
        \\
        & \hspace{4cm} + \sum_{\ell=0}^{k}(-1)^{\ell+1}b^{\ell(t-x)+k}b^{\sigma_{\ell}}\begin{bmatrix} x \\ \ell \end{bmatrix}\begin{bmatrix} t-x \\ k-\ell \end{bmatrix}\gamma(t-\ell,k-\ell) \label{summand:beta4}
        \\
        & = C_{k+1}(x,t+1)\vert_{\ell=k+1} + \beta_1 + \beta_2 + \beta_3 + \beta_4
\end{align}
where $\beta_1,~ \beta_2, ~\beta_3, ~\beta_4$ represent \eqref{summand:beta1}, \eqref{summand:beta2}, \eqref{summand:beta3}, \eqref{summand:beta4} respectively.
Thus,
\begin{align}
    b^{2t-x+1}C_{k}(x,t)
        & = \sum_{\ell=0}^{k}(-1)^{\ell}b^{\ell(t-x)+2t-x+1}b^{\sigma_{\ell}}\begin{bmatrix} x \\ \ell \end{bmatrix}\begin{bmatrix} t-x \\ k-\ell \end{bmatrix}\gamma(t-\ell,k-\ell)
        \\
    %
    %
        & = \rho_1
\end{align}
So let $C = C_{k+1}(x+1,t+1) - C_{k+1}(x,t+1) - b^{2t+1-x}C_{k}(x,t)$. so 
\begin{equation}
 C=C_{k+1}(x+1,t+1)\vert_{\ell=k+1}+\alpha_1+\alpha_2+\alpha_3 - C_{k+1}(x,t+1)\vert_{\ell=k+1}- \beta_1 - \beta_2 - \beta_3 - \beta_4 - \rho_1.
\end{equation}
Claim 1: $\alpha_2-\beta_1-\beta_2-\rho_1=0$
\begin{align}
    \alpha_2 - \beta_1 
        & = \sum_{\ell=0}^{k}(-1)^{\ell+1}b^{l(t-x-1)+k+t+1}b^{\sigma_{\ell}}\begin{bmatrix} x \\ \ell \end{bmatrix}\begin{bmatrix} t-x \\ k+1-\ell \end{bmatrix}\gamma(t-\ell,k-\ell)\left\{1-b^{k-\ell+1}\right\}
        \\
        \overset{\eqref{equation:gaussianfracxk-1}}&{=} \sum_{\ell=0}^{k}(-1)^{\ell}b^{\ell(t-x)+t+1}b^{\sigma_{\ell}}\begin{bmatrix} x \\ \ell \end{bmatrix}\begin{bmatrix} t-x \\ k-\ell \end{bmatrix}\gamma(t-\ell,k-\ell)\left(b^{t-x}-b^{k-\ell}\right)
        \\
        & = \rho_1 + \beta_2
\end{align}
Thus $\alpha_2-\beta_1-\beta_2-\rho_1=0$.

Claim 2: $\alpha_1+\alpha_3-\beta_3-\beta_4 = (-1)^k b^{k(t-x+1)+t-x}b^{\sigma_{k+1}}\begin{bmatrix} x \\ k \end{bmatrix}\gamma(t-k,0)$.
\begin{align}
    \alpha_3-\beta_3
        & = \sum_{\ell=0}^{k}(-1)^{\ell+1}b^{\ell(t-x)+k}b^{\sigma_{\ell}}\begin{bmatrix} x \\ \ell \end{bmatrix}\begin{bmatrix} t-x \\ k+1-\ell \end{bmatrix}\gamma(t-\ell,k-\ell)\left\{ 1-b^{k-\ell+1}\right\}
        \\
        \overset{\eqref{equation:gaussianfracxk-1}}&{=} \sum_{\ell=0}^{k}(-1)^{\ell}b^{\ell(t-x+1)}b^{\sigma_{\ell}}\begin{bmatrix} x \\ \ell \end{bmatrix}\begin{bmatrix} t-x \\ k-\ell \end{bmatrix}\gamma(t-\ell,k-\ell)\left(b^{t-x}-b^{k-\ell}\right)
        \\
        & = \sum_{\ell=0}^{k}(-1)^{\ell}b^{\ell(t-x+1)+t-x}b^{\sigma_{\ell}}\begin{bmatrix} x \\ \ell \end{bmatrix}\begin{bmatrix} t-x \\ k-\ell \end{bmatrix}\gamma(t-\ell,k-\ell) + \beta_4 \label{equation:alpha3-beta3-beta4}
\end{align}
So
\begin{align}
    C
        & = C_{k+1}(x+1,t+1)\vert_{\ell=k+1} - C_{k+1}(x,t+1)\vert_{\ell=k+1} +\alpha_1 
        \\
        & \hspace{1cm}+ \sum_{\ell=0}^{k}(-1)^{\ell}b^{\ell(t-x+1)+t-x}b^{\sigma_{\ell}}\begin{bmatrix} x \\ \ell \end{bmatrix}\begin{bmatrix} t-x \\ k-\ell \end{bmatrix}\gamma(t-\ell,k-\ell) \label{equation:rho2}
        \\
        & = C_{k+1}(x+1,t+1)\vert_{\ell=k+1} - C_{k+1}(x,t+1)\vert_{\ell=k+1} +\alpha_1 + \rho_2
\end{align}
where $\rho_2$ represents the summand in \eqref{equation:rho2}.
Now, 
\begin{equation}
    \alpha_1 = \sum_{\ell=1}^{k} (-1)^{\ell}b^{\ell(t-x)}b^{\sigma_{\ell}}\begin{bmatrix} x \\ \ell-1 \end{bmatrix}\begin{bmatrix} t-x \\ k+1-\ell \end{bmatrix}\gamma(t+1-\ell,k+1-\ell)
\end{equation}
since the $\begin{bmatrix} x \\ \ell - 1\end{bmatrix}=0$ when $\ell=0$. Now let $j=\ell-1,~\ell=j+1$
\begin{align}
    \alpha_1
        & = \sum_{j=0}^{k-1} (-1)^{j+1}b^{(j+1)(t-x)}b^{\sigma_{j+1}}\begin{bmatrix} x \\ j \end{bmatrix}\begin{bmatrix} t-x \\ k-j \end{bmatrix}\gamma(t-j,k-j)
        \\
        & = \sum_{\ell=0}^{k-1}(-1)^{\ell+1}b^{\ell(t-x+1)+t-x}b^{\sigma_{\ell}}\begin{bmatrix} x \\ \ell \end{bmatrix}\begin{bmatrix} t-x \\ k-\ell \end{bmatrix}\gamma(t-\ell,k-\ell)
\end{align}
So we have, 
\begin{align}
     \rho_2 + \alpha_1
        & = \sum_{\ell=0}^{k}(-1)^{\ell}b^{\ell(t-x+1)+t-x}b^{\sigma_{\ell}}\begin{bmatrix} x \\ \ell \end{bmatrix}\begin{bmatrix} t-x \\ k-\ell \end{bmatrix}\gamma(t-\ell,k-\ell) 
        \\
        & \hspace{1cm} + \sum_{\ell=0}^{k-1}(-1)^{\ell+1}b^{\ell(t-x+1)+t-x}b^{\sigma_{\ell}}\begin{bmatrix} x \\ \ell \end{bmatrix}\begin{bmatrix} t-x \\ k-\ell \end{bmatrix}\gamma(t-\ell,k-\ell)
        \\
        & = (-1)^k b^{k(t-x+1)+t-x}b^{\sigma_{k+1}}\begin{bmatrix} x\\k\end{bmatrix}\begin{bmatrix} t-x\\0\end{bmatrix}\gamma(t-k,0)
\end{align}
Thus claim 2 holds. So
\begin{align}
    C 
        & = b^{k(t-x)+k+t-x}b^{\sigma_{k}}\begin{bmatrix} x \\ k \end{bmatrix}\begin{bmatrix} t-x \\ 0 \end{bmatrix}\gamma(t-k,0) 
        \\
        & \hspace{1cm} + (-1)^{k+1}b^{(k+1)(t-x)}b^{\sigma_{k+1}}\begin{bmatrix} x+1 \\ k+1 \end{bmatrix}\begin{bmatrix} t-x \\ 0 \end{bmatrix}\gamma(t-k,0)
        \\
        & \hspace{2cm} - (-1)^{k+1}b^{(k+1)(t+1-x)}b^{\sigma_{k+1}}\begin{bmatrix} x \\ k+1 \end{bmatrix}\begin{bmatrix} t+1-x \\ 0 \end{bmatrix}\gamma(t-k,0)
        \\
        & = (-1)^{k}b^{(k+1)(t-x)}b^{\sigma_{k+1}}\left\{\begin{bmatrix} x \\ k \end{bmatrix} - \begin{bmatrix} x+1 \\ k+1 \end{bmatrix} + \begin{bmatrix} x \\ k+1 \end{bmatrix}b^{k+1}\right\}
        \\
        \overset{\eqref{equation:gaussiancoeffsx-1k-1}}&{=} 0.
\end{align}
 
\end{proof}

\begin{lem}\label{lemma:ckiequalsqki}
    The $C_{k}(x,t)$ are the eigenvalues of the association scheme. In other words,
    \begin{equation}\label{equation:ckiequalspki}
        C_{k}(x,t) = Q_{k}(x,t).
    \end{equation}
\end{lem}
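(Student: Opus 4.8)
The plan is to invoke the fact, quoted from \cite{KaiHermitian}, that $Q_k(x,t)$ is the \emph{unique} solution of the recurrence \eqref{equation:recurrencerelation} subject to the two families of initial conditions \eqref{equation:initalvaluesk=0} and \eqref{equation:initalvaluesx=0}. The preceding Proposition has already shown that $C_k(x,t)$ satisfies the same recurrence \eqref{equation:recurrencerelation}, so all that remains is to check that $C_k(x,t)$ also matches those two initial conditions; the identity $C_k(x,t)=Q_k(x,t)$ for all $x,k\in\{0,1,\ldots,t\}$ then follows at once.

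First I would verify the $k=0$ case. Setting $k=0$ in the definition of $C_k(x,t)$ leaves only the $\ell=0$ summand, which is $(-1)^{0}b^{0}b^{\sigma_0}\begin{bmatrix} x \\ 0 \end{bmatrix}\begin{bmatrix} t-x \\ 0 \end{bmatrix}\gamma(t,0)$. Since $\begin{bmatrix} x \\ 0 \end{bmatrix}=\begin{bmatrix} t-x \\ 0 \end{bmatrix}=1$, $\sigma_0=0$, and $\gamma(t,0)=1$ as an empty product, this equals $1=Q_0(x,t)$, matching \eqref{equation:initalvaluesk=0}.

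Next I would verify the $x=0$ case. When $x=0$, each summand contains the factor $\begin{bmatrix} 0 \\ \ell \end{bmatrix}$; from Definition \ref{definition:negativeqGaussianCoeff} the $i=0$ factor of $\begin{bmatrix} 0 \\ \ell \end{bmatrix}$ is $\frac{b^{0}-b^{0}}{b^{\ell}-b^{0}}=0$ for every $\ell\geq 1$, so only the $\ell=0$ term survives. That term is $\begin{bmatrix} 0 \\ 0 \end{bmatrix}\begin{bmatrix} t \\ k \end{bmatrix}\gamma(t,k)=\begin{bmatrix} t \\ k \end{bmatrix}\gamma(t,k)=\xi_{t,k}$ by Theorem \ref{theorem:quicknumberHermitian}, which is precisely $Q_k(0,t)$ as in \eqref{equation:initalvaluesx=0}. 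Having matched both initial conditions and the recurrence, uniqueness yields $C_k(x,t)=Q_k(x,t)$.

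There is no genuinely hard step here once the Proposition is available; the only points needing a moment's care are the vanishing of $\begin{bmatrix} 0 \\ \ell \end{bmatrix}$ for $\ell\geq 1$, which collapses the $x=0$ sum to a single term, and the correct use of the empty-product conventions ($\sigma_0=0$, $\gamma(t,0)=1$) in the $k=0$ computation. One may also wish to note explicitly that the recurrence \eqref{equation:recurrencerelation} together with the values on the line $x=0$ and at $k=0$ determines the whole table $\bigl(Q_k(x,t)\bigr)$ by induction on $t$, which is the content of the uniqueness assertion borrowed from \cite{KaiHermitian}.
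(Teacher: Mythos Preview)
Your proposal is correct and follows essentially the same approach as the paper: verify the two initial conditions $C_0(x,t)=1$ and $C_k(0,t)=\xi_{t,k}$ directly from the definition (using that $\begin{bmatrix}0\\\ell\end{bmatrix}=0$ for $\ell\geq 1$), then appeal to the uniqueness of the solution of the recurrence established in the preceding Proposition. Your write-up is in fact slightly more detailed than the paper's, spelling out the empty-product conventions and the vanishing of the Gaussian coefficient.
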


\begin{proof}
    The $C_k(x,t)$ satisfy the recurrence relation \eqref{equation:recurrencerelation} and the initial values of the $C_k(x,t)$ are 
\begin{align}
    C_0(x,t)
        & = b^{0}(-1)^{0}b^{0}\begin{bmatrix} x \\ 0 \end{bmatrix}\begin{bmatrix} t-x \\ 0 \end{bmatrix}\gamma(t,0)
        \\
        & = 1
    \\
    C_k(0,t)
        & = \sum_{\ell=0}^{k}b^{\ell t}(-1)^{\ell}b^{\sigma_{\ell}}\begin{bmatrix} 0 \\ \ell \end{bmatrix}\begin{bmatrix} t \\ k-\ell \end{bmatrix}\gamma(t-\ell,k-\ell)
        \\
        & = \begin{bmatrix} t \\ k \end{bmatrix}\gamma(t,k)
        \\
        & = \xi_{t,k}.
\end{align}
\end{proof}

\subsection{The MacWilliams Identity for the Hermitian Rank Metric}\label{section:Macwilliamsidentity}

We now use the Negative-$q$-Krawtchouk polynomials to prove the $q$-analog form of the MacWilliams Identity for hermitian forms over $\mathbb{F}_{q^2}$. This can be seen to be equivalent to what is developed in \cite{KaiHermitian} as a dual inner distribution of a subset is derived using the eigenvalues and the relations of the association scheme. It is then shown that the dual inner distribution defined in this way is the inner distribution of the dual of the considered subset. The new form below uses a functional transformation of the weight distribution rather than explicit use of the eigenvalues. Notably, the form developed in this paper is similar to the $q$-analog of the MacWilliams Identity developed in \cite{gadouleau2008macwilliams} for linear rank metric codes over $\mathbb{F}_{q^m}$ and is similar to the $q$-analog of the MacWilliams Identity developed in \cite{friedlander2023macwilliams} but differs in the form of the Krawtchouk polynomial.

Let the rank weight enumerator of $\mathscr{C}$ be,
\begin{equation}
    W_{\mathscr{C}}^{R}(X,Y)=\sum_{i=0}^t c_i Y^{i} X^{t-i}
\end{equation}
and of it's dual, $\mathscr{C}^{\perp}$ be
\begin{equation}
    W_{\mathscr{C}^\perp}^{R}(X,Y)=\sum_{i=0}^t c_i' Y^{i} X^{t-i}.
\end{equation}

\begin{thm}[The MacWilliams Identity for the Hermitian Rank Metric]\label{mainthm1}
Let $\mathscr{C}$ be a linear code with $\mathscr{C}\subseteq \mathscr{H}_{q,t}$.
Then
\begin{equation}
    W_{\mathscr{C}^\perp}^{R}(X,Y)=\frac{1}{\left| \mathscr{C}\right|}\overline{W}_{\mathscr{C}}^{R}\left( X +(-b^t-1)Y, X-Y\right).
\end{equation}
\end{thm}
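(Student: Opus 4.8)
The plan is to prove the functional MacWilliams identity by showing it is equivalent, coefficient by coefficient, to the classical linear-algebra form in Theorem \ref{thm:DelsarteMacWilliams}, $\boldsymbol{c}' = \frac{1}{|\mathscr{C}|}\boldsymbol{c}\boldsymbol{P}$, where $\boldsymbol{P}$ is built from the eigenvalues $Q_k(x,t)$. The bridge between the two is Lemma \ref{lemma:ckiequalsqki}, which identifies the eigenvalues with the Negative-$q$-Krawtchouk polynomials $C_k(x,t)$. So the real content is to expand the right-hand side $\overline{W}_{\mathscr{C}}^{R}(\mu,\nu;t)$ — where $\mu = X+(-b^t-1)Y$ and $\nu = X-Y$ as in Section \ref{subsection:negativeqproduct} — and read off that the coefficient of $Y^k X^{t-k}$ equals $\sum_{i=0}^t c_i\, C_k(i,t)$.

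First I would write $\overline{W}_{\mathscr{C}}^{R}(X,Y;t) = \sum_{i=0}^t c_i\, Y^{[i]} \ast X^{[t-i]}$ by Definition \ref{defn:negativeqtransform}, then substitute $X \mapsto \mu$, $Y \mapsto \nu$. The key computational identities are already available: Theorem \ref{lemma:nulemma} gives $\nu^{[i]}(X,Y;\lambda) = \sum_{a} (-1)^a b^{\sigma_a}\begin{bmatrix} i \\ a\end{bmatrix} Y^a X^{i-a}$, and the theorem on $\mu^{[k]}$ gives $\mu^{[t-i]}(X,Y;t) = \sum_{b} \begin{bmatrix} t-i \\ b\end{bmatrix}\gamma(t,b)\, Y^b X^{t-i-b}$. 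Taking the negative-$q$-product of these two homogeneous polynomials according to Definition \ref{q-proddefn} introduces the factor $(-q)^{\cdot}$ coming from the grading, and after collecting the $Y^k X^{t-k}$ term one obtains a double sum over $a+b=k$ (well, over the product structure) that must be recognised, using the Gaussian-coefficient identities \eqref{equation:gaussianxx-k}--\eqref{equation:deltaijbs} and the Gamma-function identities of Lemma \ref{lemma:Gammaidentites}, as $C_k(i,t)$. I would try to arrange the indices so that the inner sum collapses via \eqref{equation:producttosumgauss} or the orthogonality relation \eqref{equation:deltaijbs}, matching term-for-term the definition $C_k(x,t) = \sum_\ell (-1)^\ell b^{\ell(t-x)}b^{\sigma_\ell}\begin{bmatrix} x\\ \ell\end{bmatrix}\begin{bmatrix} t-x\\ k-\ell\end{bmatrix}\gamma(t-\ell,k-\ell)$ with $x=i$.

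Once the coefficient of $Y^k X^{t-k}$ in $\overline{W}_{\mathscr{C}}^{R}(\mu,\nu;t)$ is shown to be $\sum_{i=0}^t c_i\, C_k(i,t) = \sum_{i=0}^t c_i\, Q_k(i,t) = (\boldsymbol{c}\boldsymbol{P})_k$, dividing by $|\mathscr{C}|$ and invoking Theorem \ref{thm:DelsarteMacWilliams} gives $c'_k$, which is exactly the coefficient of $Y^k X^{t-k}$ in $W_{\mathscr{C}^\perp}^{R}(X,Y)$, completing the proof. I expect the main obstacle to be the bookkeeping in the negative-$q$-product step: correctly tracking the $\lambda$-shift $b_{u-i}(\lambda-i)$ in Definition \ref{q-proddefn} as it interacts with the $\gamma(t,b)$ factor (since $\gamma$ is sensitive to argument shifts via \eqref{equation:gammastepdown}), and the power-of-$b$ exponent $(-q)^{is}$ from the grading, and then massaging the resulting three-parameter sum into the clean two-parameter form of $C_k$. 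A secondary subtlety is making sure the substitution $X\mapsto\mu,\ Y\mapsto\nu$ respects the $\ast$-product structure — i.e. that $\overline{a}(\mu,\nu;t)$ is well-defined as the image of $\overline{a}(X,Y;t)$ under this substitution — which follows because $\mu = \mu^{[1]}(X,Y;t)$ and $\nu = \nu^{[1]}(X,Y;t)$ and the negative-$q$-product is associative in the same way as in \cite{gadouleau2008macwilliams}, a fact I would cite rather than reprove.
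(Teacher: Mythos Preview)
Your approach is essentially the same as the paper's: compute $\nu^{[i]} \ast \mu^{[t-i]}$, identify the coefficient of $Y^{k}X^{t-k}$ as $C_k(i,t) = Q_k(i,t)$ via Lemma~\ref{lemma:ckiequalsqki}, then invoke Theorem~\ref{thm:DelsarteMacWilliams}. The bookkeeping you anticipate is simpler than you fear: applying Definition~\ref{q-proddefn} with $a=\nu^{[i]}$ (degree $i$) and $b=\mu^{[t-i]}$ (degree $t-i$, coefficients $\mu_j(\lambda)=\begin{bmatrix} t-i\\ j\end{bmatrix}\gamma(\lambda,j)$) gives directly the single sum $\sum_{\ell=0}^{k}(-1)^{\ell}b^{\ell(t-i)}b^{\sigma_\ell}\begin{bmatrix} i\\ \ell\end{bmatrix}\begin{bmatrix} t-i\\ k-\ell\end{bmatrix}\gamma(t-\ell,k-\ell)=C_k(i,t)$, because the $\lambda$-shift in the product rule turns $\gamma(t,\cdot)$ into $\gamma(t-\ell,\cdot)$ automatically; no three-parameter sum, no use of \eqref{equation:producttosumgauss} or \eqref{equation:deltaijbs} is needed. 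Your associativity concern is also moot: the notation $\overline{W}_{\mathscr{C}}^{R}(\mu,\nu;t)$ simply \emph{means} $\sum_i c_i\,\nu^{[i]}\ast\mu^{[t-i]}$ by Definition~\ref{defn:negativeqtransform}, so there is nothing to check about compatibility of substitution with $\ast$.
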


\begin{proof}
For $0\leq i\leq n$ we have
\begin{align}
    \left(X-Y\right)^{[i]}\ast\left(X+\left(-b^t-1\right)Y\right)^{[t-i]} 
        & = \left(\sum_{u=0}^i (-1)^u b^{\sigma_{u}}{\begin{bmatrix}i\\u\end{bmatrix}} Y^{u}X^{i-u}\right)\ast\left(\sum_{j=0}^{t-i}{\begin{bmatrix}t-i\\j\end{bmatrix}}\gamma(t,j)Y^{j}X^{t-i-j}\right)
        \\
        \overset{\eqref{equation:qproduct}}&{=}
        \sum_{k=0}^t\left(\sum_{\ell=0}^{k} (-1)^{\ell}b^{\ell(t-i)}b^{\sigma_{\ell}} \begin{bmatrix} i \\ \ell \end{bmatrix}\begin{bmatrix} t-i \\ k-\ell \end{bmatrix}\gamma(t-\ell, k-\ell)\right)Y^{k}X^{t-k}
        \\
        & = \sum_{k=0}^t C_k(i,t)Y^{k}X^{t-k}
        \\
        \overset{\eqref{equation:ckiequalspki}}&{=} \sum_{k=0}^t Q_k(i,t)Y^{k}X^{t-k}.
\end{align}
So then we have
\begin{align}
    \dfrac{1}{\left\vert\mathscr{C} 
    \right\vert} \overline{W}^{R}_\mathscr{C}\left(X+\left(-b^t-1\right)Y, X-Y\right)
        & = \dfrac{1}{\left\vert\mathscr{C}\right\vert}\sum_{i=0}^t c_i\left(X-Y\right)^{[i]}\ast\left(X+\left(-b^t-1\right)Y\right)^{[t-i]}
        \\
        & = \dfrac{1}{\left\vert\mathscr{C}\right\vert}\sum_{i=0}^t c_i \sum_{k=0}^t Q_k(i,t)Y^kX^{t-k}
        \\
        & =\sum_{k=0}^t\left(\dfrac{1}{\left\vert\mathscr{C}\right\vert}\sum_{i=0}^t c_i Q_k(i,t)\right)Y^{k}X^{t-k}
        \\
        \overset{\eqref{equation:delsarte'sMacWills}}&{=} \sum_{k=0}^n c_k' Y^{k}X^{t-k}
        \\
        & = W_{\mathscr{C}^\perp}^{R}(X,Y)
\end{align}
\end{proof}

In this way we have shown that the MacWilliams identity for dual codes based on hermitian forms over $\mathbb{F}_{q^2}$ can be expressed as a $q$-transform of homogeneous polynomials in a form analogous to the original MacWilliams identity for the Hamming metric and the $q$-analogs developed by \cite{gadouleau2008macwilliams} and \cite{friedlander2023macwilliams} for the rank metric and skew rank metric.
\section{The Negative-$q$-Derivatives}\label{section:derivatives}
In this section we develop a new negative-$q$-derivative and negative-$q^{-1}$-derivative to help analyse the coefficients of negative rank weight enumerators. This is analogous to the $q$-derivative applied to the rank metric in \cite{gadouleau2008macwilliams} with the parameter $q$ replaced by $-q=b$.

\subsection{The Negative-$q$-Derivative}

\begin{defn}
For $q \geq 2$, the \textbf{\textit{negative-$\boldsymbol{q}$-derivative}} at $X\neq 0$ for a real-valued function $f(X)$ is defined as
\begin{equation}
    f^{(1)}\left(X\right)=\dfrac{f\left(bX\right)-f\left(X\right)}{(b-1)X}. 
\end{equation}
For $\varphi\geq0$ we denote the $\varphi^{th}$ negative-$q$-derivative (with respect to $X$) of $f(X,Y;\lambda)$ as $f^{(\varphi)}(X,Y;\lambda)$. The $0^{th}$ negative-$q$-derivative of $f(X,Y;\lambda)$ is $f(X,Y;\lambda)$. For any real number $a,~X\neq0,$
\begin{equation}
    \left[ f(X)+ag(X)\right]^{(1)} = f^{(1)}(X)+ag^{(1)}(X).
\end{equation}
\end{defn}


\begin{lem}\label{lemma:negativeqderiv}
~\\
\begin{enumerate}
    \item For $0\leq \varphi \leq \ell,$ $\varphi\in\mathbb{Z}^{+}$ and $\ell \geq 1$,
        \begin{equation}
        \left(X^\ell\right)^{(\varphi)} = \beta(\ell,\varphi)X^{\ell-\varphi}.
        \end{equation}
    \item The $\varphi^{th}$ negative-$q$-derivative of     $f(X,Y;\lambda)=\displaystyle\sum_{i=0}^r f_i(\lambda) Y^{i}X^{r-i}$ is given by
        \begin{equation}\label{equation:vthbderivative}
            f^{(\varphi)}\left(X,Y;\lambda\right)= \displaystyle\sum_{i=0}^{r-\varphi}f_i(\lambda) \beta(r-i,\varphi)Y^{i}X^{r-i-\varphi}.
        \end{equation}
    \item Also,
        \begin{align}
        \mu^{[k](\varphi)}(X,Y;\lambda) & = \beta(k,\varphi)\mu^{[k-\varphi]}(X,Y;\lambda)\label{equation:muderiv}
        \\
        \nu^{[k](\varphi)}(X,Y;\lambda) & = \beta(k,\varphi)\nu^{[k-\varphi]}(X,Y;\lambda).\label{equation:nuderiv}
\end{align}
\end{enumerate}
\end{lem}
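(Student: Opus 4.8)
The plan is to prove the three parts in order, since each builds on the previous one. For part (1), I would induct on $\varphi$. The base case $\varphi = 1$ is a direct computation: $\left(X^\ell\right)^{(1)} = \frac{(bX)^\ell - X^\ell}{(b-1)X} = \frac{b^\ell - 1}{b-1}X^{\ell-1} = \begin{bmatrix}\ell\\1\end{bmatrix}X^{\ell-1} = \beta(\ell,1)X^{\ell-1}$, using Definition \ref{defn:negativebetafunction}. For the inductive step, assuming the claim for $\varphi$, apply one more negative-$q$-derivative and use linearity together with the base case to get $\left(X^\ell\right)^{(\varphi+1)} = \beta(\ell,\varphi)\left(X^{\ell-\varphi}\right)^{(1)} = \beta(\ell,\varphi)\begin{bmatrix}\ell-\varphi\\1\end{bmatrix}X^{\ell-\varphi-1}$, and then recognise this as $\beta(\ell,\varphi+1)X^{\ell-\varphi-1}$ by Lemma \ref{lemma:betaproperties} (equation \eqref{equation:betaproperties}). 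This is the cleanest route; the only subtlety is keeping the index ranges straight so the formula degenerates correctly when $\varphi = \ell$.

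Part (2) follows from part (1) by linearity of the negative-$q$-derivative in $X$ (the variable $Y$ and the parameter $\lambda$ are inert): writing $f(X,Y;\lambda) = \sum_{i=0}^r f_i(\lambda) Y^i X^{r-i}$ and differentiating term by term $\varphi$ times gives $\sum_{i=0}^r f_i(\lambda) Y^i \left(X^{r-i}\right)^{(\varphi)} = \sum_{i=0}^r f_i(\lambda)\beta(r-i,\varphi) Y^i X^{r-i-\varphi}$, where the terms with $r-i < \varphi$ vanish because $\beta(r-i,\varphi)$ contains a factor $\begin{bmatrix}0\\1\end{bmatrix}=0$ (or more carefully, a factor of the form $\begin{bmatrix}m\\1\end{bmatrix}$ with $m \le 0$ that forces the product to vanish once we pass $m=0$), which is exactly why the upper limit can be taken as $r-\varphi$.

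Part (3) is then an immediate application of part (2) to the explicit coefficient formulas already established: for $\mu^{[k]}$ we use $\mu^{[k]}(X,Y;\lambda) = \sum_{u=0}^k \begin{bmatrix}k\\u\end{bmatrix}\gamma(\lambda,u) Y^u X^{k-u}$ from Theorem \ref{equation:muformula}, so part (2) gives the $\varphi$-th derivative as $\sum_{u=0}^{k-\varphi} \begin{bmatrix}k\\u\end{bmatrix}\gamma(\lambda,u)\beta(k-u,\varphi) Y^u X^{k-u-\varphi}$; the key algebraic step is the identity $\begin{bmatrix}k\\u\end{bmatrix}\beta(k-u,\varphi) = \beta(k,\varphi)\begin{bmatrix}k-\varphi\\u\end{bmatrix}$, which one verifies from Lemma \ref{lemma:betabmanipulation}: expand $\beta(k,k) = \begin{bmatrix}k\\u\end{bmatrix}\beta(u,u)\beta(k-u,k-u)$ and similarly relate $\beta(k-u,\varphi)$, $\beta(k,\varphi)$, $\begin{bmatrix}k-\varphi\\u\end{bmatrix}$ via equations \eqref{equation:betabstartdifferent} and \eqref{equation:betabstartsame}, so that after cancellation the factor $\gamma(\lambda,u)$ rides along untouched and we recover $\beta(k,\varphi)\sum_{u=0}^{k-\varphi}\begin{bmatrix}k-\varphi\\u\end{bmatrix}\gamma(\lambda,u) Y^u X^{k-u-\varphi} = \beta(k,\varphi)\mu^{[k-\varphi]}(X,Y;\lambda)$. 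The argument for $\nu^{[k]}$ is identical, using $\nu^{[k]}(X,Y;\lambda) = \sum_{u=0}^k (-1)^u b^{\sigma_u}\begin{bmatrix}k\\u\end{bmatrix}Y^u X^{k-u}$ from Theorem \ref{lemma:nulemma} and the same Gaussian-coefficient manipulation, with the inert factor now being $(-1)^u b^{\sigma_u}$. I expect the main obstacle to be purely bookkeeping: making sure the $\beta$-function identity $\begin{bmatrix}k\\u\end{bmatrix}\beta(k-u,\varphi) = \beta(k,\varphi)\begin{bmatrix}k-\varphi\\u\end{bmatrix}$ is stated and proved cleanly (it is really just two applications of Lemma \ref{lemma:betabmanipulation}), since everything else is mechanical substitution.
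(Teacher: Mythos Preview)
Your proposal is correct and follows essentially the same approach as the paper: parts (1) and (2) are handled identically (base case $\varphi=1$ by direct computation, then induction/linearity), and part (3) uses the explicit coefficient formulas for $\mu^{[k]}$ and $\nu^{[k]}$ together with a Gaussian-coefficient identity. The only minor difference is that for part (3) the paper proves the case $\varphi=1$ explicitly (via \eqref{equation:gaussianfracx-1k}) and then inducts, whereas you apply part (2) for general $\varphi$ in one stroke and invoke the swap identity $\begin{bmatrix}k\\u\end{bmatrix}\begin{bmatrix}k-u\\\varphi\end{bmatrix}=\begin{bmatrix}k\\\varphi\end{bmatrix}\begin{bmatrix}k-\varphi\\u\end{bmatrix}$ (which is exactly \eqref{equation:gaussianswapplaces}); this is a slight streamlining but not a substantively different argument.
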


\begin{proof}~\\
\begin{enumerate}
    \item[(1)] For $\varphi=1$ we have
        \begin{equation}
            \left(X^{\ell}\right)^{(1)} = \dfrac{\left(bX^{}\right)^\ell-X^{\ell}}{(b-1)X} = \dfrac{b^{\ell}-1}{b-1}X^{\ell-1} = {\begin{bmatrix}\ell\\1\end{bmatrix}}X^{\ell-1} = \beta(\ell,\varphi)X^{\ell-1}.
        \end{equation}
The rest of the proof follows by induction on $\varphi$ and is omitted. 
    \item[(2)] Now consider $f(X,Y;\lambda)=\displaystyle\sum_{i=0}^r f_i (\lambda)Y^{i}X^{r-i}$. We have,
        \begin{align}
            f^{(1)}\left(X,Y;\lambda\right) 
            & = \left(\displaystyle\sum_{i=0}^r f_i (\lambda) Y^{i}X^{r-i}\right)^{(1)}
            \\
            & =\displaystyle\sum_{i=0}^r f_i(\lambda) Y^{i}\left(X^{r-i}\right)^{(1)}
            \\
            & =\displaystyle\sum_{i=0}^{r-1}f_i(\lambda) \beta(r-i,\varphi)Y^{i}X^{r-i-1}
\end{align}

The rest of the proof follows by induction on $\varphi$ and is omitted. 
    \item[(3)] Now consider $\mu^{[k]}=\displaystyle\sum_{u=0}^k \mu_u(\lambda,k)Y^{u}X^{k-u}$ where $\mu_u(\lambda,k) = {\left[\begin{matrix} k \\ u\end{matrix}\right]}\gamma(\lambda,u)$ as in Theorem \ref{equation:muformula}. Then we have
    \begin{align}
        \mu^{[k](1)}(X,Y;\lambda)
            & = \left(\sum_{u=0}^k \mu_u(\lambda,k)Y^{u}X^{k-u}\right)^{(1)}
            \\
            & = \sum_{u=0}^k \mu_u(\lambda,k)Y^{u}\left(\frac{\left(bX\right)^{k-u}-X^{k-u}}{(b-1)X}\right)
            \\
            & = \sum_{u=0}^{k-1}\frac{b^{(k-u)}-1}{b-1}{\begin{bmatrix}k\\u\end{bmatrix}}\gamma(\lambda,u)Y^{u}X^{k-u-1}
            \\
            \overset{\eqref{equation:gaussianfracx-1k}}&{=}  \sum_{u=0}^{k-1}\frac{(b^{k}-1)\left(b^{(k-u)}-1\right)}{(b^{(k-u)}-1)(b-1)}{\begin{bmatrix}k-1\\u\end{bmatrix}}\gamma(\lambda,u)Y^{u}X^{k-u-1}
            \\
            & = \left(\frac{b^{k}-1}{b-1}\right)\mu^{[k-1]}(X,Y;\lambda)
            \\
            \overset{\eqref{equation:betafunction}}&{=}\beta(k,1)\mu^{[k-1]}(X,Y;\lambda)
    \end{align}

So $\mu^{[k](\varphi)}(X,Y;\lambda)=\beta(k,\varphi)\mu^{[k-\varphi]}(X,Y;\lambda)$ follows by induction on $\varphi$ and is omitted.

Now consider $\nu^{[k]}=\displaystyle\sum_{u=0}^k (-1)^u b^{\sigma_{u}}{\begin{bmatrix}k\\u\end{bmatrix}}Y^{u}X^{k-u}$ as in Theorem \ref{lemma:nulemma}. Then we have
    \begin{align}
        \nu^{[k](1)}(X,Y;\lambda) 
            & = \sum_{u=0}^k (-1)^u b^{\sigma_{u}}\frac{b^{(k-u)}-1}{b-1}{\begin{bmatrix}k\\u\end{bmatrix}} Y^{u}X^{k-u-1}
            \\
            & = \sum_{u=0}^{k-1}(-1)^u b^{\sigma_{u}}\frac{\left(b^{k}-1\right)\left(b^{(k-u)}-1\right)}{\left(b^{(k-u)}-1\right)(b-1)}{\begin{bmatrix}k-1\\u\end{bmatrix}}Y^{u}X^{k-1-u}
            \\
            & = \frac{b^{k}-1}{b-1}\nu^{[k-1]}(X,Y;\lambda)
            \\
            & = \beta(k,1)\nu^{[k-1]}(X,Y;\lambda).
    \end{align}
So $\nu^{[k](\varphi)}(X,Y;\lambda)=\beta(k,\varphi)\nu^{[k-\varphi]}(X,Y;\lambda)$ follows by induction also and is omitted. 
\end{enumerate}
\end{proof}

We now need a few smaller lemmas in order to prove Leibniz rule for the negative-$q$-derivative.

\begin{lem}\label{lemma:littleuandv}
Firstly let
    \begin{align}
        u\left(X,Y;\lambda\right) 
            & = \sum_{i=0}^r u_i(\lambda) Y^{i}X^{r-i}
            \\
        v\left(X,Y;\lambda\right) 
            & = \sum_{i=0}^s v_i(\lambda) Y^{i}X^{s-i}.
    \end{align}
\begin{enumerate}
    \item If $u_r=0$ then
        \begin{equation}\label{equation:urequals0}
            \frac{1}{X}\left[u\left(X,Y;\lambda\right)\ast v\left(X,Y;\lambda\right)\right] = \frac{u\left(X,Y;\lambda\right)}{X}\ast v\left(X,Y;\lambda\right).
        \end{equation}
    \item If $v_s=0$ then
        \begin{equation}\label{equation:vsequals0}
            \frac{1}{X}\left[u\left(X,Y;\lambda\right)\ast v\left(X,Y;\lambda\right)\right] = u\left(X, bY;\lambda\right)\ast \frac{v\left(X,Y;\lambda\right)}{X}.
        \end{equation}
\end{enumerate}
\end{lem}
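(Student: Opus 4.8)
The plan is to prove both identities by a direct coefficient comparison: each side is a homogeneous polynomial of degree $r+s-1$, so it suffices to check that the coefficients of $Y^wX^{r+s-1-w}$ agree for every $w$, using the formula for the negative-$q$-product in Definition~\ref{q-proddefn}. I write $b=-q$ throughout, so the weight factor $(-q)^{is}$ appearing in that definition reads $b^{is}$. The first thing to establish, common to both parts, is that dividing by $X$ is legitimate: the coefficient of $Y^{r+s}$ in $u\ast v$ is $\sum_i b^{is}u_i(\lambda)v_{r+s-i}(\lambda-i)$, and a nonzero summand would require $u_i\ne 0$ and $v_{r+s-i}\ne 0$ simultaneously; under $u_r=0$ this forces $i\le r-1$ together with $i\ge r$, and under $v_s=0$ it forces $i\le r$ together with $i\ge r+1$ — in either case impossible. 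Hence $u\ast v$ has vanishing $Y^{r+s}$ coefficient and $\tfrac1X(u\ast v)$ is again a homogeneous polynomial, of degree $r+s-1$.

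For part (1), under $u_r=0$ the polynomial $u/X=\sum_{i=0}^{r-1}u_i(\lambda)Y^iX^{r-1-i}$ has exactly the same coefficient sequence $(u_i)$ as $u$; only its degree drops by one. Since in Definition~\ref{q-proddefn} the exponent $b^{is}$ is governed by the degree $s$ of the \emph{second} factor — which is untouched here — the coefficient of $Y^wX^{r+s-1-w}$ in $\tfrac{u}{X}\ast v$ is $\sum_{i=0}^{w}b^{is}u_i(\lambda)v_{w-i}(\lambda-i)$, which is precisely $c_w$, the coefficient of $Y^wX^{r+s-w}$ in $u\ast v$ and hence of $Y^wX^{r+s-1-w}$ in $\tfrac1X(u\ast v)$. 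So the two sides coincide.

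Part (2) is where the only genuine subtlety sits. Here it is $v/X=\sum_{j=0}^{s-1}v_j(\lambda)Y^jX^{s-1-j}$ whose degree drops, from $s$ to $s-1$, so forming the product against it introduces the factor $b^{i(s-1)}$ in place of $b^{is}$. The substitution $Y\mapsto bY$ in $u$ is exactly what compensates: $u(X,bY;\lambda)=\sum_i b^{i}u_i(\lambda)Y^iX^{r-i}$ has $i$th coefficient $b^{i}u_i(\lambda)$, so the coefficient of $Y^wX^{r+s-1-w}$ in $u(X,bY;\lambda)\ast\tfrac{v}{X}$ is $\sum_{i=0}^{w}b^{i(s-1)}b^{i}u_i(\lambda)v_{w-i}(\lambda-i)=\sum_{i=0}^{w}b^{is}u_i(\lambda)v_{w-i}(\lambda-i)=c_w$, again matching $\tfrac1X(u\ast v)$.

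I expect the main obstacle to be purely organisational: keeping straight that in the $\ast$-product it is the degree of the right-hand factor that drives the power of $b$, and therefore seeing why the $Y\mapsto bY$ rescaling is forced in part (2) but absent in part (1). Once that is pinned down, both identities reduce to the term-by-term check above, with the vanishing-top-coefficient remark supplying the well-definedness of division by $X$.
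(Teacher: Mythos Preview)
Your proof is correct and follows essentially the same approach as the paper: a direct coefficient comparison built on the negative-$q$-product formula, with the key observation in part~(2) that the drop in degree of the second factor from $s$ to $s-1$ is exactly compensated by the $Y\mapsto bY$ rescaling in the first. Your version is slightly more explicit in first verifying that the $Y^{r+s}$ coefficient vanishes so that division by $X$ makes sense, which the paper handles only implicitly by showing the extra term added back is zero.
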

\begin{proof}~\\
    \begin{enumerate}
        \item[(1)] If $u_r=0$,
            \begin{equation}
                \frac{u\left(X,Y;\lambda\right)}{X} = \sum_{i=0}^{r-1}u_i(\lambda) Y^{i}X^{r-i-1}.
            \end{equation}
            Hence
            \begin{align}
                \frac{u\left(X,Y;\lambda\right)}{X}\ast v\left(X,Y;\lambda\right)
                    & = \sum_{k=0}^{r+s-1}\left(\sum_{\ell=0}^k b^{\ell s} u_\ell(\lambda) v_{k-\ell}(\lambda-\ell)\right) Y^{k}X^{r+s-1-k}
                    \\
                    & = \frac{1}{X}\sum_{k=0}^{r+s-1}\left(\sum_{\ell=0}^k b^{\ell s} u_\ell(\lambda) v_{k-\ell}(\lambda-\ell)\right) Y^{k}X^{r+s-k}
                    \\
                    & \hspace{1cm} + \frac{1}{X}\sum_{\ell=0}^{r+s} b^{\ell s} u_\ell(\lambda) v_{r+s-\ell}(\lambda-\ell) Y^{r+s}X^{0}
                    \\
                    & = \frac{1}{X}\left(u\left(X,Y;\lambda\right)\ast v\left(X,Y;\lambda\right)\right)
            \end{align}
        since $v_{r+s-\ell}(\lambda-\ell)=0$ for $0\leq \ell \leq r-1$ and $u_{\ell}(\lambda)=0$ for $r\leq \ell \leq r+s$ so 
        \begin{equation}
            \frac{1}{X}\sum_{\ell=0}^{r+s} b^{\ell s} u_\ell(\lambda) v_{r+s-\ell}(\lambda-\ell) Y^{r+s}X^{0} = 0.
        \end{equation}
    \item[(2)] Now if $v_s=0$,
        \begin{equation}
            \frac{v\left(X,Y;\lambda\right)}{X} = \sum_{i=0}^{s-1} v_i(\lambda) Y^{i}X^{s-1-i}.
        \end{equation}
        Then
        \begin{align}
            u\left(X,bY;\lambda\right) \ast \frac{v\left(X,Y;\lambda\right)}{X} 
                & = \sum_{k=0}^{r+s-1}\left(\sum_{\ell=0}^k b^{\ell(s-1)} b^{\ell}u_\ell(\lambda) v_{k-\ell}(\lambda-\ell)\right)Y^{k}X^{r+s-1-k}
                \\
                & = \sum_{k=0}^{r+s-1}\left(\sum_{\ell=0}^k b^{\ell(s-1)} b^{\ell}u_\ell(\lambda) v_{k-\ell}(\lambda-\ell)\right)Y^{k}X^{r+s-1-k}
                \\
                & \hspace{1cm} + \frac{1}{X}\sum_{\ell=0}^{r+s} b^{\ell s} u_\ell(\lambda) v_{r+s-\ell}(\lambda-\ell) Y^{r+s}X^{0}
                \\
                & = \frac{1}{X}\left[u(X,Y;\lambda)\ast v(X,Y;\lambda)\right]
        \end{align}
        since $v_{r+s-\ell}(\lambda-\ell)=0$ for $0\leq \ell \leq r$ and $u_{\ell}=0$ for $r+1\leq \ell \leq r+s$.
    \end{enumerate}
\end{proof}

\begin{thm}[Leibniz rule for the negative-$q$-derivative] \label{theorem:Leibniznegativeq}
For two homogeneous polynomials in $X$ and $Y$, $f(X,Y;\lambda)$ and $g(X,Y;\lambda)$ with degrees $r$ and $s$ respectively, the $\varphi^{th}$ (for $\varphi\geq0$) negative-$q$-derivative of their negative-$q$-product is given by
    \begin{equation}\label{equation:leibniznegativeq}
        \left[ f\left(X,Y;\lambda\right)\ast g\left(X,Y;\lambda\right)\right]^{(\varphi)} = \sum_{\ell=0}^\varphi {\begin{bmatrix}\varphi \\ \ell \end{bmatrix}}b^{(\varphi-\ell)(r-\ell)}f^{(\ell)}\left(X,Y;\lambda\right)\ast g^{(\varphi-\ell)}\left(X,Y;\lambda\right).
    \end{equation}
\end{thm}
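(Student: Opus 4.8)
The plan is to induct on $\varphi$. The case $\varphi=0$ is trivial, both sides reducing to $f\ast g$, so everything hinges on the single-derivative case $\varphi=1$, which asserts
\begin{equation*}
    \left[f(X,Y;\lambda)\ast g(X,Y;\lambda)\right]^{(1)}=f^{(1)}(X,Y;\lambda)\ast g(X,Y;\lambda)+b^{r}\,f(X,Y;\lambda)\ast g^{(1)}(X,Y;\lambda),
\end{equation*}
where $r=\deg f$. Before that I would record two elementary substitution facts about the negative-$q$-product, both obtained by comparing coefficients in Definition~\ref{q-proddefn}: the dilation $X\mapsto bX$ commutes with $\ast$, i.e.\ $(f\ast g)(bX,Y;\lambda)=f(bX,Y;\lambda)\ast g(bX,Y;\lambda)$, and homogeneity of degree $r$ gives $f(bX,bY;\lambda)=b^{r}f(X,Y;\lambda)$.

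For the $\varphi=1$ step I would start from the difference quotient
\begin{equation*}
    \left[f\ast g\right]^{(1)}=\frac{(f\ast g)(bX,Y;\lambda)-(f\ast g)(X,Y;\lambda)}{(b-1)X},
\end{equation*}
rewrite the first term with the dilation fact, then add and subtract the mixed product $f(bX,Y;\lambda)\ast g(X,Y;\lambda)$. Bilinearity of $\ast$ (legitimate here since the polynomials being combined have equal degrees) splits the numerator into $f(bX,Y;\lambda)\ast\bigl(g(bX,Y;\lambda)-g(X,Y;\lambda)\bigr)$ plus $\bigl(f(bX,Y;\lambda)-f(X,Y;\lambda)\bigr)\ast g(X,Y;\lambda)$. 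In the second piece the bracket equals $(b-1)X\,f^{(1)}(X,Y;\lambda)$ and is divisible by $X$, so Lemma~\ref{lemma:littleuandv}(1) absorbs the $1/X$ and yields $f^{(1)}\ast g$. In the first piece the bracket equals $(b-1)X\,g^{(1)}(X,Y;\lambda)$, again divisible by $X$; Lemma~\ref{lemma:littleuandv}(2) then applies, substituting $Y\mapsto bY$ in the left factor, so $f(bX,Y;\lambda)$ becomes $f(bX,bY;\lambda)=b^{r}f(X,Y;\lambda)$ and the piece equals $b^{r}f\ast g^{(1)}$. (One may instead just match coefficients of $Y^{u}X^{r+s-1-u}$, where the needed identity collapses term-by-term to $\beta(r+s-u,1)=\beta(r-i,1)+b^{r-i}\beta(s-u+i,1)$, immediate from $\beta(x,1)=\tfrac{b^{x}-1}{b-1}$.)

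For the inductive step, assuming \eqref{equation:leibniznegativeq} for $\varphi$, I would apply $[\,\cdot\,]^{(1)}$, use linearity, and apply the $\varphi=1$ case to each summand $f^{(\ell)}\ast g^{(\varphi-\ell)}$, using $\deg f^{(\ell)}=r-\ell$ (Lemma~\ref{lemma:negativeqderiv}); this replaces that summand by $f^{(\ell+1)}\ast g^{(\varphi-\ell)}+b^{r-\ell}f^{(\ell)}\ast g^{(\varphi-\ell+1)}$. Reindexing (with $m=\ell+1$ in the first family and $m=\ell$ in the second), the coefficient of $f^{(m)}\ast g^{(\varphi+1-m)}$ becomes $\begin{bmatrix}\varphi\\ m-1\end{bmatrix}b^{(\varphi+1-m)(r+1-m)}+\begin{bmatrix}\varphi\\ m\end{bmatrix}b^{(\varphi+1-m)(r-m)}$; dividing by $b^{(\varphi+1-m)(r-m)}$ reduces the target to $b^{\varphi+1-m}\begin{bmatrix}\varphi\\ m-1\end{bmatrix}+\begin{bmatrix}\varphi\\ m\end{bmatrix}=\begin{bmatrix}\varphi+1\\ m\end{bmatrix}$, which is exactly \eqref{equation:Stepdown1}. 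The endpoints $m=0$ and $m=\varphi+1$ fall out of the conventions $\begin{bmatrix}\varphi\\ -1\end{bmatrix}=\begin{bmatrix}\varphi\\ \varphi+1\end{bmatrix}=0$, and derivatives past the degree are taken to vanish.

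I expect the exponent bookkeeping in the inductive step to be the main obstacle: one has to carry the degree of $f^{(\ell)}$ faithfully through the $\varphi=1$ case so that the accompanying power is $b^{r-\ell}$ and not $b^{r}$, and then check that the two contributions to the coefficient of $f^{(m)}\ast g^{(\varphi+1-m)}$ carry exponents differing by precisely $\varphi+1-m$ --- exactly the amount needed for \eqref{equation:Stepdown1} to apply. The only other place needing care is keeping the homogeneity identity $f(bX,bY;\lambda)=b^{r}f$ synchronised with the $X\mapsto bX$ dilation in the $\varphi=1$ case, so that the factor $b^{r}$ lands on the $g^{(1)}$ term rather than the $f^{(1)}$ term.
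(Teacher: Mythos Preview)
Your proposal is correct and follows essentially the same route as the paper: induction on $\varphi$, with the base case $\varphi=1$ proved by the add-and-subtract trick on the difference quotient, invoking Lemma~\ref{lemma:littleuandv} to pull $1/X$ through the $\ast$-product and homogeneity to produce the $b^{r}$, and the inductive step closed by the Pascal-type identity~\eqref{equation:Stepdown1}. Your explicit mention of the dilation compatibility $(f\ast g)(bX,Y;\lambda)=f(bX,Y;\lambda)\ast g(bX,Y;\lambda)$ is a point the paper uses silently, and your coefficient check for the inductive step matches the paper's computation line for line.
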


\begin{proof}
Firstly let,
    \begin{align}
        f\left(X,Y;\lambda\right)
            & = \sum_{i=0}^r f_i(\lambda) Y^{i}X^{r-i}
        \\
        u\left(X,Y;\lambda\right) 
            & = \sum_{i=0}^r u_i(\lambda) Y^{i}X^{r-i}
        \\
        g\left(X,Y;\lambda\right) 
            & = \sum_{i=0}^s g_i(\lambda) Y^{i}X^{s-i}
        \\
        v\left(X,Y;\lambda\right) 
            & = \sum_{i=0}^s v_i(\lambda) Y^{i}X^{s-i}.
    \end{align}

For simplification, we shall write $f(X,Y;\lambda)$ as $f(X,Y)$ and similarly for the $g(X,Y;\lambda)$. Now by differentiation we have
\begin{align}
    \left[ f\left(X,Y\right)\ast g\left(X,Y\right)\right]^{(1)} 
        & = \frac{f\left(bX,Y\right)\ast g\left(bX,Y\right)-f\left(X,Y\right)\ast g\left(X,Y\right)}{(b-1)X}
        \\
        & = \frac{1}{(b-1)X} \bigg\{ f\left(bX,Y\right)\ast g\left(bX,Y\right)-f\left(bX,Y\right)\ast g\left(X,Y\right)
        \\
        & \hspace{1cm} + f\left(bX,Y\right)\ast g\left(X,Y\right) - f\left(X,Y\right)\ast g\left(X,Y\right)\bigg\}
        \\
        & = \frac{1}{(b-1)X}\left\{ f\left(bX,Y\right)\ast\left(g\left(bX,Y\right)-g\left(X,Y\right)\right)\right\}
        \\
        & \hspace{1cm} +\frac{1}{(b-1)X}\bigg\{\left(f\left(bX,Y\right)-f\left(X,Y\right)\right)\ast g\left(X,Y\right)\bigg\}
        \\
        \overset{\eqref{equation:vsequals0}}&{=} f\left(bX,bY\right) \ast \left\{\frac{g\left(bX,Y\right)-g\left(X,Y\right)}{(b-1)X}\right\}
        \\
        &\hspace{1cm} \overset{\eqref{equation:urequals0}}{+}\left\{\frac{f\left(bX,Y\right)-f\left(X,Y\right)}{(b-1)X}\right\}\ast g\left(X,Y\right)
        \\
        & = b^{r} f\left(X,Y\right)\ast g^{(1)}\left(X,Y\right) + f^{(1)}\left(X,Y\right)\ast g\left(X,Y\right)
\end{align}

since $f(X,Y)$ is a homogeneous polynomial. So the initial case holds. Assume the statement holds true for $\varphi=\overline{\varphi}$, i.e.

\begin{equation}
    \left[ f\left(X,Y\right)\ast g\left(X,Y\right)\right]^{(\overline{\varphi})} = \sum_{\ell=0}^{\overline{\varphi}} {\begin{bmatrix}\overline{\varphi} \\ \ell \end{bmatrix}}b^{(\overline{\varphi}-\ell)(r-\ell)}f^{(\ell)}\left(X,Y\right)\ast g^{(\overline{\varphi}-\ell)}\left(X,Y\right).
\end{equation}

Now considering $\overline{\varphi}+1$ and for simplicity we write $f(X,Y;\lambda),~g(X,Y;\lambda)$ as $f,g$ we have
\pagebreak
    \begin{align}
    \left[f\ast g\right]^{(\overline{\varphi}+1)} 
        & = \left[ \sum_{\ell=0}^{\overline{\varphi}}{\begin{bmatrix}\overline{\varphi}\\\ell\end{bmatrix}}b^{(\overline{\varphi}-\ell)(r-\ell)}f^{(\ell)}\ast g^{(\overline{\varphi}-\ell)}\right]^{(1)}
        \\
        & = \sum_{\ell=0}^{\overline{\varphi}}{\begin{bmatrix}\overline{\varphi}\\\ell\end{bmatrix}}b^{(\overline{\varphi}-\ell)(r-\ell)}\left[f^{(\ell)}\ast g^{(\overline{\varphi}-\ell)}\right]^{(1)}
        \\
        & = \sum_{\ell=0}^{\overline{\varphi}}{\begin{bmatrix}\overline{\varphi}\\\ell\end{bmatrix}}b^{(\overline{\varphi}-\ell)(r-\ell)}\left( b^{(r-\ell)}f^{(\ell)}\ast g^{(\overline{\varphi}-\ell+1)}+f^{(\ell+1)}\ast g^{(\overline{\varphi}-\ell)}\right)
        \\
    %
    %
    %
        & = \sum_{\ell=0}^{\overline{\varphi}}{\begin{bmatrix}\overline{\varphi}\\\ell\end{bmatrix}}b^{(\overline{\varphi}-\ell+1)(r-\ell)}f^{(\ell)}\ast g^{(\overline{\varphi}-\ell+1)}+ \sum_{\ell=1}^{\overline{\varphi}+1}{\begin{bmatrix}\overline{\varphi}\\\ell-1\end{bmatrix}}b^{(\overline{\varphi}-\ell+1)(r-\ell+1)}f^{(\ell)}\ast g^{(\overline{\varphi}-\ell+1)}
        \\
        & = {\begin{bmatrix}\overline{\varphi}\\0\end{bmatrix}}b^{(\overline{\varphi}+1)r}f\ast g^{(\overline{\varphi}+1)}+ \sum_{\ell=1}^{\overline{\varphi}}{\begin{bmatrix}\overline{\varphi}\\\ell\end{bmatrix}}b^{(\overline{\varphi}+1-\ell)(r-\ell)}f^{(\ell)}\ast g^{(\overline{\varphi}-\ell+1)}
        \\
        & \hspace{1cm} + {\begin{bmatrix}\overline{\varphi}\\\overline{\varphi}\end{bmatrix}}b^{(\overline{\varphi}+1-\overline{\varphi}-1)(r-\overline{\varphi}-1+1)}f^{(\overline{\varphi}+1)}\ast g + \sum_{\ell=1}^{\overline{\varphi}}{\begin{bmatrix}\overline{\varphi}\\\ell-1\end{bmatrix}}b^{(\overline{\varphi}+1-\ell)(r-\ell+1)}f^{(\ell)}\ast g^{(\overline{\varphi}-\ell+1)}
        \\
        & = b^{(\overline{\varphi}+1)r}f\ast g^{(\overline{\varphi}+1)} + f^{(\overline{\varphi}+1)}\ast g+ \sum_{\ell=1}^{\overline{\varphi}} \left( {\begin{bmatrix}\overline{\varphi}\\\ell\end{bmatrix}} + b^{(\overline{\varphi}-\ell+1)}{\begin{bmatrix}\overline{\varphi}\\\ell-1\end{bmatrix}}\right) b^{(\overline{\varphi}-\ell+1)(r-\ell)}f^{(\ell)}\ast g^{(\overline{\varphi}-\ell+1)}
        \\
        \overset{\eqref{equation:Stepdown1}}&{=} \sum_{\ell=1}^{\overline{\varphi}}{\begin{bmatrix}\overline{\varphi}+1\\\ell\end{bmatrix}}b^{(\overline{\varphi}+1-\ell)(r-\ell)}f^{(\ell)}\ast g^{(\overline{\varphi}+1-\ell)}+ {\begin{bmatrix}\overline{\varphi}+1\\0\end{bmatrix}}b^{(\overline{\varphi}+1)(r)}f\ast g^{(\overline{\varphi}+1)}
        \\
        & \hspace{1cm} +{\begin{bmatrix}\overline{\varphi}+1\\\overline{\varphi}+1\end{bmatrix}} b^{(\overline{\varphi}-1-\overline{\varphi}-1)}f^{(\overline{\varphi}+1)}\ast g
        \\
        & = \sum_{\ell=0}^{\overline{\varphi}+1}{\begin{bmatrix}\overline{\varphi}+1\\\ell\end{bmatrix}}b^{(\overline{\varphi}+1-\ell)(r-\ell)}f^{(\ell)}\ast g^{(\overline{\varphi}+1-\ell)}.
    \end{align}
\end{proof}
 
\subsection{The Negative-$q^{-1}$-Derivative}

\begin{defn}
For $q\geq 2, ~(-q)=b$, the \textbf{\textit{negative-$\boldsymbol{q^{-1}}$-derivative}} at $Y\neq 0$ for a real-valued function $g(Y)$ is defined as
    \begin{equation}
        g^{\{1\}}\left(Y\right)=\dfrac{g\left(b^{-1}Y\right)-g\left(Y\right)}{(b^{-1}-1)Y}. 
    \end{equation}
For $\varphi\geq0$ we denote the $\varphi^{th}$ negative-$q^{-1}$-derivative (with respect to $Y$) of $g(X,Y;\lambda)$ as $g^{\{\varphi\}}(X,Y;\lambda)$. The $0^{th}$ negative-$q^{-1}$-derivative of $g(X,Y;\lambda)$ is $g(X,Y;\lambda)$. For any real number $a,~Y\neq0,$
    \begin{equation}
        \left[ f(Y)+ag(Y)\right]^{\{1\}} = f^{\{1\}}(Y)+ag^{\{1\}}(Y).
    \end{equation}
\end{defn}

\begin{lem}\label{lemma:negativeqderivatives} ~ \\
\begin{enumerate}
    \item For $0\leq \varphi \leq \ell,$ 
        \begin{equation}
            \left(Y^\ell\right)^{\{\varphi\}} = b^{\varphi(1-\ell)+\sigma_{\varphi}}\beta(\ell,\varphi)Y^{\ell-\varphi}.
        \end{equation}
    \item The $\varphi^{th}$ negative-$q^{-1}$-derivative of $g(X,Y;\lambda)=\displaystyle\sum_{i=0}^s g_i(\lambda) Y^{i}X^{s-i}$ is given by               \begin{equation}                                    \label{equation:generalbdervispoly}
                g^{\{\varphi\}}\left(X,Y;\lambda\right)=\displaystyle\sum_{i=\varphi}^{s}g_i(\lambda) b^{\varphi(1-i)+\sigma_{\varphi}} \beta(i,\varphi)Y^{i-\varphi}X^{s-i}.
        \end{equation}
    \item Also,
        \begin{align}
            \mu^{[k]\{\varphi\}}(X,Y;\lambda) 
                & = b^{-\sigma_{\varphi}} \beta(k,\varphi)\gamma(\lambda,\varphi)\mu^{[k-\varphi]}(X,Y;\lambda-\varphi)
                \\
            \nu^{[k]\{\varphi\}}(X,Y;\lambda) 
                & = (-1)^{\varphi} \beta(k,\varphi)\nu^{[k-\varphi]}(X,Y;\lambda).
\end{align}
\end{enumerate}
\end{lem}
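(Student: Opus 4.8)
The plan is to prove the three parts in sequence: part (1) by induction on $\varphi$, part (2) by linearity of $\{1\}$ together with part (1), and part (3) by substituting the formula \eqref{equation:generalbdervispoly} of part (2) into the explicit expansions of $\mu^{[k]}$ and $\nu^{[k]}$ and matching coefficients, mirroring the proof of part (3) of Lemma \ref{lemma:negativeqderiv}.

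For part (1), the base case $\varphi=1$ is a direct computation: $(Y^\ell)^{\{1\}} = \frac{(b^{-1}Y)^\ell-Y^\ell}{(b^{-1}-1)Y} = \frac{b^{-\ell}-1}{b^{-1}-1}Y^{\ell-1}$, and the scalar rewrites as $b^{1-\ell}\frac{b^\ell-1}{b-1} = b^{1-\ell}\beta(\ell,1)$, which is $b^{1\cdot(1-\ell)+\sigma_1}\beta(\ell,1)$ since $\sigma_1=0$. For the inductive step I would apply $\{1\}$ to $(Y^\ell)^{\{\varphi\}} = b^{\varphi(1-\ell)+\sigma_\varphi}\beta(\ell,\varphi)Y^{\ell-\varphi}$, treating the scalar as constant and invoking the base case on $Y^{\ell-\varphi}$; the exponent of $b$ accumulates as $\varphi(1-\ell)+\sigma_\varphi + \bigl(1-(\ell-\varphi)\bigr) = (\varphi+1)(1-\ell)+\sigma_{\varphi+1}$ using $\sigma_{\varphi+1}=\sigma_\varphi+\varphi$, and the $\beta$-factors combine by Lemma \ref{lemma:betaproperties} as $\beta(\ell,\varphi)\beta(\ell-\varphi,1)=\beta(\ell,\varphi+1)$. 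Part (2) is then immediate: differentiate term by term, note that for $i<\varphi$ the factor $\beta(i,\varphi)$ contains $\begin{bmatrix}0\\1\end{bmatrix}=0$ so those terms vanish, and the surviving sum runs from $i=\varphi$.

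For part (3), I would expand the right-hand sides using Theorem \ref{equation:muformula} and Theorem \ref{lemma:nulemma} (with parameter $k-\varphi$, and for $\mu$ also $\lambda-\varphi$), expand the left-hand sides by plugging $\mu^{[k]}$ and $\nu^{[k]}$ into \eqref{equation:generalbdervispoly}, and substitute $i=v+\varphi$ in the resulting sums. Matching the coefficient of $Y^vX^{k-\varphi-v}$ then reduces the $\mu$-identity to
\[\begin{bmatrix}k\\v+\varphi\end{bmatrix}\gamma(\lambda,v+\varphi)\,\beta(v+\varphi,\varphi)\,b^{\varphi(1-v-\varphi)+\sigma_\varphi} = b^{-\sigma_\varphi}\beta(k,\varphi)\gamma(\lambda,\varphi)\begin{bmatrix}k-\varphi\\v\end{bmatrix}\gamma(\lambda-\varphi,v),\]
and the $\nu$-identity to the same statement with all $\gamma$'s deleted and an extra factor $(-1)^{v+\varphi}b^{\sigma_{v+\varphi}}$ on the left balanced by $(-1)^{v+\varphi}b^{\sigma_v}$ on the right. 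I would verify these from three ingredients: (a) the $\gamma$-splitting identity $\gamma(\lambda,v+\varphi)=b^{v\varphi}\gamma(\lambda,\varphi)\gamma(\lambda-\varphi,v)$, which follows by iterating \eqref{equation:gammastepdown} or, more cleanly, from part (1) of Lemma \ref{lemma:Gammaidentites}; (b) the conversions $\beta(v+\varphi,\varphi)=\begin{bmatrix}v+\varphi\\\varphi\end{bmatrix}\beta(\varphi,\varphi)$ and $\beta(k,\varphi)=\begin{bmatrix}k\\\varphi\end{bmatrix}\beta(\varphi,\varphi)$ from \eqref{equation:betabstartdifferent}, together with the Gaussian identity $\begin{bmatrix}k\\v+\varphi\end{bmatrix}\begin{bmatrix}v+\varphi\\\varphi\end{bmatrix}=\begin{bmatrix}k\\\varphi\end{bmatrix}\begin{bmatrix}k-\varphi\\v\end{bmatrix}$, which drops out of \eqref{equation:betabstartsame} after writing every coefficient as a ratio of $\beta(n,n)$'s; and (c) the elementary $\sigma_{v+\varphi}=\sigma_v+\sigma_\varphi+v\varphi$, which is exactly what makes the powers of $b$ collapse ($\varphi(1-v-\varphi)+\sigma_\varphi+v\varphi=-\sigma_\varphi$ in the $\mu$-case, $\sigma_{v+\varphi}+\varphi(1-v-\varphi)+\sigma_\varphi=\sigma_v$ in the $\nu$-case).

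Parts (1) and (2) are short; all the substance is the coefficient comparison in part (3), and the only genuine obstacle there is the $b$-exponent bookkeeping through the reindexing $i=v+\varphi$ — the algebra is routine but easy to slip on. The one auxiliary fact not written verbatim in the excerpt is the $\gamma$-splitting identity (a), but it is an immediate consequence of Lemma \ref{lemma:Gammaidentites}, so it poses no real difficulty.
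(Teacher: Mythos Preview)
Your proposal is correct. Parts (1) and (2) match the paper's proof essentially verbatim (induction on $\varphi$ for (1), termwise application of (1) for (2)).

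For part (3), however, you take a genuinely different route from the paper. The paper proves each identity by induction on $\varphi$: it computes $\mu^{[k]\{1\}}$ and $\nu^{[k]\{1\}}$ by direct differentiation of the expansions, massages the result into the claimed form for $\varphi=1$ using \eqref{equation:gammastepdown} and \eqref{equation:beta1stepdown}, and then for the inductive step applies $\{1\}$ once more to the inductive hypothesis and repeats the same manipulations --- exactly parallel to its proof of part (3) of Lemma \ref{lemma:negativeqderiv}. You instead bypass the induction entirely: you plug the closed forms of $\mu^{[k]}$ and $\nu^{[k]}$ into the general formula \eqref{equation:generalbdervispoly} already established in part (2), reindex via $i=v+\varphi$, and match coefficients in one shot. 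Your approach is cleaner and shorter, but it needs the multiplicative splitting $\gamma(\lambda,v+\varphi)=b^{v\varphi}\gamma(\lambda,\varphi)\gamma(\lambda-\varphi,v)$ and the trinomial-type Gaussian identity up front; the paper's inductive argument only ever uses the one-step recursions for $\gamma$ and the Gaussian coefficients, which is more pedestrian but stays closer to the toolkit already on the page.
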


\begin{proof}~\\
\begin{enumerate}
    \item[(1)] For $\varphi=1$ we have
        \begin{align}
            \left(Y^{\ell}\right)^{\{1\}} = \dfrac{\left(b^{-1}Y\right)^{\ell}-Y^{\ell}}{(b^{-1}-1)Y}
                & = \left(\dfrac{b^{-\ell}-1}{b^{-1}-1}\right)Y^{\ell-1} 
            \\
                & = \frac{bb^{-\ell}\left(1-b^{\ell}\right)}{1-b}Y^{\ell-1}
                \\
                & = b^{1-\ell}\beta(\ell,1)Y^{\ell-1}.
        \end{align}
    So the initial case holds. Assume the case for $\varphi =\overline{\varphi}$ holds. Then we have
        \begin{align}
            \left(Y^{\ell}\right)^{\{\overline{\varphi}+1\}}
                & = \left(b^{(\overline{\varphi}(1-\ell)+\sigma_{\overline{\varphi}})}\beta(\ell,\overline{\varphi})Y^{\ell-\overline{\varphi}}\right)^{\{1\}}
                \\
                & = b^{(\overline{\varphi}(1-\ell)+\sigma_{\overline{\varphi}})}\beta(\ell,\overline{\varphi})\frac{b^{-(\ell-\overline{\varphi})}Y^{\ell-\overline{\varphi}}-Y^{\ell-\overline{\varphi}}}{\left(b^{-1}-1\right)Y}
                \\
    %
    %
            & = b^{\overline{\varphi}(1-\ell)+\sigma_{\overline{\varphi}}}\beta(\ell,\varphi)b^{1-(\ell-\overline{\varphi})}\beta(\ell-\overline{\varphi},1)Y^{\ell-\overline{\varphi}-1}
            \\
            \overset{\eqref{equation:betaproperties}}&{=} b^{(\overline{\varphi}+1)(1-\ell)+\sigma_{\overline{\varphi}+1}}\beta(\ell,\overline{\varphi}+1)Y^{\ell-(\overline{\varphi}+1)}.
        \end{align}
    Thus the statement holds by induction. 
    \item[(2)] Now consider     $g(X,Y;\lambda)=\displaystyle\sum_{i=0}^s g_i (\lambda) Y^{i}X^{s-i}$. For $\varphi=1$ we have
        \begin{equation}
            g^{\{1\}}\left(X,Y;\lambda\right) = \left(\sum_{i=0}^s g_i(\lambda) Y^{i}X^{s-i}\right)^{\{1\}} = \sum_{i=0}^s g_i(\lambda) \left(Y^{i}\right)^{\{1\}}X^{s-i} = \sum_{i=0}^s g_i(\lambda) b^{(-i+1)}\beta(i,1)Y^{i-1}X^{s-i}.
        \end{equation}
    As $\beta(i,1)=0$ when $i=0$ we have
        \begin{equation}
            g^{\{1\}}\left(X,Y;\lambda\right) = \sum_{i=1}^s g_i(\lambda) b^{(1-i)+\sigma_{1}}\beta(i,1)Y^{i-1}X^{s-i}.
        \end{equation}
    So the initial case holds. Now assume the case holds for $\varphi=\overline{\varphi}$ i.e. \\
    $g^{\{\overline{\varphi}\}}\left(X,Y;\lambda\right)=\displaystyle\sum_{i=\overline{\varphi}}^s g_i(\lambda) b^{\overline{\varphi}(1-i)+\sigma_{\overline{\varphi}}}\beta(i,\overline{\varphi})Y^{(i-\overline{\varphi})}X^{s-i}$. Then we have
    \begin{align}
        g^{\{\overline{\varphi}+1\}}\left(X,Y;\lambda\right) 
            & = \left(\sum_{i=\overline{\varphi}}^s g_i(\lambda) b^{\overline{\varphi}(1-i)+\sigma_{\overline{\varphi}}}\beta(i,\overline{\varphi})Y^{i-\overline{\varphi}}\right)^{\{1\}}X^{s-i}
            \\
            & = \sum_{i=\overline{\varphi}}^s g_i(\lambda) b^{\overline{\varphi}(1-i)+\sigma_{\overline{\varphi}}}\beta(i,\overline{\varphi})b^{-(i-\overline{\varphi}-1)}\beta(i-\overline{\varphi},1)Y^{i-\overline{\varphi}-1}X^{s-i}
            \\
            \overset{\eqref{equation:betafunction}}&{=} \sum_{i=\overline{\varphi}}^s g_i(\lambda) b^{(\overline{\varphi}+1)(1-i)+\sigma_{\overline{\varphi}}}\prod_{j=0}^{\overline{\varphi}-1}\frac{\left(b^{i-j}-1\right)\left(b^{i-\overline{\varphi}}-1\right)}{(b-1)(b-1)}Y^{i-\overline{\varphi}-1}X^{s-i}
            \\
            & = \sum_{i=\overline{\varphi}}^s g_i(\lambda) b^{(\overline{\varphi}+1)(1-i)+\sigma_{\overline{\varphi}+1}}\beta(i,\overline{\varphi}+1)Y^{i-\overline{\varphi}-1}X^{s-i}
            \\
            & = \sum_{i=\overline{\varphi}+1}^s g_i(\lambda) b^{(\overline{\varphi}+1)(1-i)+\sigma_{\overline{\varphi}+1}}\beta(i,\overline{\varphi}+1)Y^{i-\overline{\varphi}-1}X^{s-i}
    \end{align}
    since when $i=\overline{\varphi}$, $\beta(\overline{\varphi},\overline{\varphi}+1)=0$. So by induction Equation \eqref{equation:generalbdervispoly} holds.
    
    \item[(3)] Now consider $\mu^{[k]}=\displaystyle\sum_{u=0}^k \mu_u(\lambda,k)Y^{u}X^{k-u}$ where $\mu_u(\lambda,k) = {\left[\begin{matrix} k \\ u\end{matrix}\right]}\gamma(\lambda,u)$ as in Theorem \ref{equation:muformula}. Then we have
    \begin{align}
        \mu^{[k]\{1\}}(X,Y;\lambda) 
            & = \left(\sum_{u=0}^k \mu_u(\lambda,k)Y^{u}X^{k-u}\right)^{\{1\}}
            \\
            & = \sum_{u=0}^k \mu_u(\lambda,k)b^{1-u}\beta(u,1)Y^{u-1}X^{k-u}\\
            & = \sum_{r=0}^{k-1} \mu_{r+1}(\lambda,k)b^{1-(r+1)}\beta(r+1,1)Y^{r+1-1}X^{k-r-1}
            \\
            \overset{\eqref{equation:gammastepdown}\eqref{equation:beta1stepdown}}&{=} \sum_{r=0}^{k-1} {\begin{bmatrix}k\\r+1\end{bmatrix}}\gamma(\lambda,r+1)b^{-r}\beta(r+1,1)Y^{r}X^{k-1-r}
            \\
            & = \sum_{r=0}^{k-1}{\begin{bmatrix}k-1\\r\end{bmatrix}}\frac{b^{k}-1}{b^{(r+1)}-1}\left(b^\lambda-1\right)b^{r}b^{-r}\gamma(\lambda-1,r)\beta(r+1,1)Y^{r}X^{k-1-r}
            \\
            & = b^{-\sigma_1}\beta(k,1)\gamma(\lambda,1)\mu^{[k-1]}(X,Y;\lambda-1).
    \end{align}

    Now assume that the statement holds for $\varphi=\overline{\varphi}$. Then we have
    \begin{align}
        \mu^{[k]\{\overline{\varphi}+1\}}(X,Y;\lambda)
            & = \bigg[ b^{-\sigma_{\overline{\varphi}}}\beta(k,\overline{\varphi})\gamma(\lambda,\overline{\varphi})\mu^{[k-\overline{\varphi}]}(X,Y;\lambda-\overline{\varphi})\bigg]^{\{1\}}
            \\
            & = b^{-\sigma_{\overline{\varphi}}}\beta(k,\overline{\varphi})\gamma(\lambda,\overline{\varphi})\big[\mu^{[k-\overline{\varphi}]}(X,Y;\lambda-\overline{\varphi})\big]^{\{1\}}
            \\
            & = b^{-\sigma_{\overline{\varphi}}}\beta(k,\overline{\varphi})\gamma(\lambda,\overline{\varphi})\left(\sum_{r=0}^{k-\overline{\varphi}}{\begin{bmatrix}k-\overline{\varphi}\\r\end{bmatrix}}\gamma(\lambda-\overline{\varphi},r)Y^{r}X^{k-\overline{\varphi}-r}\right)^{\{1\}}
            \\
            & = b^{-\sigma_{\overline{\varphi}}}\beta(k,\overline{\varphi})\gamma(\lambda,\overline{\varphi})\sum_{r=0}^{k-\overline{\varphi}}{\begin{bmatrix}k-\overline{\varphi}\\r\end{bmatrix}}\gamma(\lambda-\overline{\varphi},r)\left(Y^{r}\right)^{\{1\}}X^{k-\overline{\varphi}-r}
            \\
            & = b^{-\sigma_{\overline{\varphi}}}\beta(k,\overline{\varphi})\gamma(\lambda,\overline{\varphi})\sum_{u=0}^{k-\overline{\varphi}-1}{\begin{bmatrix}k-\overline{\varphi}\\u+1\end{bmatrix}}\gamma(\lambda-\overline{\varphi},u+1)b^{1-(u+1)}\beta(u+1,1)Y^{u+1-1}X^{k-\overline{\varphi}-u-1}
            \\
            \overset{\eqref{equation:gammastepdown}\eqref{equation:betafunction}}&{=} b^{-\sigma_{\overline{\varphi}}}\beta(k,\overline{\varphi})\gamma(\lambda,\overline{\varphi})\sum_{u=0}^{k-(\overline{\varphi}+1)}{\begin{bmatrix}k-\overline{\varphi}-1\\u\end{bmatrix}}\frac{\left(b^{k-\overline{\varphi}}-1\right)\left(b^{u+1}-1\right)}{\left(b^{u+1}-1\right)(b-1)}b^{u}b^{-u}
            \\
            & \hspace{1cm} \times \left(b^{\lambda-\overline{\varphi}}-1\right)\gamma(\lambda-(\overline{\varphi}+1),u)Y^{u}X^{k-(\overline{\varphi}+1)-u}
            \\
            & = b^{-\sigma_{\overline{\varphi}}}b^{-\overline{\varphi}}\gamma(\lambda,\overline{\varphi}+1)\beta(k,\overline{\varphi}+1)\mu^{[k-(\overline{\varphi}+1)]}(X,Y;\lambda-(\overline{\varphi}+1))
            \\
            & = b^{-\sigma_{\overline{\varphi}+1}}\gamma(\lambda,\overline{\varphi}+1)\beta(k,\overline{\varphi}+1)\mu^{[k-(\overline{\varphi}+1)]}(X,Y;\lambda-(\overline{\varphi}+1)).
\end{align}
As required.
Now consider $\nu^{[k]}=\displaystyle\sum_{u=0}^k (-1)^u b^{u(u-1)}{\begin{bmatrix}k\\u\end{bmatrix}}Y^{u}X^{k-u}$ as defined in Theorem \ref{lemma:nulemma}. Then we have
    \begin{align}
        \nu^{[k]\{1\}}(X,Y;\lambda) 
            & = \left(\sum_{u=0}^k (-1)^u b^{\sigma_{u}}{\begin{bmatrix}k\\u\end{bmatrix}}Y^{u}X^{k-u}\right)^{\{1\}}
            \\
            & = \sum_{u=1}^k (-1)^u b^{\sigma_{u}}{\begin{bmatrix}k\\u\end{bmatrix}}\left(Y^{u}\right)^{\{1\}}X^{k-u}
            \\
            & = \sum_{r=0}^{k-1} (-1)^{r+1} b^{\sigma_{r+1}}b^{1-(r+1)}{\begin{bmatrix}k\\r+1\end{bmatrix}}\beta(r+1,1)Y^{r+1-1}X^{k-r-1}
            \\
        \overset{\eqref{equation:gammastepdown}\eqref{equation:betafunction}}&{=} -\sum_{r=0}^{k-1} (-1)^{r} b^{\sigma_{r}}b^{r}b^{-r}{\begin{bmatrix}k-1\\r\end{bmatrix}}\frac{\left(b^{k}-1\right)\left(b^{r+1}-1\right)}{\left(b^{r+1}-1\right)\left(b-1\right)}\beta(r,1)Y^{r}X^{k-r-1}
            \\
            & = (-1)^{1}\beta(k,1)\nu^{[k-1]}(X,Y;\lambda).
\end{align}
Now assume that the statement holds for     $\varphi=\overline{\varphi}$. Then we have
    \begin{align}
        \nu^{[k]}(X,Y;\lambda)^{\{\overline{\varphi}+1\}} 
            & = \left[(-1)^{\overline{\varphi}}\beta(k,\overline{\varphi})\nu^{[k-\overline{\varphi}]}(X,Y;\lambda)\right]^{\{1\}}
            \\
            & = (-1)^{\overline{\varphi}}\beta(k,\overline{\varphi})\sum_{u=1}^{k-\overline{\varphi}} (-1)^u b^{\sigma_u}{\begin{bmatrix}k-\overline{\varphi}\\u\end{bmatrix}}\left(Y^{u}\right)^{\{1\}}X^{k-\overline{\varphi}-u}
            \\
            & = (-1)^{\overline{\varphi}}\beta(k,\overline{\varphi})\sum_{r=0}^{k-\overline{\varphi}-1} (-1)^{r+1} b^{\sigma_{r+1}}b^{-(r+1)+1}{\begin{bmatrix}k-\overline{\varphi}\\r+1\end{bmatrix}}\beta(r+1,1)Y^{r+1-1}X^{k-\overline{\varphi}-r-1}
            \\
            & = (-1)^{\overline{\varphi}+1}\beta(k,\overline{\varphi})\sum_{r=0}^{k-\overline{\varphi}-1} (-1)^{r}b^{\sigma_{r}}{\begin{bmatrix}k-(\overline{\varphi}+1)\\r\end{bmatrix}}
            \times \frac{\left(b^{k-\overline{\varphi}}-1\right)\left(b^{r+1}-1\right)}{\left(b^{r+1}-1\right)\left(b-1\right)}Y^{r}X^{k-\overline{\varphi}-1-r}
            \\
            & = (-1)^{\overline{\varphi}+1}\beta(k,\overline{\varphi}+1)\nu^{[k-(\overline{\varphi}+1)]}(X,Y;\lambda).
\end{align}
as required.
\end{enumerate}
\end{proof}

Now we need a few smaller lemmas in order to prove Leibniz rule for the negative-$q^{-1}$-derivative.

\begin{lem}\label{lemma:negativeqminussimplification} Firstly let
\begin{align}
    u\left(X,Y;\lambda\right) 
        & = \sum_{i=0}^r u_i(\lambda) Y^{i}X^{r-i}
        \\
     v\left(X,Y;\lambda\right) 
        & = \sum_{i=0}^s v_i(\lambda) Y^{i}X^{s-i}.
\end{align}
\begin{enumerate}
\item If $u_0=0$ then
\begin{equation}
    \frac{1}{Y}\left[u\left(X,Y;\lambda\right)\ast v\left(X,Y;\lambda\right)\right] = b^{s}\frac{u\left(X,Y;\lambda\right)}{Y}\ast v\left(X,Y;\lambda-1\right).\label{equation:inverseu=0}
\end{equation}
\item If $v_0=0$ then
\begin{equation}
    \frac{1}{Y}\left[u\left(X,Y;\lambda\right)\ast v\left(X,Y;\lambda\right)\right] = u\left(X, bY;\lambda\right)\ast \frac{v\left(X,Y;\lambda\right)}{Y}.\label{equation:inversev=0}
\end{equation}
\end{enumerate}
\end{lem}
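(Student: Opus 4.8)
The plan is to prove both identities by direct coefficient comparison from the definition of the negative-$q$-product (Definition \ref{q-proddefn}), mirroring the proof of Lemma \ref{lemma:littleuandv}. In either case the hypothesis forces the constant-in-$Y$ coefficient of $u\ast v$ to vanish: that coefficient is $b^{0}u_0(\lambda)v_0(\lambda)=u_0(\lambda)v_0(\lambda)$, which is $0$ when $u_0=0$ (part 1) or $v_0=0$ (part 2). Hence division by $Y$ is legitimate and amounts to a shift of index: the coefficient of $Y^{k}X^{r+s-1-k}$ in $\tfrac{1}{Y}[u\ast v]$ equals the coefficient of $Y^{k+1}$ in $u\ast v$, namely $\sum_{i=0}^{k+1}b^{is}u_i(\lambda)v_{k+1-i}(\lambda-i)$.

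For part (1), I would use $u_0=0$ to drop the $i=0$ term and put $j=i-1$, rewriting the sum as $b^{s}\sum_{j=0}^{k}b^{js}u_{j+1}(\lambda)\,v_{k-j}(\lambda-1-j)$. Writing $\tilde u(X,Y;\lambda)=u(X,Y;\lambda)/Y=\sum_{j\ge 0}u_{j+1}(\lambda)Y^{j}X^{r-1-j}$ and observing that $v_{k-j}(\lambda-1-j)$ is the $(k-j)$-th coefficient of $v(X,Y;\lambda-1)$ at internal argument $\lambda-j$, this sum is exactly the $Y^{k}X^{(r-1)+s-k}$-coefficient of $b^{s}\,\tilde u(X,Y;\lambda)\ast v(X,Y;\lambda-1)$, which is \eqref{equation:inverseu=0}. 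The $\lambda\mapsto\lambda-1$ on the right is forced precisely by this reindexing: lowering the $u$-index by one turns the internal offset $\lambda-i$ in the product into $(\lambda-1)-j$.

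For part (2), I would instead note $u(X,bY;\lambda)=\sum_i b^{i}u_i(\lambda)Y^{i}X^{r-i}$ and $v(X,Y;\lambda)/Y=\sum_{j\ge 0}v_{j+1}(\lambda)Y^{j}X^{s-1-j}$, so the $Y^{k}X^{r+s-1-k}$-coefficient of their negative-$q$-product is $\sum_{i=0}^{k}b^{i(s-1)}\bigl(b^{i}u_i(\lambda)\bigr)v_{k+1-i}(\lambda-i)=\sum_{i=0}^{k}b^{is}u_i(\lambda)v_{k+1-i}(\lambda-i)$; here the extra $b^{i}$ from scaling $Y\mapsto bY$ exactly compensates for the second factor's degree dropping from $s$ to $s-1$. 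On the other side $v_0=0$ kills the $i=k+1$ term of $\sum_{i=0}^{k+1}b^{is}u_i(\lambda)v_{k+1-i}(\lambda-i)$, leaving the identical sum, which gives \eqref{equation:inversev=0}.

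All of this is routine once the indexing is fixed; the one delicate point — and where a slip is most likely — is keeping the two sources of argument shift in part (1) straight, since the negative-$q$-product already carries the internal shift $v_{u-i}(\lambda-i)$ and dividing by $Y$ introduces another. I would therefore write part (1) out in full first and present part (2) as the easier analogue.
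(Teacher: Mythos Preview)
Your proposal is correct and follows essentially the same approach as the paper: both arguments expand the negative-$q$-product from Definition~\ref{q-proddefn} and match coefficients after the index shift induced by dividing out $Y$. The only cosmetic difference is that the paper starts from the right-hand side and manipulates it into $\tfrac{1}{Y}[u\ast v]$, while you start from $\tfrac{1}{Y}[u\ast v]$ and reindex toward the right-hand side; your handling of the $\lambda$-shift in part~(1) is in fact cleaner than the paper's written version.
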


\begin{proof}~\\
\begin{enumerate}
 \item[(1)] Suppose $u_0=0$. Then
\begin{equation}
    \frac{u\left(X,Y;\lambda\right)}{Y} = \sum_{i=0}^{r}u_i(\lambda) Y^{i-1}X^{r-i} = \sum_{i=0}^{r-1} u_{i+1}(\lambda)Y^{i}X^{r-i-1}
\end{equation}
Hence
\begin{align}
    b^{s}\frac{u\left(X,Y;\lambda\right)}{Y}\ast v\left(X,Y;\lambda-1\right) 
        & = b^{s}\sum^{r+s-1}_{u=0}\left(\sum_{\ell=0}^u b^{\ell s}u_{\ell+1}(\lambda)v_{u-\ell}(\lambda-u-1)\right) Y^{u}X^{r+s-1-u}
        \\
        & = b^{s}\sum_{u=0}^{r+s-1}\left(\sum_{i=1}^{u+1}b^{(i-1)s}u_i(\lambda)v_{u-i+1}(\lambda-u-1)\right)Y^{u}X^{r+s-1-u}
        \\
        & = b^{s}\sum_{j=1}^{r+s}\left(\sum_{i=1}^{j}b^{(i-1)s}u_i(\lambda)v_{j-i}(\lambda-j)\right)Y^{j-1}X^{r+s-j}
        \\
        & = \frac{1}{Y}\sum_{j=0}^{r+s}\left(\sum_{i=0}^{j}b^{is}u_i(\lambda)v_{j-i}(\lambda-j)\right)Y^{j}X^{r+s-j}
        \\
        & = \frac{1}{Y}\left(u\left(X,Y;\lambda\right)\ast v\left(X,Y;\lambda\right)\right).
\end{align}
 \item[(2)] Now if $v_0=0$, then
\begin{align}
    \frac{v\left(X,Y;\lambda\right)}{Y} 
        & = \sum_{j=1}^s v_j(\lambda)Y^{j-1}X^{s-j}
        \\
        & = \sum_{i=0}^{s-1} v_{i+1}(\lambda)Y^{i}X^{s-i-1}.
\end{align}
So, 
\begin{align}
    u\left(X,bY;\lambda\right) \ast \frac{v\left(X,Y;\lambda\right)}{Y} 
        & = \sum_{u=0}^{r+s-1}\left(\sum_{j=0}^{u} b^{j(s-1)}b^{j}u_j(\lambda)v_{u-j+1}(\lambda-j)\right)Y^{u}X^{r+s-1-u}
        \\
        & = \sum_{\ell=1}^{r+s}\left(\sum_{j=0}^{\ell-1} b^{js}u_j(\lambda)v_{\ell-j}(\lambda-j)\right)Y^{\ell-1}X^{r+s-\ell}
        \\
        & = \frac{1}{Y}\sum_{\ell=1}^{r+s}\left(\sum_{j=0}^{\ell} b^{js}u_j(\lambda)v_{\ell-j}(\lambda-j)\right)Y^{\ell}X^{r+s-\ell}
        \\
        & = \frac{1}{Y}\sum_{\ell=0}^{r+s}\left(\sum_{j=0}^{\ell} b^{js}u_j(\lambda)v_{\ell-j}(\lambda-j)\right)Y^{\ell}X^{r+s-\ell}
        \\
        & = \frac{1}{Y}\left(u\left(X,Y;\lambda\right)\ast v\left(X,Y;\lambda\right)\right).
\end{align}
\end{enumerate}
\end{proof}

\begin{thm}[Leibniz rule for the negative-$q^{-1}$-derivative]\label{Liebnizbminusderiv}
For two homogeneous polynomials in $Y$, $f(X,Y;\lambda)$ and $g(X,Y;\lambda)$ with degrees $r$ and $s$ respectively, the $\varphi^{th}$ (for $\varphi\geq0$) negative-$q^{-1}$-derivative of their negative-$q$-product is given by

\begin{equation}
    \left[ f\left(X,Y;\lambda\right)\ast g\left(X,Y;\lambda\right)\right]^{\{\varphi\}} = \sum_{\ell=0}^\varphi {\begin{bmatrix}\varphi \\ \ell \end{bmatrix}}b^{\ell(s-\varphi+\ell)}f^{\{\ell\}}\left(X,Y;\lambda\right)\ast g^{\{\varphi-\ell\}}\left(X,Y;\lambda-\ell\right).
\end{equation}
\end{thm}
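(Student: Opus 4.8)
The plan is to prove the Leibniz rule for the negative-$q^{-1}$-derivative by induction on $\varphi$, following the same strategy that worked for the negative-$q$-derivative in Theorem \ref{theorem:Leibniznegativeq}. The base case $\varphi=0$ is trivial, and $\varphi=1$ is the crucial computation: I would write $f(b^{-1}X,b^{-1}Y)\ast g(b^{-1}X,b^{-1}Y)$, insert and subtract the mixed term $f(X,b^{-1}Y)\ast g(b^{-1}X,b^{-1}Y)$ (or a similarly chosen intermediate), so that the difference quotient splits as a sum of two terms, each of which has one factor being a difference quotient of $f$ alone or of $g$ alone. Here the key technical inputs are Lemma \ref{lemma:negativeqminussimplification}: equation \eqref{equation:inversev=0} handles the term where the $Y$ in the outer factor is scaled (giving $f(X,bY;\lambda)\ast \frac{v}{Y}$ with no shift in $\lambda$), and equation \eqref{equation:inverseu=0} handles the term where dividing $u$ by $Y$ introduces both a factor $b^{s}$ and a shift $\lambda\mapsto\lambda-1$ in the second argument of $v$. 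The homogeneity of $g$ (degree $s$) converts $g(b^{-1}X,b^{-1}Y)=b^{-s}g(X,Y)$ where needed, and careful bookkeeping of the powers of $b$ should yield exactly $f^{\{1\}}\ast g + b^{s}\,f\ast g^{\{1\}}(\,\cdot\,;\lambda-1)$ — i.e. the $\varphi=1$ instance with the $b^{\ell(s-\varphi+\ell)}$ weight and the $\lambda-\ell$ shift.

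For the inductive step, assuming the formula for $\varphi=\overline{\varphi}$, I would apply one more negative-$q^{-1}$-derivative to both sides, use linearity, and apply the $\varphi=1$ case term by term to each $f^{\{\ell\}}\ast g^{\{\overline{\varphi}-\ell\}}(\,\cdot\,;\lambda-\ell)$. This produces a double sum which I then reindex so that the coefficient of $f^{\{\ell\}}\ast g^{\{\overline{\varphi}+1-\ell\}}(\,\cdot\,;\lambda-\ell)$ collects two contributions — one from differentiating the $f$-factor in the $(\ell-1)$ term and one from differentiating the $g$-factor in the $\ell$ term. The Pascal-type recurrence \eqref{equation:Stepdown1} or \eqref{equation:gaussiancoeffsx-1k-1} for the negative-$q$ Gaussian coefficients should then combine $\begin{bmatrix}\overline{\varphi}\\\ell\end{bmatrix}$ and $\begin{bmatrix}\overline{\varphi}\\\ell-1\end{bmatrix}$ into $\begin{bmatrix}\overline{\varphi}+1\\\ell\end{bmatrix}$, provided the accompanying powers of $b$ match up.

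**The hard part will be** the exponent bookkeeping: tracking how the weight $b^{\ell(s-\varphi+\ell)}$ evolves when $\varphi\to\varphi+1$ and when an extra $\{1\}$-derivative is applied, since each application of the negative-$q^{-1}$-derivative to $g^{\{j\}}$ (which has $Y$-degree $s-j$ when acting on the degree-$s$ factor, or a shifted degree after the $f$-derivatives are stripped off) pulls out a factor $b^{1-(\text{degree})}$ that must be reconciled with the target weight, and each shift $\lambda\mapsto\lambda-\ell$ must propagate consistently through the $b^{s}$ factors coming from \eqref{equation:inverseu=0}. I would first verify the $\varphi=1$ case completely and in particular pin down exactly which intermediate term to insert in the difference quotient so that the $\lambda$-shifts land on the $g$-factor and not the $f$-factor; once that is settled, the induction is a matched pair of reindexing and one Gaussian-coefficient identity, entirely parallel to the proof of Theorem \ref{theorem:Leibniznegativeq}.
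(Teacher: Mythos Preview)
Your approach---induction on $\varphi$, with the $\varphi=1$ base case established by inserting and subtracting a mixed term and invoking Lemma~\ref{lemma:negativeqminussimplification}, and the inductive step closed via the Pascal recursion \eqref{equation:gaussiancoeffsx-1k-1}---is exactly the paper's. Two small corrections to your write-up: the negative-$q^{-1}$-derivative scales only $Y$, not $X$ (so the starting expression is $f(X,b^{-1}Y;\lambda)\ast g(X,b^{-1}Y;\lambda)$, and no appeal to homogeneity in both variables is needed); and your stated $\varphi=1$ outcome has the two terms swapped---the correct result, which is precisely what your own description of equations \eqref{equation:inverseu=0} and \eqref{equation:inversev=0} predicts, is
\[
[f\ast g]^{\{1\}} \;=\; f\ast g^{\{1\}}(\lambda) \;+\; b^{s}\,f^{\{1\}}\ast g(\lambda-1),
\]
with the $b^{s}$ weight and the $\lambda\mapsto\lambda-1$ shift attached to the term where $f$ is differentiated (coming from \eqref{equation:inverseu=0}), not where $g$ is.
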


\begin{proof}
Firstly let,
\begin{align}
    f\left(X,Y;\lambda\right) 
        & = \sum_{i=0}^r f_i(\lambda) Y^{i}X^{r-i}
        \\
    u\left(X,Y;\lambda\right) 
        & = \sum_{i=0}^r u_i(\lambda) Y^{i}X^{r-i}
        \\
    g\left(X,Y;\lambda\right) 
        & = \sum_{i=0}^s g_i(\lambda) Y^{i}X^{s-i}
        \\
    v\left(X,Y;\lambda\right) 
        & = \sum_{i=0}^s v_i(\lambda) Y^{i}X^{s-i}.
\end{align}

For simplification we shall write $f(X,Y;\lambda),~g(X,Y;\lambda)$ as $f(Y;\lambda),~g(Y;\lambda)$. Now by differentiation we have
\begin{align}
    \left[ f\left(Y;\lambda\right)\ast g\left(Y;\lambda\right)\right]^{\{1\}} 
        & = \frac{f\left(b^{-1}Y;\lambda\right)\ast g\left(b^{-1}Y;\lambda\right)-f\left(Y;\lambda\right)\ast g\left(Y;\lambda\right)}{(b^{-1}-1)Y}
        \\
        & = \frac{1}{(b^{-1}-1)Y} \bigg\{ f\left(b^{-1}Y;\lambda\right)\ast g\left(b^{-1}Y;\lambda\right)-f\left(b^{-1}Y;\lambda\right)\ast g\left(Y;\lambda\right)
        \\
        & \hspace{1cm} + f\left(b^{-1}Y;\lambda\right)\ast g\left(Y;\lambda\right) - f\left(Y;\lambda\right)\ast g\left(Y;\lambda\right)\bigg\}
        \\
        & = \frac{1}{(b^{-1}-1)Y}\bigg\{ f\left(b^{-1}Y;\lambda\right)\ast\left(g\left(b^{-1}Y;\lambda\right)-g\left(Y;\lambda\right)\right)\bigg\}
        \\
        & \hspace{1cm} +\frac{1}{(b^{-1}-1)Y}\bigg\{\left(f\left(b^{-1}Y;\lambda\right)-f\left(Y;\lambda\right)\right)\ast g\left(Y;\lambda\right)\bigg\}
        \\
        \overset{\eqref{equation:inversev=0}}&{=} f\left(Y;\lambda\right)\ast\frac{\left(g\left(b^{-1}Y;\lambda\right)-g\left(Y;\lambda\right)\right)}{\left(b^{-1}-1\right)Y}
        \\
        &\hspace{1cm}\overset{\eqref{equation:inverseu=0}}{+}b^{s}\frac{\left(f\left(b^{-1}Y;\lambda\right)-f\left(Y;\lambda\right)\right)}{\left(b^{-1}-1\right)Y}\ast g\left(Y;\lambda-1\right)
        \\
        & = f\left(Y;\lambda\right)\ast g^{\{1\}}\left(Y;\lambda\right) + b^{s} f^{\{1\}}\left(Y;\lambda\right)\ast g\left(Y;\lambda-1\right).
\end{align}
So the initial case holds. Assume the statement holds true for $\varphi=\overline{\varphi}$, i.e.

\begin{equation}
    \left[ f\left(X,Y;\lambda\right)\ast g\left(X,Y;\lambda\right)\right]^{\{\overline{\varphi}\}} = \sum_{\ell=0}^{\overline{\varphi}} {\begin{bmatrix}\overline{\varphi} \\ \ell \end{bmatrix}}b^{\ell(s-\overline{\varphi}+\ell)}f^{\{\ell\}}\left(X,Y;\lambda\right)\ast g^{\{\overline{\varphi}-\ell\}}\left(X,Y;\lambda-2r\right).
\end{equation}
Now considering $\overline{\varphi}+1$  and for simplicity we write $f(X,Y;\lambda),~g(X,Y;\lambda)$ as $f(\lambda),g(\lambda)$ we have

\begin{align}
    \left[ f\left(\lambda\right)\ast g\left(\lambda\right)\right]^{\{\overline{\varphi}+1\}} 
        & = \left[\sum_{\ell=0}^{\overline{\varphi}} {\begin{bmatrix}\overline{\varphi} \\ \ell \end{bmatrix}}b^{\ell(s-\overline{\varphi}+\ell)}f^{\{\ell\}}\left(\lambda\right)\ast g^{\{\overline{\varphi}-\ell\}}\left(\lambda-\ell\right)\right]^{\{1\}}
        \\
        & = \sum_{\ell=0}^{\overline{\varphi}} {\begin{bmatrix}\overline{\varphi} \\ \ell \end{bmatrix}}b^{l(s-\overline{\varphi}+\ell)}\left(f^{\{\ell\}}\left(\lambda\right)\ast g^{\{\overline{\varphi}-\ell\}}\left(\lambda-\ell\right)\right)^{\{1\}}
        \\
        \overset{\text{Def. }\ref{q-proddefn}}&{=} \sum_{\ell=0}^{\overline{\varphi}} {\begin{bmatrix}\overline{\varphi} \\ \ell \end{bmatrix}}b^{\ell(s-\overline{\varphi}+\ell)}f^{\{\ell\}}\left(\lambda\right)\ast g^{\{\overline{\varphi}-\ell+1\}}\left(\lambda-\ell\right)
        \\
        & \hspace{1cm}+ \sum_{\ell=0}^{\overline{\varphi}} {\begin{bmatrix}\overline{\varphi} \\ \ell \end{bmatrix}}b^{\ell(s-\overline{\varphi}+\ell)}b^{(v-\overline{\varphi}+\ell)}f^{\{\ell+1\}}\left(\lambda\right)\ast g^{\{\overline{\varphi}-\ell\}}\left(\lambda-\ell-1\right)
        \\
        & = \sum_{\ell=0}^{\overline{\varphi}} {\begin{bmatrix}\overline{\varphi} \\ \ell \end{bmatrix}}b^{\ell(s-\overline{\varphi}+\ell)}f^{\{\ell\}}\left(\lambda\right)\ast g^{\{\overline{\varphi}-\ell+1\}}\left(\lambda-\ell\right)
        \\
        &  \hspace{2cm}+ \sum_{\ell=1}^{\overline{\varphi}+1} {\begin{bmatrix}\overline{\varphi} \\ \ell-1 \end{bmatrix}}b^{(\ell-1)(s-\overline{\varphi}+\ell-1)}b^{(s-\overline{\varphi}+(\ell-1))}f^{\{\ell\}}\left(\lambda\right)\ast g^{\{\overline{\varphi}-k+1\}}\left(\lambda-\ell\right)
        \\
        & = f\left(\lambda\right)\ast g^{\{\overline{\varphi}+1\}}\left(\lambda\right)+\sum_{\ell=1}^{\overline{\varphi}} {\begin{bmatrix}\overline{\varphi} \\ \ell \end{bmatrix}}b^{\ell(s-\overline{\varphi}+\ell)}f^{\{\ell\}}\left(\lambda\right)\ast g^{\{\overline{\varphi}-\ell+1\}}\left(\lambda-\ell\right)
        \\
        &  \hspace{1cm}+ \sum_{\ell=1}^{\overline{\varphi}} {\begin{bmatrix}\overline{\varphi} \\ \ell-1 \end{bmatrix}}b^{(\ell-1)(s-\overline{\varphi}+\ell-1)}b^{(s-\overline{\varphi}+(\ell-1))}f^{\{\ell\}}\left(\lambda\right)\ast g^{\{\overline{\varphi}-k+1\}}\left(\lambda-\ell\right)
        \\
        &  \hspace{2cm}+ {\begin{bmatrix}\overline{\varphi}\\\overline{\varphi}\end{bmatrix}}b^{(\overline{\varphi}+1)(s+1)}b^{-\overline{\varphi}-1} f^{\{\overline{\varphi}+1\}}\left(\lambda\right)\ast g\left(\lambda-(\overline{\varphi}+1)\right)
        \\
        & = f\left(\lambda\right)\ast g^{\{\overline{\varphi}+1\}}\left(\lambda\right)+ \sum_{\ell=1}^{\overline{\varphi}}\left({\begin{bmatrix}\overline{\varphi}\\\ell\end{bmatrix}}+b^{-\ell}{\begin{bmatrix}\overline{\varphi}\\\ell-1\end{bmatrix}}\right)b^{\ell(s-\overline{\varphi}+\ell)}f^{\{\ell\}}\left(\lambda\right) \ast g^{\{\overline{\varphi}+1-\ell\}}\left(\lambda-\ell\right)
        \\
        &  \hspace{1cm} + b^{s(\overline{\varphi}+1)}b^{-\overline{\varphi}-1}f^{\{\overline{\varphi}+1\}}\left(\lambda\right) \ast g\left(\lambda-(\overline{\varphi}+1)\right)
        \\
        \overset{\eqref{equation:gaussiancoeffsx-1k-1}}&{=} f\left(\lambda\right)\ast g^{\{\overline{\varphi}+1\}}\left(\lambda\right) + \sum_{\ell=1}^{\overline{\varphi}}b^{-\ell}{\begin{bmatrix}\overline{\varphi}+1\\\ell\end{bmatrix}}f^{\{\ell\}}\left(\lambda\right)\ast g^{\{\overline{\varphi}+1\}}\left(\lambda-\ell\right)
        \\
        &  \hspace{1cm} + b^{-(\overline{\varphi}+1)}{\begin{bmatrix}\overline{\varphi}+1\\\overline{\varphi}+1\end{bmatrix}}b^{(\overline{\varphi}+1)(s-\overline{\varphi}(\overline{\varphi}+1))}f^{\{\overline{\varphi}+1\}}\left(\lambda\right)\ast g^{\{\overline{\varphi}+1-(\overline{\varphi}+1)\}}\left(\lambda-(\overline{\varphi}+1)\right)
        \\
        & = \sum_{\ell=0}^{\overline{\varphi}+1}{\begin{bmatrix}\overline{\varphi}+1\\\ell\end{bmatrix}}b^{\ell(s-(\overline{\varphi}+1)+\ell)}f^{\{\ell\}}\left(\lambda\right) \ast g^{\{\overline{\varphi}+1-\ell\}}\left(\lambda-\ell\right)
\end{align}
as required.
\end{proof}

\subsection{Evaluating the Negative-$q$-Derivative and the Negative-$q^{-1}$-Derivative}

The following lemmas yield useful results for applying the MacWilliams Identity to develop moments of the negative rank distribution.

\begin{lem}\label{lemma:d'sandbetab's}
For $X=Y=1$,
\begin{equation}\label{equation:d'snadbetab's}
    \nu^{[j](\ell)}(1,1;\lambda)=\beta(j,j)\delta_{j\ell}.
\end{equation}
\end{lem}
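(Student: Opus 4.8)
The plan is to reduce the statement to two facts already in hand: the formula for the $\varphi^{\text{th}}$ negative-$q$-derivative of $\nu^{[k]}$ from Lemma~\ref{lemma:negativeqderiv}, part (3), namely~\eqref{equation:nuderiv}, and the explicit expansion of $\nu^{[k]}$ from Theorem~\ref{lemma:nulemma}. First I would apply~\eqref{equation:nuderiv} with $k=j$ and $\varphi=\ell$, giving
\[
\nu^{[j](\ell)}(X,Y;\lambda)=\beta(j,\ell)\,\nu^{[j-\ell]}(X,Y;\lambda)\qquad(0\le\ell\le j).
\]
For $\ell>j$ there is nothing to prove: the left-hand side vanishes as an $\ell$-fold negative-$q$-derivative of a polynomial of degree $j<\ell$, while $\beta(j,\ell)=\prod_{i=0}^{\ell-1}\begin{bmatrix}j-i\\1\end{bmatrix}$ contains the zero factor $\begin{bmatrix}0\\1\end{bmatrix}$, and $\delta_{j\ell}=0$ in that regime as well. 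So it remains to evaluate $\nu^{[j-\ell]}$ at $X=Y=1$ in the range $0\le\ell\le j$.

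Next I would show that $\nu^{[m]}(1,1;\lambda)=\delta_{0m}$ for every $m\ge 0$. The case $m=0$ is immediate since $\nu^{[0]}\equiv 1$ by definition of the negative-$q$-power. For $m\ge 1$, Theorem~\ref{lemma:nulemma} gives $\nu^{[m]}(1,1;\lambda)=\sum_{u=0}^{m}(-1)^{u}b^{\sigma_{u}}\begin{bmatrix}m\\u\end{bmatrix}$, and specialising the orthogonality identity~\eqref{equation:deltaijbs} to $i=0$ (so that $\begin{bmatrix}u\\0\end{bmatrix}=1$ and $\sigma_{u-0}=\sigma_{u}$) yields precisely $\sum_{u=0}^{m}(-1)^{u}b^{\sigma_{u}}\begin{bmatrix}m\\u\end{bmatrix}=\delta_{0m}$.

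Finally I would combine the two steps: for $0\le\ell\le j$,
\[
\nu^{[j](\ell)}(1,1;\lambda)=\beta(j,\ell)\,\nu^{[j-\ell]}(1,1;\lambda)=\beta(j,\ell)\,\delta_{0,j-\ell}=\beta(j,\ell)\,\delta_{j\ell},
\]
and since the Kronecker delta already forces $j=\ell$ in the non-vanishing case, $\beta(j,\ell)$ may be replaced by $\beta(j,j)$, which is exactly~\eqref{equation:d'snadbetab's}. I do not anticipate a genuine obstacle here; the only points requiring care are the degenerate case $\ell>j$ and correctly matching indices when invoking~\eqref{equation:deltaijbs} with $i=0$.
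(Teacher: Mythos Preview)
Your proposal is correct and follows essentially the same route as the paper: apply~\eqref{equation:nuderiv} to reduce to $\beta(j,\ell)\,\nu^{[j-\ell]}$, expand $\nu^{[j-\ell]}$ via Theorem~\ref{lemma:nulemma}, and then invoke the orthogonality relation~\eqref{equation:deltaijbs} with $i=0$. Your version is in fact slightly more complete, since you explicitly address the degenerate case $\ell>j$ and the replacement of $\beta(j,\ell)$ by $\beta(j,j)$, which the paper leaves implicit.
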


\begin{proof}
Consider 
\begin{align}
    \nu^{[j](\ell)}(X,Y;\lambda) 
        \overset{\eqref{equation:nuderiv}}&{=}  \beta(j,\ell)\nu^{[j-\ell]}(X,Y;\lambda)
        \\
        & = \beta(j,\ell)\sum_{u=0}^{j-\ell}(-1)^ub^{\sigma_u}{\begin{bmatrix}j-\ell\\u\end{bmatrix}}Y^{u}X^{(j-\ell)-u}.
\end{align}
So \begin{equation}
    \nu^{[j](\ell)}(1,1;\lambda) = \beta(j,\ell)\sum_{u=0}^{j-\ell}(-1)^ub^{\sigma_u}{\begin{bmatrix}j-\ell\\u\end{bmatrix}}.
\end{equation}
Now the rest of the proof follows directly from Equation \eqref{equation:deltaijbs}. 

\end{proof}

\begin{lem}\label{lemma:rhoandmu}
For any homogeneous polynomial, $\rho\left(X,Y;\lambda\right)$ and for any $s\geq 0$, \begin{equation}\label{equation:rhoandmu}
\left(\rho \ast \mu^{[s]}\right)\left(1,1;\lambda\right) =(-1)^sb^{\lambda s}\rho(1,1;\lambda).
\end{equation}
\end{lem}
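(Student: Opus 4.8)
The plan is to expand both factors in terms of their coefficients, collapse the evaluation at $X=Y=1$ into a single sum over the coefficients of $\mu^{[s]}$, and then recognise that sum from identity \eqref{equation:producttosumgauss}.

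First I would write $\rho(X,Y;\lambda)=\sum_{i=0}^{r}\rho_i(\lambda)Y^iX^{r-i}$ and recall from \eqref{equation:muformula} that $\mu^{[s]}(X,Y;\lambda)=\sum_{j=0}^{s}\begin{bmatrix}s\\j\end{bmatrix}\gamma(\lambda,j)Y^jX^{s-j}$. Applying Definition \ref{q-proddefn} with first factor $\rho$ (of degree $r$) and second factor $\mu^{[s]}$ (of degree $s$), the coefficient of $Y^uX^{r+s-u}$ in $\rho\ast\mu^{[s]}$ is $\sum_{i=0}^{u}b^{is}\rho_i(\lambda)\begin{bmatrix}s\\u-i\end{bmatrix}\gamma(\lambda-i,u-i)$. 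Since evaluating at $X=Y=1$ is the same as summing all these coefficients over $u$, interchanging the order of summation so that $i$ runs over $0,\dots,r$ and $j=u-i$ runs over $0,\dots,s$ gives
\[
\left(\rho\ast\mu^{[s]}\right)(1,1;\lambda)=\sum_{i=0}^{r}b^{is}\rho_i(\lambda)\sum_{j=0}^{s}\begin{bmatrix}s\\j\end{bmatrix}\gamma(\lambda-i,j)=\sum_{i=0}^{r}b^{is}\rho_i(\lambda)\,\mu^{[s]}(1,1;\lambda-i).
\]

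The crux is therefore the sub-claim $\mu^{[s]}(1,1;\lambda)=(-1)^sb^{\lambda s}$ for all $s\ge 0$. Since $\mu^{[s]}(1,1;\lambda)=\sum_{j=0}^{s}\begin{bmatrix}s\\j\end{bmatrix}\gamma(\lambda,j)$ and, by the definition of the negative-$q$ Gamma function, $\gamma(\lambda,j)=\prod_{i=0}^{j-1}\bigl((-b^\lambda)-b^i\bigr)$, this is precisely identity \eqref{equation:producttosumgauss} with $y=-b^\lambda$, which evaluates to $(-b^\lambda)^s=(-1)^sb^{\lambda s}$. (One could instead induct on $s$ via $\mu^{[s]}=\mu\ast\mu^{[s-1]}$ and the recursion $\mu^{[s]}(1,1;\lambda)=\mu^{[s-1]}(1,1;\lambda)+b^{s-1}(-b^\lambda-1)\,\mu^{[s-1]}(1,1;\lambda-1)$, but the direct appeal to \eqref{equation:producttosumgauss} is cleaner.)

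Substituting $\mu^{[s]}(1,1;\lambda-i)=(-1)^sb^{(\lambda-i)s}$ into the displayed identity, the factor $b^{is}$ produced by the negative-$q$-product exactly cancels the $b^{-is}$ coming from this evaluation, leaving
\[
\left(\rho\ast\mu^{[s]}\right)(1,1;\lambda)=(-1)^sb^{\lambda s}\sum_{i=0}^{r}\rho_i(\lambda)=(-1)^sb^{\lambda s}\,\rho(1,1;\lambda),
\]
as required. I expect the only delicate point to be the index bookkeeping in the swap of summations — in particular checking that the boundary terms, where $\rho_i$ or the Gaussian coefficient $\begin{bmatrix}s\\u-i\end{bmatrix}$ vanishes, contribute nothing — after which the result drops out by substitution.
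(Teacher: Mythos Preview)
Your proposal is correct and follows essentially the same route as the paper: expand both factors, sum all coefficients at $X=Y=1$, swap the order of summation, and evaluate the inner sum $\sum_{j=0}^{s}\begin{bmatrix}s\\j\end{bmatrix}\gamma(\lambda-i,j)$ via identity \eqref{equation:producttosumgauss} with $y=-b^{\lambda-i}$, after which the $b^{is}$ from the product cancels against $b^{-is}$. The only cosmetic difference is that the paper keeps the inner sum written abstractly as $\sum_k \mu_k^{[s]}(\lambda-j)$ before invoking \eqref{equation:producttosumgauss}, whereas you name it $\mu^{[s]}(1,1;\lambda-i)$ first; the boundary bookkeeping you flag is exactly what the paper handles implicitly by using that $\rho_i=0$ for $i>r$ and $\begin{bmatrix}s\\k\end{bmatrix}=0$ for $k>s$.
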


\begin{proof}
Let $\rho\left(X,Y;\lambda\right)=\displaystyle\sum_{i=0}^r \rho_i(\lambda)Y^{i}X^{r-i}$, then from Theorem \ref{equation:muformula}
\begin{equation}
    \mu^{[s]}(X,Y;\lambda) = \sum_{t=0}^s \mu^{[s]}_t(\lambda)Y^{t}X^{s-t} = \sum_{t=0}^{s}{\begin{bmatrix}s\\t\end{bmatrix}}\gamma(\lambda,t)Y^{t}X^{s-t} 
\end{equation}
and
\begin{equation}
    \left(\rho\ast \mu^{[s]}\right)(X,Y;\lambda) = \sum_{u=0}^{r+s}c_u(\lambda)Y^{u}X^{(r+s-u)}
\end{equation}
where
\begin{equation}
    c_u(\lambda)=\sum_{i=0}^u b^{is}\rho_i(\lambda)\mu^{[s]}_{u-i}(\lambda-i).
\end{equation}
Then
\begin{align}
    \left(\rho\ast \mu^{[s]}\right)(1,1;\lambda) 
        & = \sum_{u=0}^{r+s} c_u(\lambda)
        \\
        & = \sum_{u=0}^{r+s}\sum_{i=0}^{u} b^{is}\rho_i(\lambda)\mu^{[s]}_{u-i}(\lambda-i)
        \\
        & = \sum_{j=0}^{r+s}b^{js}\rho_j(\lambda)\left(\sum_{k=0}^{r+s-j}\mu_k^{[s]}(\lambda-j)\right)
        \\
        & = \sum_{j=0}^{r}b^{js}\rho_j(\lambda)\left(\sum_{k=0}^s \mu_k^{[s]}(\lambda-j)\right)
        \\
        & = \sum_{j=0}^r b^{js} \rho_j(\lambda)\left(\sum_{k=0}^s {\begin{bmatrix}s\\k\end{bmatrix}}\gamma(\lambda-j,k)\right)
        \\
        \overset{\eqref{equation:producttosumgauss}}&{=} \sum_{j=0}^r b^{js} \rho_j(\lambda)\left(-b^{\lambda-j}\right)^s
        \\
        & = (-1)^sb^{\lambda s}\rho(1,1;\lambda).
\end{align}
\end{proof}

\section{Moments of the Hermitian Rank Distribution}\label{section:moments}

Here we explore the moments of the Hermitian rank distribution of a subgroup of hermitian forms over $\mathbb{F}_{q^2}$ and that of it's dual. Similar results for the Hamming metric were derived in \cite[p131]{TheoryofError} and for rank metric codes over $\mathbb{F}_{q^m}$ in \cite[Prop 4]{gadouleau2008macwilliams} and for the Skew rank metric is \cite[Section 6]{friedlander2023macwilliams}.

\subsection{Moments derived from the Negative-$q$-Derivative}
The following proposition is obtained in the proof of \cite[Theorem 1]{KaiHermitian} by Kai-Uwe Schmidt, by combining the eigenvalues of the association scheme \cite[(5)]{KaiHermitian} with the entries of the dual inner distribution \cite[(7)]{KaiHermitian}. In this paper an alternative method for deriving the moments is presented using the MacWilliams Identity and the negative-$q$-derivative.
\begin{prop}\label{prop:momentsbderiv}
For $0 \le \varphi \le n, ~(-q)=b,$ and a linear code $\mathscr{C} \subseteq \mathscr{H}_{q,t}$ and its dual $\mathscr{C}^\perp$ with weight distributions ${\boldsymbol{c}}$ and ${\boldsymbol{ c'}}$, respectively we have
\begin{equation}
    \sum_{i=0}^{t-\varphi}{\begin{bmatrix} t-i \\ \varphi\end{bmatrix}}c_i = \frac{1}{|\mathscr{C}^\perp|}\left(-b^{t}\right)^{t-\varphi}\sum_{i=0}^{\varphi} {\begin{bmatrix}t-i \\ t-\varphi\end{bmatrix}}c_i^{'}.
\end{equation}
\end{prop}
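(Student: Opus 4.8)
The plan is to apply the MacWilliams Identity (Theorem \ref{mainthm1}) and then extract the desired moment by applying the negative-$q$-derivative $\varphi$ times with respect to $X$ and evaluating at $X=Y=1$, mirroring the approach used for the rank metric in \cite{gadouleau2008macwilliams} and the skew-rank metric in \cite{friedlander2023macwilliams}.

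First I would write $W_{\mathscr{C}^\perp}^R(X,Y) = \frac{1}{|\mathscr{C}|}\overline{W}_\mathscr{C}^R(X+(-b^t-1)Y, X-Y)$, which by Definition \ref{defn:negativeqtransform} and the computation inside the proof of Theorem \ref{mainthm1} equals $\frac{1}{|\mathscr{C}|}\sum_{i=0}^t c_i\, \nu^{[i]}\ast\mu^{[t-i]}$ where $\mu$ and $\nu$ are evaluated with parameter $\lambda=t$. I would then take the $\varphi^{th}$ negative-$q$-derivative of both sides with respect to $X$. On the left, applying Lemma \ref{lemma:negativeqderiv}(2) to $W_{\mathscr{C}^\perp}^R(X,Y)=\sum_j c_j' Y^j X^{t-j}$ gives $\sum_{j=0}^{t-\varphi} c_j'\,\beta(t-j,\varphi)Y^j X^{t-j-\varphi}$, and evaluating at $X=Y=1$ yields $\sum_{j=0}^{t-\varphi}\beta(t-j,\varphi)c_j'$; rewriting $\beta(t-j,\varphi)=\begin{bmatrix}t-j\\\varphi\end{bmatrix}\beta(\varphi,\varphi)$ via \eqref{equation:betabstartdifferent} produces the shape $\beta(\varphi,\varphi)\sum_j\begin{bmatrix}t-j\\\varphi\end{bmatrix}c_j'$. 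On the right, I would use the Leibniz rule (Theorem \ref{theorem:Leibniznegativeq}) to differentiate the product $\nu^{[i]}\ast\mu^{[t-i]}$, then apply Lemma \ref{lemma:negativeqderiv}(3) to turn the derivatives of $\nu^{[i]}$ and $\mu^{[t-i]}$ back into lower powers times $\beta$-factors, and finally use Lemma \ref{lemma:d'sandbetab's} (which forces a single term $\ell=i$ after the $\nu$-factor is fully differentiated) together with Lemma \ref{lemma:rhoandmu} to evaluate the surviving $\mu$-power at $(1,1;t)$, which contributes the factor $(-1)^{\,\cdot}b^{t\,\cdot}$.

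The main obstacle I anticipate is bookkeeping the exponents: matching the power of $b$ coming out of the Leibniz weight $b^{(\varphi-\ell)(r-\ell)}$, the $b^{-j}$-type shifts, the $\beta$-factor identities, and the $(-1)^s b^{\lambda s}$ from Lemma \ref{lemma:rhoandmu} so that after cancelling $\beta(\varphi,\varphi)$ from both sides one is left exactly with the claimed $\left(-b^t\right)^{t-\varphi}$ and the two negative-$q$ Gaussian coefficients $\begin{bmatrix}t-i\\\varphi\end{bmatrix}$ on the left and $\begin{bmatrix}t-i\\t-\varphi\end{bmatrix}$ on the right. In particular, since $\begin{bmatrix}t-i\\\varphi\end{bmatrix}=\begin{bmatrix}t-i\\t-i-\varphi\end{bmatrix}$ by \eqref{equation:gaussianxx-k} rather than $\begin{bmatrix}t-i\\t-\varphi\end{bmatrix}$, I expect the two sides' Gaussian coefficients to arise from genuinely different differentiations (differentiating $W_\mathscr{C}^R$ directly versus differentiating the transform), so I would double-check that the index ranges $0\le i\le t-\varphi$ on the left and $0\le i\le\varphi$ on the right emerge correctly from the vanishing of $\beta(r-i,\varphi)$ when $r-i<\varphi$ and from Lemma \ref{lemma:d'sandbetab's}'s Kronecker delta.

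Finally I would substitute $|\mathscr{C}|\,|\mathscr{C}^\perp| = |\mathscr{H}_{q,t}| = q^{t^2}$ (from Theorem \ref{theorem:vectorspaceHermitian}) to replace $1/|\mathscr{C}|$ by $|\mathscr{C}^\perp|/q^{t^2}$ if needed, though more likely the statement as written already has $1/|\mathscr{C}^\perp|$ on the right because the roles of $\mathscr{C}$ and $\mathscr{C}^\perp$ get swapped when one starts instead from $W_\mathscr{C}^R = \frac{1}{|\mathscr{C}^\perp|}\overline{W}_{\mathscr{C}^\perp}^R(\dots)$; I would begin from whichever of the two dual forms puts $c_i$ inside the transform and $c_i'$ outside, so that the $\varphi$-fold $X$-derivative lands the plain moment $\sum\begin{bmatrix}t-i\\\varphi\end{bmatrix}c_i$ on the left and the transformed moment on the right, matching the statement exactly.
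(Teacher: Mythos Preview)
Your plan is essentially the paper's own proof: apply the MacWilliams identity, take the $\varphi$-fold negative-$q$-derivative in $X$, use Lemma~\ref{lemma:negativeqderiv}(2) on the plain side, and on the transform side use the Leibniz rule (Theorem~\ref{theorem:Leibniznegativeq}), Lemma~\ref{lemma:negativeqderiv}(3), Lemma~\ref{lemma:d'sandbetab's} to collapse to the single term $\ell=i$, and Lemma~\ref{lemma:rhoandmu} to evaluate the surviving $\mu$-power at $(1,1)$, then cancel $\beta(\varphi,\varphi)$.

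One small slip to fix: in your final paragraph you say you want the version that ``puts $c_i$ inside the transform and $c_i'$ outside'' so that the plain moment in $c_i$ lands on the left. That is backwards---if $c_i$ is inside the transform it is on the \emph{right}. The paper (and the statement as written) starts from $W_\mathscr{C}^R=\frac{1}{|\mathscr{C}^\perp|}\overline{W}_{\mathscr{C}^\perp}^R(\ldots)$, i.e.\ MacWilliams applied to $\mathscr{C}^\perp$, so that $c_i'$ sits inside the transform and the direct $\varphi$-fold derivative of the left side $\sum_i c_i Y^i X^{t-i}$ produces $\sum_i\begin{bmatrix}t-i\\\varphi\end{bmatrix}c_i$. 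With that orientation your bookkeeping goes through exactly as the paper's does, and no appeal to $|\mathscr{C}|\,|\mathscr{C}^\perp|=q^{t^2}$ is needed.
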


\begin{proof}
We apply Theorem \ref{mainthm1} to $\mathscr{C}^\perp$ to get
\begin{equation}
    W_{\mathscr{C}}^{R}(X,Y)=\frac{1}{\left| \mathscr{C}^\perp\right|}\overline{W}_{\mathscr{C}^\perp}^{R}\left( X +(-b^t-1)Y, X-Y\right)
\end{equation}
or equivalently
\begin{align}
    \sum_{i=0}^t c_i Y^{i}X^{t-i}
        & =\frac{1}{\left\vert \mathscr{C}^{\perp}\right|}\sum_{i=0}^t c_i'\left(X-Y\right)^{[i]}\ast \left[X +(-b^t-1)Y\right]^{[t-i]}
        \\
        & = \frac{1}{\left\vert \mathscr{C}^{\perp}\right|} \sum_{i=0}^t c_i'\nu^{[i]}(X,Y;t)\ast \mu^{[t-i]}(X,Y;t). \label{equation:negative-q}
\end{align}

For each side of Equation \eqref{equation:negative-q}, we shall apply the negative-$q$-derivative $\varphi$ times and then evaluate at $X=Y=1$.

For the left hand side, we obtain
\begin{equation}
    \left(\sum_{i=0}^t c_i Y^{i}X^{n-i}\right)^{(\varphi)}=
        \sum_{i=0}^{t-\varphi}c_i \beta(t-i,\varphi)Y^{i}X^{t-i-\varphi}
\end{equation}
from Equation \eqref{equation:vthbderivative}. Putting $X=Y=1$ we then have
\begin{align}
    \sum_{i=0}^{t-\varphi} c_i\beta(t-i,\varphi)
        \overset{\eqref{equation:betabstartdifferent}}&{=} \sum_{i=0}^{t-\varphi}c_i{\begin{bmatrix}t-i \\ \varphi\end{bmatrix}}\beta(\varphi,\varphi)
        \\ 
        & = \beta(\varphi,\varphi)\sum_{i=0}^{t-\varphi}c_i {\begin{bmatrix}t-i \\ \varphi\end{bmatrix}}.
\end{align}

We now move on to the right hand side. For simplicity we write $\mu(X,Y;t)$ as $\mu$ and similarly for $\nu(X,Y;t)$. We also have
\begin{align}
    \left(\frac{1}{\left\vert \mathscr{C}^\perp\right|}\sum_{i=0}^t c_i' \nu^{[i]}\ast \mu^{[t-i]}\right)^{(\varphi)} 
        \overset{\eqref{equation:leibniznegativeq}}&{=} \frac{1}{\left\vert \mathscr{C}^\perp\right|}\sum_{i=0}^t c_i'\left(\sum_{\ell=0}^{\varphi}{\begin{bmatrix}\varphi\\\ell\end{bmatrix}}b^{(\varphi-\ell)(i-\ell)}\nu^{[i](\ell)}\ast \mu^{[t-i](\varphi-\ell)}\right)
        \\
        & = \frac{1}{\left\vert \mathscr{C}^\perp\right|}\sum_{i=0}^t c_i'\psi_i.
\end{align}
Then with $X=Y=1$,
\begin{align}
    \psi_i(X,Y;t) 
        & = \sum_{\ell=0}^{\varphi}{\begin{bmatrix}\varphi\\\ell\end{bmatrix}}b^{(\varphi-\ell)(i-\ell)}\nu^{[i](\ell)}(X,Y;t) \ast \mu^{[t-i](\varphi-\ell)}(X,Y;t)
        \\
    \psi_i(1,1;t) 
        \overset{\eqref{equation:muderiv}}&{=} \sum_{\ell=0}^{\varphi} {\begin{bmatrix}\varphi\\\ell\end{bmatrix}}b^{(\varphi-\ell)(i-\ell)}\beta(t-i,\varphi-\ell)\left(\nu^{[i](\ell)}\ast \mu^{[t-i-\varphi+\ell]}\right)(1,1;t)
        \\
        \overset{\eqref{equation:rhoandmu}}&{=} \sum_{\ell=0}^{\varphi} {\begin{bmatrix}\varphi\\\ell\end{bmatrix}}b^{(\varphi-\ell)(i-\ell)}\beta(t-i,\varphi-\ell)\left(-b^{t}\right)^{t-i-(\varphi-\ell)}\nu^{[i](\ell)}(1,1;t)
        \\
        \overset{\eqref{equation:d'snadbetab's}}&{=} \sum_{\ell=0}^{\varphi}b^{(\varphi-\ell)(i-\ell)}{\begin{bmatrix}\varphi\\\ell\end{bmatrix}}\beta(t-i,\varphi-\ell)\left(-b^{t}\right)^{t-i-(\varphi-\ell)}\beta(i,i)\delta_{i\ell}
        \\
    %
        \overset{\eqref{equation:betabstartdifferent}}&{=} {\begin{bmatrix}\varphi\\i\end{bmatrix}}{\begin{bmatrix}t-i\\ \varphi-i\end{bmatrix}}\beta(\varphi-i,\varphi-i)\left(-b^{t}\right)^{t-\varphi}\beta(i,i)
        \\
        \overset{\eqref{equation:betabstartsame}}&{=} {\begin{bmatrix}t-i \\ \varphi-i\end{bmatrix}}\left(-b^{t}\right)^{t-\varphi}\beta(\varphi,\varphi)
\end{align}

and so

\begin{align}
    \frac{1}{|\mathscr{C}^\perp|}\sum_{i=0}^t c_i'\psi_i(1,1) 
        &  = \frac{1}{|\mathscr{C}^\perp|} \sum_{i=0}^{\varphi}c_i' \begin{bmatrix}t-i\\ \varphi-i\end{bmatrix}\left(-b^{t}\right)^{t-\varphi}\beta(\varphi,\varphi)
        \\
        & = \beta(\varphi,\varphi)\frac{1}{|\mathscr{C}^\perp|}\left(-b^{t}\right)^{t-\varphi}\sum_{i=0}^{\varphi} c_i'{\begin{bmatrix}t-i \\ t-\varphi\end{bmatrix}}.        
\end{align}
Combining the results for each side, and simplifying, we finally obtain
\begin{equation}
    \sum_{i=0}^{t-\varphi}c_i {\begin{bmatrix}t-i\\\varphi\end{bmatrix}} = \frac{1}{|\mathscr{C}^\perp|}\left(-b^{t}\right)^{t-\varphi}\sum_{i=0}^{\varphi} c_i'{\begin{bmatrix}t-i\\t-\varphi\end{bmatrix}}
\end{equation}
as required.
\end{proof}

\begin{note}
In particular, if $\varphi=0$ we have
\begin{equation}
    \sum_{i=0}^t c_i =\frac{\left(-b^{t}\right)^{t}}{|\mathscr{C}^\perp|}c_0' = \frac{\left(-b^{t}\right)^{t}}{|\mathscr{C}^\perp|}.
\end{equation}
In other words
\begin{align}
    |\mathscr{C}||\mathscr{C}^\perp|
        & =\left(-b^{t}\right)^{t}
        \\
        & = (-1)^t (-1)^{t^2} q^{t^2}
        \\
        & = q^{t^2}
\end{align}
as expected.
\end{note}

We can simplify Proposition \ref{prop:momentsbderiv} if $\varphi$ is less than the minimum distance of the dual code.

\begin{cor}\label{corrollary:simplificationpropbderiv}
Let $d_{R}'$ be the minimum rank distance of $\mathscr{C}^\perp$. If $0\leq \varphi < d_{R}'$ then
\begin{equation}
    \sum_{i=0}^{t-\varphi}{\begin{bmatrix} t-i \\ \varphi\end{bmatrix}}c_i = \frac{1}{|\mathscr{C}^\perp|}\left(-b^{t}\right)^{t-\varphi} {\begin{bmatrix}t \\ \varphi\end{bmatrix}}.
\end{equation}
\end{cor}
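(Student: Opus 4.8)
The plan is to specialise Proposition~\ref{prop:momentsbderiv} using the hypothesis $\varphi < d_R'$. The right-hand side of that proposition is
\[
\frac{1}{|\mathscr{C}^\perp|}\left(-b^{t}\right)^{t-\varphi}\sum_{i=0}^{\varphi}{\begin{bmatrix}t-i \\ t-\varphi\end{bmatrix}}c_i',
\]
so the whole task is to show that, under the assumption $0 \le \varphi < d_R'$, the sum $\sum_{i=0}^{\varphi}{\begin{bmatrix}t-i \\ t-\varphi\end{bmatrix}}c_i'$ collapses to the single term $i=0$, which is ${\begin{bmatrix}t \\ t-\varphi\end{bmatrix}}c_0' = {\begin{bmatrix}t \\ \varphi\end{bmatrix}}$ after using $c_0'=1$ (the zero matrix is the unique codeword of weight $0$ in any linear code) and the symmetry identity~\eqref{equation:gaussianxx-k}.

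First I would recall that $d_R'$ is the minimum rank distance of $\mathscr{C}^\perp$, so by definition of the weight distribution $\boldsymbol{c}'=(c_0',c_1',\ldots,c_t')$ of $\mathscr{C}^\perp$ (Definition~\ref{def:Hermitianweightenumerator}) we have $c_i' = 0$ for all $1 \le i \le d_R' - 1$. Since the summation index runs over $0 \le i \le \varphi$ and $\varphi \le d_R' - 1$, every term with $1 \le i \le \varphi$ has $c_i' = 0$ and hence vanishes. Only the $i=0$ term survives, giving
\[
\sum_{i=0}^{\varphi}{\begin{bmatrix}t-i \\ t-\varphi\end{bmatrix}}c_i' = {\begin{bmatrix}t \\ t-\varphi\end{bmatrix}}c_0'.
\]
Then I would invoke $c_0' = 1$ and rewrite ${\begin{bmatrix}t \\ t-\varphi\end{bmatrix}} = {\begin{bmatrix}t \\ \varphi\end{bmatrix}}$ via~\eqref{equation:gaussianxx-k}, substitute back into Proposition~\ref{prop:momentsbderiv}, and the claimed identity drops out immediately.

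There is no real obstacle here: the corollary is a direct specialisation and the only things used are the definition of minimum distance, the normalisation $c_0'=1$, and one Gaussian-coefficient symmetry already listed in the preliminaries. The only point worth a moment's care is the boundary case $\varphi = 0$, where the sum is just the $i=0$ term anyway and the statement reduces to $\sum_{i=0}^{t} c_i = \tfrac{1}{|\mathscr{C}^\perp|}(-b^t)^t$, consistent with the Note following Proposition~\ref{prop:momentsbderiv}; and checking that $\varphi < d_R'$ together with $\varphi \le t$ keeps all indices in the legitimate range $\{0,1,\ldots,t\}$ so that the Gaussian coefficients are the intended ones. A one- or two-line proof suffices.
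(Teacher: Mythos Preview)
Your proposal is correct and matches the paper's approach exactly: the paper's proof is the single line ``We have $c_0'=1$ and $c_1'=\ldots=c_\varphi'=0$,'' and you have simply spelled out the implicit steps (collapse of the sum, the symmetry~\eqref{equation:gaussianxx-k}) that this line leaves to the reader.
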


\begin{proof}
We have $c_0'=1$ and $c_1'=\ldots=c_\varphi'=0$.
\end{proof}


\subsection{Moments derived from the Negative-$q^{-1}$-Derivative}

The next proposition relates the moments of the negative rank distribution of a linear code to those of its dual, this time using the negative-$q^{-1}$-derivative of the MacWilliams Identity for the Hermitian rank association scheme. 
There is a slight difference in the way that these two lemmas are defined compared to the skew rank case (and the rank case presented in \cite[Appendix D]{gadouleau2008macwilliams}). In the skew rank case, $\delta(\lambda,\varphi,j)$ is defined using two gamma functions and a power of $q^2$. This form was effective there because of the particular formulation of the gamma function, which does not hold in this case. Therefore one of the gamma functions in Lemma \ref{lemma:hermitiandeltas} has to be replaced with a more general product. 

\begin{lem}\label{lemma:hermitiandeltas}
Let $\delta(\lambda,\varphi,j)=\displaystyle\sum_{i=0}^{j}\bbinom{j}{i}(-1)^{i}b^{\sigma_{i}}\gamma'(\lambda-i,\varphi)$. Then for all $\lambda\in\mathbb{R},\varphi,j\in\mathbb{Z}$,

\begin{equation}\label{equation:hermitiandelta}
    \delta(\lambda,\varphi,j) = (-1)^{j}\prod_{i=0}^{j-1}\left(b^{\varphi}-b^i\right)\gamma'(\lambda-j,\varphi-j)b^{j(\lambda-j)}.
\end{equation}
\end{lem}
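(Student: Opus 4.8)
The plan is to prove \eqref{equation:hermitiandelta} by induction on $j$, the engine being a two-term recurrence reducing $\delta(\lambda,\varphi,j+1)$ to $\delta(\cdot\,,\varphi,j)$ at two shifted first arguments. For $j=0$ both sides equal $\gamma'(\lambda,\varphi)$ (the product $\prod_{i=0}^{-1}$ and the coefficient $\bbinom{0}{0}$ being $1$), so the base case is immediate. It is also worth noting at the outset that if $\varphi<j$ the right-hand side vanishes, since the factor indexed by $i=\varphi$ in $\prod_{i=0}^{j-1}(b^{\varphi}-b^{i})$ is zero; the recurrence will show the left-hand side vanishes on the same range, so the substantive case is $\varphi\ge j$ and no convention for $\gamma'$ with negative second argument is needed.

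First I would establish the recurrence
\[
\delta(\lambda,\varphi,j+1)=\delta(\lambda,\varphi,j)-b^{j}\,\delta(\lambda-1,\varphi,j).
\]
Starting from the definition of $\delta(\lambda,\varphi,j+1)$, apply the Pascal-type identity \eqref{equation:Stepdown1}, namely $\bbinom{j+1}{i}=\bbinom{j}{i}+b^{\,j+1-i}\bbinom{j}{i-1}$, to split the sum into two. The first piece is exactly $\delta(\lambda,\varphi,j)$. In the second piece, reindex $i\mapsto i+1$ and use $\sigma_{i+1}=\sigma_{i}+i$, so that the prefactor $b^{\,j+1-(i+1)}b^{\sigma_{i+1}}$ collapses to $b^{j}b^{\sigma_{i}}$; absorbing the sign from $(-1)^{i+1}$ turns this piece into $-b^{j}\delta(\lambda-1,\varphi,j)$.

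Next, substituting the inductive hypothesis for $j$ into the recurrence and using $b^{j}\,b^{\,j(\lambda-1-j)}=b^{\,j(\lambda-j)}$ to extract the common power gives
\[
\delta(\lambda,\varphi,j+1)=(-1)^{j}\Big(\textstyle\prod_{i=0}^{j-1}(b^{\varphi}-b^{i})\Big)\,b^{\,j(\lambda-j)}\big[\gamma'(\lambda-j,\varphi-j)-\gamma'(\lambda-j-1,\varphi-j)\big].
\]
The remaining ingredient is a one-variable difference identity for $\gamma'$: from the step-down relations \eqref{equation:gammastepdown} and \eqref{equation:gammastepdownsecond} one has $\gamma'(x,k)=b^{k-1}(-b^{x}-1)\gamma'(x-1,k-1)$ and $\gamma'(x-1,k)=(-b^{x-1}-b^{k-1})\gamma'(x-1,k-1)$, and subtracting these (the bracket simplifies to $-b^{x-1}(b^{k}-1)$) yields
\[
\gamma'(x,k)-\gamma'(x-1,k)=-(b^{k}-1)\,b^{x-1}\,\gamma'(x-1,k-1).
\]
Applying this with $x=\lambda-j$ and $k=\varphi-j$, rewriting $b^{\varphi-j}-1=b^{-j}(b^{\varphi}-b^{j})$ so that the product extends to $\prod_{i=0}^{j}(b^{\varphi}-b^{i})$, and checking that the accumulated exponent of $b$ equals $j(\lambda-j)+(\lambda-j-1)-j=(j+1)(\lambda-j-1)$, reproduces precisely the claimed formula at $j+1$ — including the flip $(-1)^{j}\mapsto(-1)^{j+1}$ supplied by the minus sign in the difference identity. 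This closes the induction.

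The only real work is the bookkeeping in the last step: tracking the powers of $b$ through the factorizations of $\gamma'$ and through the regrouping of the product, and pinning down the single overall sign. Isolating the clean difference identity for $\gamma'$ before the calculation is what keeps this manageable. A non-inductive alternative would be to expand $\gamma'(\lambda-i,\varphi)$ as a polynomial in $b^{\lambda-i}$ via the third displayed identity of Section~\ref{section:Preliminaries}, interchange the order of summation, and evaluate the inner sum over $i$ with the $q$-binomial theorem $\sum_{i}(-1)^{i}b^{\sigma_{i}}\bbinom{j}{i}z^{i}=\prod_{m=0}^{j-1}(1-b^{m}z)$ at $z=b^{-k}$; that product vanishes for $0\le k\le j-1$, and re-summing the surviving terms $k\ge j$ delivers the closed form directly. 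I would present the inductive argument as the cleaner of the two.
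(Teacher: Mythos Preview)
Your proof is correct. Both you and the paper proceed by induction on $j$, splitting the sum via a Pascal identity for the negative-$q$ Gaussian coefficients and invoking the step-down relations \eqref{equation:gammastepdown}, \eqref{equation:gammastepdownsecond} for $\gamma'$; the routes differ only in which Pascal rule is chosen and in the order of operations.

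You use \eqref{equation:Stepdown1}, obtaining the two-term recurrence $\delta(\lambda,\varphi,j+1)=\delta(\lambda,\varphi,j)-b^{j}\delta(\lambda-1,\varphi,j)$, substitute the inductive hypothesis into \emph{both} terms, and then close the induction with the difference identity $\gamma'(x,k)-\gamma'(x-1,k)=-(b^{k}-1)b^{x-1}\gamma'(x-1,k-1)$. The paper instead uses \eqref{equation:gaussiancoeffsx-1k-1}, applies the $\gamma'$ step-downs \emph{before} the inductive hypothesis, and thereby collapses the two pieces directly into the one-term recurrence $\delta(\lambda,\varphi,j+1)=-b^{\lambda-1}(b^{\varphi}-1)\,\delta(\lambda-1,\varphi-1,j)$, so only a single application of the hypothesis is needed. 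The paper's version is marginally tidier in that the closed form is then read off by iterating a multiplicative factor, whereas yours requires the separate difference lemma; on the other hand, your recurrence keeps $\varphi$ fixed and only shifts $\lambda$, which some readers may find conceptually cleaner. The exponent and sign bookkeeping you sketch is accurate.
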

\begin{proof}
We follow the proof by induction. Initial case: $j=0$.
\begin{align*}
    \delta(\lambda,\varphi,0) & = \bbinom{0}{0}(-1)^{0}b^{\sigma_{0}}\gamma'(\lambda,\varphi) = \gamma'(\lambda,\varphi) = \gamma'(\lambda,\varphi)q^{0(\lambda)}.
\end{align*}
So the initial case holds. Now assume it is true for $j=\overline{\jmath}$ and consider the case where $\overline{\jmath}+1$.
\begin{align*}
    \delta(\lambda,\varphi,\overline{\jmath}+1) 
        & = \sum_{i=0}^{\overline{\jmath}+1}\bbinom{\overline{\jmath}+1}{i}(-1)^{i}b^{\sigma_i}\gamma'(\lambda-i,\varphi)
        \\
        \overset{\eqref{equation:gaussiancoeffsx-1k-1}}&{=}\sum_{i=0}^{\overline{\jmath}+1}\left(b^{i}\bbinom{\overline{\jmath}}{i}+\bbinom{\overline{\jmath}}{i-1}\right)(-1)^{i}b^{\sigma_i}\gamma'(\lambda-i,\varphi)
        \\
        & = \sum_{i=0}^{\overline{\jmath}}\bbinom{\overline{\jmath}}{i}(-1)^{i}b^{\sigma_i}b^{i}\gamma'(\lambda-i,\varphi)+\sum_{i=0}^{\overline{\jmath}}\bbinom{\overline{\jmath}}{i}(-1)^{i+1}b^{\sigma_{i+1}}\gamma'(\lambda-(i+1),\varphi)
        \\
        \overset{\eqref{equation:gammastepdown}}&{=}\sum_{i=0}^{\overline{\jmath}}\bbinom{\overline{\jmath}}{i}(-1)^{i}b^{i}b^{\sigma_i}\left(-b^{\lambda -i}-1\right)b^{\varphi-1}\gamma'(\lambda-i-1,\varphi-1)
        \\
        & \hspace{1cm} \overset{\eqref{equation:gammastepdownsecond}}{-}\sum_{i=0}^{\overline{\jmath}}
        \bbinom{\overline{\jmath}}{i}(-1)^{i}b^{\sigma_{i+1}}\left(-b^{\lambda -i-1}-b^{\varphi-1}\right)\gamma'(\lambda-i-1,\varphi-1)
        \\
        & = \sum_{i=0}^{\overline{\jmath}}\bbinom{\overline{\jmath}}{i}(-1)^{i}b^{\sigma_i}\gamma'(\lambda-i-1,\varphi-1)(-b^{\lambda -1})\left(b^{\varphi}-1\right)
        \\
        & = -b^{\lambda -1}\left(b^{\varphi}-1\right)\delta(\lambda-1,\varphi-1,\overline{\jmath})
        \\
        & = -b^{\lambda -1}\left(b^{\varphi}-1\right)(-1)^{\overline{\jmath}}\prod_{i=0}^{{\overline{\jmath}}-1}\left(b^{\varphi-1}-b^i\right)b^{\overline{\jmath}(\lambda-\overline{\jmath}-1)}\gamma'(\lambda-\overline{\jmath}-1,\varphi-\overline{\jmath}-1)
        \\
        \overset{\eqref{equation:gammastepdown}}&{=} (-1)^{\overline{\jmath}+1}b^{(\overline{\jmath}+1)(\lambda-(\overline{\jmath}+1))}\prod_{i=0}^{{\overline{\jmath}}}\left(b^{\varphi}-b^i\right)\gamma'(\lambda-(\overline{\jmath}+1),\varphi-(\overline{\jmath}+1))
\end{align*}
since $\displaystyle\bbinom{\overline{\jmath}}{i-1}=0$ when $i=0$. Hence by induction the lemma is proved.
\end{proof}

\begin{lem}\label{lemma:hermitianepsilons}
Let $\varepsilon(\Lambda,\varphi,i)=\displaystyle\sum_{\ell=0}^{i}\bbinom{i}{\ell}\bbinom{\Lambda-i}{\varphi-\ell}b^{\ell(\Lambda-\varphi)}(-1)^{\ell}b^{\sigma_{\ell}}\prod_{j=0}^{i-\ell-1}\left(b^{\varphi-\ell}-b^j\right)$. Then for all $\Lambda\in\mathbb{R},\varphi,i\in\mathbb{Z}$,
\begin{equation*}
    \varepsilon(\Lambda,\varphi,i) = (-1)^{i}b^{\sigma_{i}}\bbinom{\Lambda-i}{\Lambda-\varphi}.
\end{equation*}
\end{lem}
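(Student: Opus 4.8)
The plan is to prove the identity by induction on $i$, mirroring the scheme used in the proof of Lemma \ref{lemma:hermitiandeltas}: derive a two-term recursion in $i$ for $\varepsilon(\Lambda,\varphi,i)$ and then feed in the closed form already established for the smaller index. For the base case $i=0$ only the $\ell=0$ summand survives and the trailing product $\prod_{j=0}^{-1}(\cdots)$ is empty, so $\varepsilon(\Lambda,\varphi,0)=\bbinom{\Lambda}{\varphi}$, which equals $\bbinom{\Lambda}{\Lambda-\varphi}$ by \eqref{equation:gaussianxx-k} — exactly the asserted value at $i=0$.

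For the inductive step I would assume the formula for $i=\overline{\imath}$ (for all real $\Lambda$ and all integers $\varphi$) and expand $\varepsilon(\Lambda,\varphi,\overline{\imath}+1)$. First split the leading coefficient via \eqref{equation:Stepdown1}, $\bbinom{\overline{\imath}+1}{\ell}=\bbinom{\overline{\imath}}{\ell}+b^{\,\overline{\imath}+1-\ell}\bbinom{\overline{\imath}}{\ell-1}$, giving two sums $S_1+S_2$; the boundary terms $\ell=\overline{\imath}+1$ in $S_1$ and $\ell=0$ in $S_2$ drop out since $\bbinom{\overline{\imath}}{\overline{\imath}+1}=\bbinom{\overline{\imath}}{-1}=0$. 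In $S_1$ I would peel the top factor $j=\overline{\imath}-\ell$ off the trailing product and use $b^{\varphi-\ell}-b^{\overline{\imath}-\ell}=b^{-\ell}\bigl(b^{\varphi}-b^{\overline{\imath}}\bigr)$; the stray $b^{-\ell}$ absorbs one factor of $b^{\ell}$ out of $b^{\ell(\Lambda-\varphi)}$, leaving exactly $S_1=\bigl(b^{\varphi}-b^{\overline{\imath}}\bigr)\,\varepsilon(\Lambda-1,\varphi,\overline{\imath})$. In $S_2$ I would re-index $\ell=m+1$; here $\sigma_{m+1}=\sigma_m+m$ produces a spare $b^{m}$ that combines with the $b^{\,\overline{\imath}-m}$ coming from the step-down into a clean $b^{\overline{\imath}}$, and the remaining data reassembles into $S_2=-\,b^{\,\overline{\imath}+\Lambda-\varphi}\,\varepsilon(\Lambda-1,\varphi-1,\overline{\imath})$. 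Adding the pieces yields the recursion
\[
\varepsilon(\Lambda,\varphi,\overline{\imath}+1)=\bigl(b^{\varphi}-b^{\overline{\imath}}\bigr)\,\varepsilon(\Lambda-1,\varphi,\overline{\imath})-b^{\,\overline{\imath}+\Lambda-\varphi}\,\varepsilon(\Lambda-1,\varphi-1,\overline{\imath}).
\]

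To finish I would substitute the inductive hypothesis into both terms, namely $\varepsilon(\Lambda-1,\varphi,\overline{\imath})=(-1)^{\overline{\imath}}b^{\sigma_{\overline{\imath}}}\bbinom{\Lambda-1-\overline{\imath}}{\Lambda-1-\varphi}$ and $\varepsilon(\Lambda-1,\varphi-1,\overline{\imath})=(-1)^{\overline{\imath}}b^{\sigma_{\overline{\imath}}}\bbinom{\Lambda-1-\overline{\imath}}{\Lambda-\varphi}$, pull out the common $(-1)^{\overline{\imath}}b^{\sigma_{\overline{\imath}}}$, and use $\sigma_{\overline{\imath}+1}=\sigma_{\overline{\imath}}+\overline{\imath}$. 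The claim then collapses to the single negative-$q$ Gaussian identity
\[
\bigl(b^{\varphi}-b^{\overline{\imath}}\bigr)\bbinom{\Lambda-1-\overline{\imath}}{\Lambda-1-\varphi}=b^{\overline{\imath}}\bigl(b^{\Lambda-\varphi}-1\bigr)\bbinom{\Lambda-1-\overline{\imath}}{\Lambda-\varphi},
\]
which is immediate from \eqref{equation:gaussianfracxk-1} with $x=\Lambda-1-\overline{\imath}$ and $k=\Lambda-\varphi$, since that gives $\bbinom{\Lambda-1-\overline{\imath}}{\Lambda-\varphi}=\dfrac{b^{\varphi-\overline{\imath}}-1}{b^{\Lambda-\varphi}-1}\bbinom{\Lambda-1-\overline{\imath}}{\Lambda-1-\varphi}$ and $b^{\overline{\imath}}(b^{\varphi-\overline{\imath}}-1)=b^{\varphi}-b^{\overline{\imath}}$; this closes the induction.

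The part I expect to be fiddly is the reduction of $S_1$ and $S_2$ to shifted copies of $\varepsilon(\,\cdot\,,\,\cdot\,,\overline{\imath})$: each rewrite leaks a stray power of $b$ — the $b^{-\ell}$ from factoring $b^{\varphi-\ell}-b^{\overline{\imath}-\ell}$, and the $b^{m}$ hidden in $\sigma_{m+1}$ — and one must check carefully that these cancel precisely against the $b^{\ell(\Lambda-\varphi)}$ and the $b^{\overline{\imath}+1-\ell}$ so that both sums re-collapse into honest instances of $\varepsilon$ at index $\overline{\imath}$. Everything after the recursion is a one-line manipulation, so no real obstacle is anticipated there. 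Should the index bookkeeping turn out to be too delicate, a fallback is to expand the trailing product with $\prod_{i=0}^{x-1}(y-b^{i})=\sum_{k}(-1)^{x-k}b^{\binom{x-k}{2}}\bbinom{x}{k}y^{k}$, turning $\varepsilon$ into a double sum and reducing the whole statement to a $q$-Vandermonde-type convolution of Gaussian coefficients.
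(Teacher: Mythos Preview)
Your proposal is correct and follows essentially the same route as the paper: induction on $i$, splitting $\bbinom{\overline{\imath}+1}{\ell}$ via \eqref{equation:Stepdown1} into two sums that reduce to $(b^{\varphi}-b^{\overline{\imath}})\,\varepsilon(\Lambda-1,\varphi,\overline{\imath})$ and $-b^{\overline{\imath}+\Lambda-\varphi}\,\varepsilon(\Lambda-1,\varphi-1,\overline{\imath})$, then closing the induction with the Gaussian identity from \eqref{equation:gaussianfracxk-1}. Your bookkeeping of the stray $b$-powers in $S_1$ and $S_2$ is exactly the computation the paper performs (indeed your $S_1$ prefactor $(b^{\varphi}-b^{\overline{\imath}})$ is the correct one; the paper's intermediate line with $(-b^{\varphi}-b^{\overline{\imath}})$ is a typo, as its own final combination confirms).
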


\begin{proof}
We follow the proof by induction. Initial case $i=0$,
\begin{align*}
    \varepsilon(\Lambda,\varphi,0) = \bbinom{0}{0}\bbinom{\Lambda}{0}b^{0}(-1)^{0}b^{0} & =\bbinom{\Lambda}{\varphi}\\
    (-1)^{0}b^{0}\bbinom{\Lambda}{\Lambda-\varphi} \overset{\eqref{equation:gaussianxx-k}}&{=} 
    \bbinom{\Lambda}{\varphi}.
\end{align*}
So the initial case holds. Now suppose it is true when $i=\overline{\imath}$. Then

\begin{align*}
    \varepsilon(\Lambda,\varphi,\overline{\imath}+1) 
        & = \sum_{\ell=0}^{\overline{\imath}+1}\bbinom{\overline{\imath}+1}{\ell}\bbinom{\Lambda-\overline{\imath}-1}{\varphi-\ell}b^{\ell(\Lambda-\varphi)}(-1)^{\ell}b^{\sigma_{\ell}}\prod_{j=0}^{\overline{\imath}-\ell}\left(b^{\varphi-\ell}-b^j\right)
        \\
        \overset{\eqref{equation:Stepdown1}}&{=} \sum_{\ell=0}^{\overline{\imath}+1}\bbinom{\overline{\imath}}{\ell}\bbinom{\Lambda-\overline{\imath}-1}{\varphi-\ell}b^{\ell(\Lambda-\varphi)}(-1)^{\ell}b^{\sigma_{\ell}}\prod_{j=0}^{\overline{\imath}-\ell}\left(b^{\varphi-\ell}-b^j\right)
        \\
        & \hspace{1cm} + \sum_{\ell=1}^{\overline{\imath}+1}b^{(\overline{\imath}+1-\ell)}\bbinom{\overline{\imath}}{\ell-1}\bbinom{\Lambda-\overline{\imath}-1}{\varphi-\ell}b^{\ell(\Lambda-\varphi)}(-1)^{\ell}b^{\sigma_{\ell}}\prod_{j=0}^{\overline{\imath}-\ell}\left(b^{\varphi-\ell}-b^j\right)
        \\
        & = A + B, \quad \text{say}.
\end{align*}
Now
\begin{align*}
    A 
        & = \left(-b^{\varphi}-b^{\overline{\imath}}\right)\sum_{\ell=0}^{\overline{\imath}}\bbinom{\overline{\imath}}{\ell}\bbinom{\Lambda-\overline{\imath}-1}{\varphi-\ell}b^{\ell(\Lambda-1-\varphi)}(-1)^{\ell}b^{\sigma_{\ell}}\prod_{j=0}^{\overline{\imath}-\ell-1}\left(b^{\varphi-\ell}-b^j\right)
        \\
        & = \left(-b^{\varphi}-b^{\overline{\imath}}\right)\varepsilon(\Lambda-1,\varphi,\overline{\imath})
        \\
        & = \left(-b^{\varphi}-b^{\overline{\imath}}\right)(-1)^{\overline{\imath}}b^{\sigma_{\overline{\imath}}}\bbinom{\Lambda-\overline{\imath}-1}{\Lambda-1-\varphi}
\end{align*}
and
\begin{align*}
    B 
        & =\sum_{\ell=0}^{\overline{\imath}}b^{(\overline{\imath}-\ell)}\bbinom{\overline{\imath}}{\ell}\bbinom{\Lambda-\overline{\imath}-1}{\varphi-\ell-1}b^{(\ell+1)(\Lambda-\varphi)}(-1)^{\ell+1}b^{\sigma_{\ell+1}}\prod_{j=0}^{\overline{\imath}-\ell-1}\left(b^{\varphi-\ell-1}-b^j\right)
        \\
        & = -b^{(\overline{\imath}+\Lambda-\varphi)}\varepsilon(\Lambda-1,\varphi-1,\overline{\imath})
        \\
        & = -b^{(\overline{\imath}+\Lambda                      -\varphi)}(-1)^{\overline{\imath}}b^{\sigma_{\overline{\imath}}}\bbinom{\Lambda-\overline{\imath}-1}{\Lambda-\varphi}.
\end{align*}
So
\begin{align*}
    \varepsilon(\Lambda,\varphi,\overline{\imath}+1) 
        & = A + B 
        \\
        & = (-1)^{\overline{\imath}}b^{\sigma_{\overline{\imath}}}\left\{\left(b^{\varphi}-b^{\overline{\imath}}\right)\bbinom{\Lambda-\overline{\imath}-1}{\Lambda-1-\varphi}-b^{(\overline{\imath}+n-\varphi)}\bbinom{\Lambda-\overline{\imath}-1}{\Lambda-\varphi}\right\}
        \\
        \overset{\eqref{equation:gaussianfracxk-1}}&{=} (-1)^{\overline{\imath}+1}b^{\sigma_{\overline{\imath}}}\left\{b^{\overline{\imath}+n-\varphi}\bbinom{\Lambda-\overline{\imath}-1}{\Lambda-\varphi}-\left(b^{\varphi}-b^{\overline{\imath}}\right)\frac{\left(b^{\Lambda-\varphi}-1\right)}{\left(b^{\varphi-\overline{\imath}}-1\right)}\bbinom{\Lambda-\overline{\imath}-1}{\Lambda-\varphi}\right\}
        \\
        & = (-1)^{\overline{\imath}+1}\bbinom{\Lambda-(\overline{\imath}+1)}{\Lambda-\varphi}b^{\sigma_{\overline{\imath}}}\left\{\frac{b^{\overline{\imath}+n-\varphi}\left(b^{\varphi-\overline{\imath}}-1\right)-\left(b^{\varphi}-b^{\overline{\imath}}\right)\left(b^{\Lambda-\varphi}-1\right)}{\left(b^{\varphi-\overline{\imath}}-1\right)}\right\}
        \\
        & = (-1)^{\overline{\imath}+1}b^{\sigma_{\overline{\imath}+1}}\bbinom{\Lambda-(\overline{\imath}+1)}{\Lambda-\varphi}
\end{align*}
as required.
\end{proof}

\begin{prop}\label{prop:hermitianmomentsbminusderivative}
For $0 \le \varphi \le t$ and a linear code $\mathscr{C} \subseteq \mathscr{H}_{q,t}$ with dimension $k$ and its dual $\mathscr{C}^\perp\subseteq \mathscr{H}_{q,t}$ with weight distributions ${\boldsymbol{c}}=(c_0,\ldots,c_n)$ and ${\boldsymbol{ c'}}=(c'_0,\ldots,c'_n)$, respectively we have
\begin{equation*}
    \sum_{i=\varphi}^t b^{\varphi(t-i)}\bbinom{i}{\varphi}c_i = \frac{1}{|\mathscr{C}^\perp|}\left(-b^{t}\right)^{t-\varphi}\sum_{i=0}^{\varphi}(-1)^{i}b^{\sigma_{i}}b^{i(\varphi-i)}\bbinom{t-i}{t-\varphi}\gamma'(t-i,\varphi-i)c_{i}'.
\end{equation*}
\end{prop}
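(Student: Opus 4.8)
The plan is to follow the template of the proof of Proposition \ref{prop:momentsbderiv}, replacing the negative-$q$-derivative by the negative-$q^{-1}$-derivative throughout. First I would apply Theorem \ref{mainthm1} to $\mathscr{C}^\perp$ (using $(\mathscr{C}^\perp)^\perp=\mathscr{C}$) to obtain, exactly as in the opening step of Proposition \ref{prop:momentsbderiv},
\[
\sum_{i=0}^t c_i Y^i X^{t-i}=\frac{1}{|\mathscr{C}^\perp|}\sum_{i=0}^t c_i'\,\nu^{[i]}(X,Y;t)\ast\mu^{[t-i]}(X,Y;t).
\]
Then I would apply the $\varphi$-th negative-$q^{-1}$-derivative to both sides and evaluate at $X=Y=1$, comparing the two resulting expressions.

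The left-hand side is routine. By \eqref{equation:generalbdervispoly} its $\varphi$-th negative-$q^{-1}$-derivative at $X=Y=1$ equals $\sum_{i=\varphi}^t c_i\,b^{\varphi(1-i)+\sigma_\varphi}\beta(i,\varphi)$, and \eqref{equation:betabstartdifferent} turns this into $b^{\sigma_\varphi+\varphi}\beta(\varphi,\varphi)\sum_{i=\varphi}^t b^{-\varphi i}\bbinom{i}{\varphi}c_i$, which is the left-hand side of the proposition up to the overall scalar $b^{\sigma_\varphi+\varphi-\varphi t}\beta(\varphi,\varphi)$.

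The right-hand side is where the proof genuinely departs from Proposition \ref{prop:momentsbderiv}. I would expand $[\nu^{[i]}\ast\mu^{[t-i]}]^{\{\varphi\}}$ using the Leibniz rule of Theorem \ref{Liebnizbminusderiv}, then substitute the derivative formulas of Lemma \ref{lemma:negativeqderivatives}(3), namely $\nu^{[i]\{\ell\}}=(-1)^{\ell}\beta(i,\ell)\nu^{[i-\ell]}$ and $\mu^{[t-i]\{\varphi-\ell\}}(\cdot;\lambda-\ell)=b^{-\sigma_{\varphi-\ell}}\beta(t-i,\varphi-\ell)\gamma(\lambda-\ell,\varphi-\ell)\,\mu^{[t-i-\varphi+\ell]}(\cdot;\lambda-\varphi)$. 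Here the negative-$q^{-1}$-derivative of $\mu^{[k]}$ shifts its $\lambda$-parameter, so when the $\ast$-product is carried out the factor $\gamma(\lambda-\ell,\varphi-\ell)$ collides with the internal $\lambda$-shifts of $\ast$ and cannot be extracted as a scalar; this is exactly why Lemma \ref{lemma:rhoandmu} (which sufficed in Proposition \ref{prop:momentsbderiv}) is unavailable. Evaluating the $\mu$-part at $X=Y=1$ via \eqref{equation:producttosumgauss} and collecting the $\nu$-coefficients, the residual sum over the $\ast$-index is precisely $\delta(\lambda-\ell,\varphi-\ell,i-\ell)$; by Lemma \ref{lemma:hermitiandeltas} this has a closed form whose factor $\gamma'(t-i,\varphi-i)$ is independent of $\ell$ (after putting $\lambda=t$). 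Pulling that factor out and regrouping the $\beta$-terms via \eqref{equation:betabstartdifferent} and \eqref{equation:betabstartsame} into $\bbinom{i}{\ell}\bbinom{t-i}{\varphi-\ell}\beta(\varphi,\varphi)$, the remaining sum over the Leibniz index $\ell$ is exactly $\varepsilon(t,\varphi,i)$, which by Lemma \ref{lemma:hermitianepsilons} equals $(-1)^i b^{\sigma_i}\bbinom{t-i}{t-\varphi}$; note that $\bbinom{t-i}{t-\varphi}=0$ for $i>\varphi$, which is why the dual-side sum runs only to $\varphi$. Equating the two sides, cancelling $\beta(\varphi,\varphi)$, and checking that the accumulated powers of $b$ reconcile (both sides produce the exponent $t(t-\varphi)+i(\varphi-i)$ on each dual term) then gives the claimed identity.

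The main obstacle is this right-hand side computation: recognising the $\delta$- and $\varepsilon$-structures buried inside the triple nested summation — the Leibniz index $\ell$, the $\ast$-product index, and the $\mu$-coefficient index — and carrying the many $b$-exponents and signs through the regrouping. This is the step for which Lemmas \ref{lemma:hermitiandeltas} and \ref{lemma:hermitianepsilons} were built, and it is precisely where the analogy with the skew-rank treatment of \cite{friedlander2023macwilliams} breaks, since the Hermitian negative-$q$ Gamma function does not obey the identity its skew-rank counterpart does, forcing one of the two Gamma functions to be replaced by the general product appearing in \eqref{equation:hermitiandelta}.
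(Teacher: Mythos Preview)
Your proposal is correct and follows essentially the same route as the paper's proof: apply the MacWilliams identity to $\mathscr{C}^\perp$, take the $\varphi$-th negative-$q^{-1}$-derivative of both sides, evaluate at $X=Y=1$, handle the left side with \eqref{equation:generalbdervispoly} and \eqref{equation:betabstartdifferent}, and on the right side use the Leibniz rule (Theorem \ref{Liebnizbminusderiv}) together with Lemma \ref{lemma:negativeqderivatives}(3), then collapse the inner $\ast$-sum via \eqref{equation:producttosumgauss} to expose $\delta(t-\ell,\varphi-\ell,i-\ell)$ for Lemma \ref{lemma:hermitiandeltas} and the outer Leibniz sum as $\varepsilon(t,\varphi,i)$ for Lemma \ref{lemma:hermitianepsilons}. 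Your remarks on why Lemma \ref{lemma:rhoandmu} is unavailable here and why the dual sum truncates at $i=\varphi$ are also exactly the observations the paper relies on.
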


\begin{proof}
As per Proposition \ref{prop:momentsbderiv}, we apply Theorem \ref{mainthm1} to $\mathscr{C}^{\perp}$ to get

\begin{equation}
    W_{\mathscr{C}}^{R}(X,Y) = \frac{1}{|\mathscr{C}^\perp|}\overline{W}_{\mathscr{C}^{\perp}}^{R}\left(X +(-b^t-1)Y,X-Y\right)
\end{equation}
or equivalently
\begin{align}
    \sum_{i=0}^t c_i Y^{i}X^{t-i} 
        & = \frac{1}{|\mathscr{C}^\perp|}\sum_{i=0}^{t}c_i'\left(X-Y\right)^{[i]}\ast \left(X +(-b^t-1)Y\right)^{[t-i]}
        \\
        & = \frac{1}{|\mathscr{C}^\perp|}\sum_{i=0}^t c_i' \nu^{[i]}(X,Y;t)\ast \mu^{[t-i]}(X,Y;t). \label{equation:negative-q-1}
\end{align}

For each side of Equation \eqref{equation:negative-q-1}, we shall apply the negative-$q^{-1}$-derivative $\varphi$ times and then evaluate at $X=Y=1$. i.e. 

\begin{equation}
    \left(\sum_{i=0}^t c_iY^{i}X^{t-i}\right)^{\{\varphi\}}  = \left(\frac{1}{|\mathscr{C}^\perp|}\sum_{i=0}^t c_i' \nu^{[i]}(X,Y;t)\ast \mu^{[t-i]}(X,Y;t)\right)^{\{\varphi\}}. \label{equation:hermitianmomentswhatwewant}
\end{equation}

For the left hand side, we obtain
\begin{equation}\label{equation:propmomentsbminusLHS}
\begin{split}
    \left(\sum_{i=0}^t c_iY^{i}X^{t-i}\right)^{\{\varphi\}} 
        & = \sum_{i=\varphi}^t c_ib^{\varphi(1-i)+\sigma_{\varphi}}\beta(i,\varphi)Y^{i-\varphi}X^{t-i}
        \\
        \overset{\eqref{equation:betabstartdifferent}}&{=} \sum_{i=\varphi}^{t}c_i b^{\varphi(1-i)+\sigma_{\varphi}} {\begin{bmatrix}i\\\varphi\end{bmatrix}}\beta(\varphi,\varphi)Y^{i-\varphi}X^{t-i}
    \end{split}
\end{equation}
Then using $X=Y=1$ gives
\begin{equation}
    \sum_{i=\varphi}^{t}c_i b^{\varphi(1-i)+\sigma_{\varphi}} {\begin{bmatrix}i\\\varphi\end{bmatrix}}\beta(\varphi,\varphi)Y^{i-\varphi}X^{t-i} = \sum_{i=\varphi}^t b^{\varphi(1-i)+\sigma_\varphi}\beta(\varphi,\varphi){\begin{bmatrix}i\\\varphi\end{bmatrix}} c_i. \label{equation:hermitianLHSmoment}
\end{equation}

We now move on to the right hand side. For simplicity we shall write $\mu(X,Y;t)$ as $\mu(t)$ and similarly $\nu(X,Y;t)$ as $\nu(t)$. Applying Theorem \ref{Liebnizbminusderiv} we get

 \begin{align}\label{equation:propmomentsbminusRHS}
     \left(\frac{1}{|\mathscr{C}^\perp|}\sum_{i=0}^t c_i' \nu^{[i]}(t)\ast \mu^{[t-i]}(t)\right)^{\{\varphi\}} 
        & = \frac{1}{|\mathscr{C}^\perp|}\sum_{i=0}^t c_i' \left(\sum_{\ell=0}^\varphi{\begin{bmatrix}\varphi\\\ell\end{bmatrix}}b^{\ell(t-i-\varphi+\ell)}\nu^{[i]\{\ell\}}(t)\ast \mu^{[t-i]\{\varphi-\ell\}}(t-\ell)\right)
        \\
        & = \frac{1}{|\mathscr{C}^\perp|}\sum_{i=0}^n c_i' \psi_i(t)\label{equation:hermitianRHSpart1moments}
 \end{align}
say. Applying Lemma \ref{lemma:negativeqderivatives} we get
\begin{align}
    \psi_i(t) 
        & = \sum_{\ell=0}^{\varphi} {\begin{bmatrix}\varphi\\l\end{bmatrix}}b^{\ell(t-i-\varphi+\ell)}\left\{(-1)^\ell \beta(i,\ell)\nu^{[i-\ell]}(t)\right\}
        \\
        & \ast\left\{b^{-\sigma_{\varphi-\ell}}\beta(t-i,\varphi-\ell)\gamma(t-\ell,\varphi-\ell)\mu^{[t-i-\varphi+\ell]}(t-\varphi)\right\}.
\end{align}
Now let
\begin{equation}
    \Psi(X,Y;t-\varphi) = \nu^{[i-\ell]}(X,Y;t)\ast\gamma(t - \ell,\varphi-\ell)\mu^{[t-i-\varphi+\ell]}(X,Y;t-\varphi).
\end{equation}
Then we apply the negative-$q$-product and set $X=Y=1$ to get
\begin{align}
    \Psi(1,1;t-\varphi) 
        & = \sum_{u=0}^{t-\varphi}\left[\sum_{p=0}^u b^{p(t-i-\varphi+\ell)}\nu_p^{[i-\ell]}(t)\gamma(t - \ell-p,\varphi-\ell)\mu^{[t-i-\varphi+\ell]}_{u-p}(t-\varphi-p)\right]
        \\
        & = \sum_{r=0}^{i-\ell}b^{r(t-i-\varphi+\ell)}\nu_r^{[i-\ell]}(t)\gamma(t - \ell-r,\varphi-\ell)\left[ \sum_{w=0}^{t-i-\varphi+\ell}\mu_w^{[t-i-\varphi+\ell]}(t-\varphi-r)\right]
        \\
        \overset{\eqref{equation:producttosumgauss}}&{=} \sum_{r=0}^{i-\ell}b^{r(t-i-\varphi+\ell)}(-1)^{t-i-\varphi+\ell}b^{(t-\varphi-r)(t-i-\varphi+\ell)}\nu_r^{[i-\ell]}(t)\gamma(t - \ell-r,\varphi-\ell)
        \\
        & = (-1)^{t-i-\varphi+\ell}b^{(t-\varphi)(t-i-\varphi+\ell)}\sum_{r=0}^{i-\ell}(-1)^r b^{\sigma_{r}}{\begin{bmatrix}i-\ell\\r\end{bmatrix}}\gamma(t - \ell-r,\varphi-\ell)
        \\
        & = (-1)^{t-i-\varphi+\ell}b^{(t-\varphi)(t-i-\varphi+\ell)}(-1)^{i-\ell}b^{(i-\ell)(t-i)}\gamma(\varphi-\ell,i-\ell)\gamma(t-i,\varphi-i)
        \\
        & = (-1)^{t-\varphi}b^{(t-\varphi)(t-i-\varphi+\ell)}b^{(i-\ell)(t-i)}\gamma(\varphi-\ell,i-\ell)\gamma(t-i,\varphi-i)
\end{align}
by Lemma \ref{lemma:hermitiandeltas}. Now using Lemma \ref{lemma:betabmanipulation} and noting that $b^{\ell(t-i-\varphi+\ell)}b^{-\sigma_{\varphi-\ell}}=b^{\ell(t-i)}b^{-\sigma_{\varphi}}b^{\sigma_\ell}$ we get
\begin{align}
    \psi_i(1,1;t) 
        & = \sum_{\ell=0}^{\varphi}(-1)^\ell{\begin{bmatrix}\varphi\\\ell\end{bmatrix}}b^{\ell(t-i-\varphi+\ell)} b^{-\sigma_{\varphi-\ell}}\beta(i,\ell)\beta(t-i,\varphi-\ell)\Psi(1,1;t-\varphi)
        \\
        & = b^{-\sigma_{\varphi}}\beta(\varphi,\varphi)\sum_{\ell=0}^{\varphi}(-1)^{\ell}b^{\ell(t-i)}b^{\sigma_{\ell}}{\begin{bmatrix}i\\\ell\end{bmatrix}}{\begin{bmatrix}t-i\\ \varphi-\ell\end{bmatrix}}\Psi(1,1;t-\varphi).
\end{align}
Writing that
\begin{align}
    b^{-\sigma_{\varphi}}b^{\ell(t-i)}b^{(t-\varphi)(t-\varphi-i+\ell)}b^{(i-\ell)(t-i)} 
        & = b^{\sigma_{\varphi}}b^{\varphi(1-t)}b^{t(t-\varphi)}b^{\ell(t-\varphi)}b^{i(\varphi-i)}
        \\
        & = b^{\theta}b^{\ell(t-\varphi)}
\end{align}
we get
\begin{align}
    \psi_i(1,1;t) 
        & = (-1)^{t-\varphi}b^{\theta}\beta(\varphi,\varphi)\gamma(t-i,\varphi-i)\sum_{\ell=0}^i(-1)^\ell b^{\ell(t-\varphi)} b^{\sigma_{\ell}}{\begin{bmatrix}i\\\ell\end{bmatrix}}{\begin{bmatrix}t-i\\\varphi-\ell\end{bmatrix}}\gamma(\varphi-\ell,i-\ell) 
        \\
        & = (-1)^{t-\varphi}(-1)^ib^{\theta}b^{\sigma_{i}}\beta(\varphi,\varphi){\begin{bmatrix}t-i\\t-\varphi\end{bmatrix}}\gamma(t-i,\varphi-i)\label{equation:hermitianmomentsRHSpsi}
\end{align}
by Lemma \ref{lemma:hermitianepsilons}.

Substituting the results from \eqref{equation:hermitianLHSmoment}, \eqref{equation:hermitianRHSpart1moments} and \eqref{equation:hermitianmomentsRHSpsi} we have
\begin{equation}
    \sum_{i=\varphi}^t b^{\varphi(1-i)+\sigma_\varphi}\beta(\varphi,\varphi){\begin{bmatrix}i\\\varphi\end{bmatrix}} c_i = \frac{1}{|\mathscr{C}^\perp|}\sum_{i=0}^t c_i'(-1)^{t-\varphi+i}b^{\theta}b^{\sigma_{i}}\beta(\varphi,\varphi){\begin{bmatrix}t-i\\t-\varphi\end{bmatrix}}\gamma(t-i,\varphi-i).
\end{equation}
Thus cancelling and rearranging gives,
\begin{equation}
    \sum_{i=\varphi}^t b^{\varphi(t-i)} {\begin{bmatrix}i \\ \varphi\end{bmatrix}}c_i  = \frac{\left(-b^{t}\right)^{t-\varphi}}{|\mathscr{C}^\perp|}\sum_{i=0}^\varphi  (-1)^{i} b^{\sigma_i}b^{i(\varphi-i)}{\begin{bmatrix}t-i \\ t-\varphi\end{bmatrix}}\gamma(t-i,\varphi-i)c_i'
\end{equation}
as required.
\end{proof}

We can simplify Proposition \ref{prop:hermitianmomentsbminusderivative} if $\varphi$ is less than the minimum distance of the dual code. Also we can introduce the \textbf{\textit{diameter}}, $\varrho_{R}'$, to be the maximum distance between any two codewords of the dual code and simplify Proposition \ref{prop:hermitianmomentsbminusderivative} again.

\begin{cor}
If $0\leq \varphi < d_{R}'$ then
\begin{equation}
    \sum_{i=\varphi}^t b^{\varphi(t-i)}{\begin{bmatrix} i \\ \varphi\end{bmatrix}}c_i 
        = 
    \frac{1}{|\mathscr{C}^\perp|}\left(-b^{t}\right)^{t-\varphi}{\begin{bmatrix} t \\ \varphi\end{bmatrix}}\gamma(t,\varphi).
\end{equation}
For $\varrho_{R}'< \varphi \leq t$ then
\begin{equation}
   \sum_{i=0}^{\varphi}(-1)^{i}{b^{\sigma_{i}}b^{i(\varphi-i)}\begin{bmatrix} t-i \\ t-\varphi\end{bmatrix}}\gamma(t-i,\varphi-i)c_{i}
        =
    0.
\end{equation}
\end{cor}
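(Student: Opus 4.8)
The plan is to obtain both identities as direct specialisations of Proposition \ref{prop:hermitianmomentsbminusderivative}, using only the standard vanishing of weight-distribution coefficients outside the interval $[d_R',\varrho_R']$.

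First I would treat the case $0\le\varphi<d_R'$. Since $d_R'$ is the minimum rank distance of $\mathscr{C}^\perp$ and $c_0'=1$, the weight distribution of the dual satisfies $c_1'=\cdots=c_\varphi'=0$. Substituting this into the right-hand side of Proposition \ref{prop:hermitianmomentsbminusderivative} leaves only the $i=0$ term, namely $(-1)^0 b^{\sigma_0}b^{0}\bbinom{t}{t-\varphi}\gamma(t,\varphi)c_0'$; as $\sigma_0=0$ this equals $\bbinom{t}{t-\varphi}\gamma(t,\varphi)$, and an application of \eqref{equation:gaussianxx-k} rewrites $\bbinom{t}{t-\varphi}$ as $\bbinom{t}{\varphi}$, giving exactly the claimed formula.

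For the case $\varrho_R'<\varphi\le t$ I would apply Proposition \ref{prop:hermitianmomentsbminusderivative} to the pair $(\mathscr{C}^\perp,(\mathscr{C}^\perp)^\perp)$ instead of $(\mathscr{C},\mathscr{C}^\perp)$; this is legitimate because the trace form on $\mathscr{H}_{q,t}$ is non-degenerate, so $(\mathscr{C}^\perp)^\perp=\mathscr{C}$ (and $|\mathscr{C}||\mathscr{C}^\perp|=q^{t^2}$). The resulting identity has left-hand side $\sum_{i=\varphi}^t b^{\varphi(t-i)}\bbinom{i}{\varphi}c_i'$ and, on the right, a nonzero scalar times $\sum_{i=0}^\varphi(-1)^i b^{\sigma_i}b^{i(\varphi-i)}\bbinom{t-i}{t-\varphi}\gamma(t-i,\varphi-i)c_i$. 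Now every index $i$ on the left satisfies $i\ge\varphi>\varrho_R'$, and since the diameter of a linear code coincides with its maximum codeword weight, $c_i'=0$ for all such $i$; hence the left-hand side vanishes. Cancelling the nonzero factor $(-b^t)^{t-\varphi}/|\mathscr{C}|$ then forces the right-hand sum to be zero, which is the second assertion.

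The computations here are entirely routine, so there is no real obstacle. The only point that needs a sentence of justification is the symmetry step: one must note that Proposition \ref{prop:hermitianmomentsbminusderivative} applies verbatim with $\mathscr{C}$ and $\mathscr{C}^\perp$ interchanged (using $(\mathscr{C}^\perp)^\perp=\mathscr{C}$), and recall the elementary fact that for a linear code the diameter equals the largest weight of a codeword, so that $c_i'=0$ whenever $i>\varrho_R'$.
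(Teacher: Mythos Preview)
Your proposal is correct and follows essentially the same approach as the paper: both parts are obtained by specialising Proposition \ref{prop:hermitianmomentsbminusderivative}, using $c_0'=1$, $c_1'=\cdots=c_\varphi'=0$ for the first identity and swapping the roles of $\mathscr{C}$ and $\mathscr{C}^\perp$ together with $c_\varphi'=\cdots=c_t'=0$ for the second. The only difference is that you spell out the justifications $(\mathscr{C}^\perp)^\perp=\mathscr{C}$ and that the diameter equals the maximum codeword weight, which the paper leaves implicit.
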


\begin{proof}
First consider $0\leq\varphi< d_{R}'$, then $c'_0=1$, $c_1'=\ldots=c_\varphi'=0$. Also since ${\begin{bmatrix} t \\ t-\varphi
\end{bmatrix}}={\begin{bmatrix} t \\ \varphi
\end{bmatrix}}$ the statement holds. Now if $\varrho_{R}'<\varphi\leq n$ then applying Proposition \ref{prop:hermitianmomentsbminusderivative} to $\mathscr{C}^\perp$ gives
\begin{equation}
    \sum_{i=\varphi}^t b^{\varphi(t-i)}{\begin{bmatrix}i\\\varphi\end{bmatrix}}c'_i 
        = 
    \frac{1}{|\mathscr{C}|}\left(-b^{t}\right)^{t-\varphi}\sum_{i=0}^{\varphi}(-1)^{i}{b^{\sigma_{i}}b^{i(\varphi-i)}\begin{bmatrix}t-i\\ t-\varphi\end{bmatrix}}\gamma(t-i,\varphi-i)c_{i}.
\end{equation}
So using $c_\varphi'=\ldots=c_t'=0$ we get
\begin{equation}
    0 = \sum_{i=0}^{\varphi}(-1)^{i}{b^{\sigma_{i}}b^{i(\varphi-i)}\begin{bmatrix}t-i\\ t-\varphi\end{bmatrix}}\gamma(t-i,\varphi-i)c_{i}
\end{equation}
as required.
\end{proof}

\subsection{MHRD Codes}\label{section:MHRD}

As an application for the MacWilliams Identity, we can derive an alternative proof for the explicit coefficients of the Hermitian rank weight distribution for some MHRD codes to that in \cite[Theorem 3]{KaiHermitian}. This is analogous to the results for MRD codes presented in \cite[Proposition 9]{gadouleau2008macwilliams} and \cite[Proposition 6.8]{friedlander2023macwilliams}. Firstly a lemma, analogous to the rank and skew rank cases, that will be needed.
\begin{lem}\label{lemma:hermitiansequences}
If $a_0,a_1,\ldots,a_\ell$ and $b_0,b_1,\ldots,b_\ell$ are two sequences of real numbers and if
\begin{equation*}
    a_j = \sum_{i=0}^{j}\bbinom{\ell-i}{\ell-j}b_i
\end{equation*}
for $0\leq j\leq \ell$, then also for $0 \leq i \leq \ell$ we have,
\begin{equation*}
    b_i = \sum_{j=0}^i (-1)^{i-j}b^{\sigma_{i-j}}\bbinom{\ell-j}{\ell-i}a_j.
\end{equation*}
\end{lem}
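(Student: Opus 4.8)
The plan is to prove this by the standard Gaussian‑coefficient orthogonality argument, with identity \eqref{equation:deltaijbs} as the engine. First I would substitute the hypothesis $a_j = \sum_{m=0}^{j}\bbinom{\ell-m}{\ell-j}b_m$ directly into the right‑hand side of the claimed formula, obtaining
\[
\sum_{j=0}^i (-1)^{i-j}b^{\sigma_{i-j}}\bbinom{\ell-j}{\ell-i}a_j = \sum_{j=0}^{i}\sum_{m=0}^{j}(-1)^{i-j}b^{\sigma_{i-j}}\bbinom{\ell-j}{\ell-i}\bbinom{\ell-m}{\ell-j}b_m .
\]
Interchanging the order of summation over the triangular region $0\le m\le j\le i$, this becomes $\sum_{m=0}^{i}b_m\,S_{i,m}$, where $S_{i,m}=\sum_{j=m}^{i}(-1)^{i-j}b^{\sigma_{i-j}}\bbinom{\ell-j}{\ell-i}\bbinom{\ell-m}{\ell-j}$, so it suffices to prove $S_{i,m}=\delta_{im}$.

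Next I would reindex the inner sum by $k=\ell-j$, so that $k$ runs from $\ell-i$ up to $\ell-m$ and $i-j=k-(\ell-i)\ge 0$. This gives
\[
S_{i,m}=\sum_{k=\ell-i}^{\ell-m}(-1)^{k-(\ell-i)}b^{\sigma_{k-(\ell-i)}}\bbinom{k}{\ell-i}\bbinom{\ell-m}{k},
\]
which is exactly the left‑hand side of \eqref{equation:deltaijbs} with the substitutions $i\mapsto \ell-i$ and $j\mapsto \ell-m$. Hence $S_{i,m}=\delta_{\ell-i,\,\ell-m}=\delta_{im}$, and the double sum collapses to $b_i$, which is the desired identity.

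The computation is essentially routine; the only points needing care are the bookkeeping of summation ranges — verifying that $m\le i$ throughout so that $\ell-i\le\ell-m$ and \eqref{equation:deltaijbs} legitimately applies, and that every Gaussian coefficient that appears has upper‑minus‑lower argument in the range covered by the conventions of Definition \ref{definition:negativeqGaussianCoeff}. I do not expect a genuine obstacle: once one recognises that \eqref{equation:deltaijbs} is the relevant orthogonality relation, the substitution $k=\ell-j$ does all the work. (An induction on $i$ using the recurrences \eqref{equation:Stepdown1}–\eqref{equation:beta1stepdown} would also succeed, but the orthogonality route is considerably shorter and is the one I would write up.)
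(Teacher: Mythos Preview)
Your proposal is correct and is essentially identical to the paper's own proof: the paper also substitutes the hypothesis into the right-hand side, interchanges the order of summation, reindexes via $s=\ell-j$ (your $k=\ell-j$), and invokes \eqref{equation:deltaijbs} to collapse the inner sum to a Kronecker delta. The only differences are notational (the paper uses $k$ and $s$ where you use $m$ and $k$).
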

\begin{proof}
For $0\leq i \leq \ell$, 
\begin{align*}
    \sum_{j=0}^i (-1)^{i-j}b^{\sigma_{i-j}}\bbinom{\ell-j}{\ell-i}a_j 
        & = \sum_{j=0}^i(-1)^{i-j}b^{\sigma_{i-j}}\bbinom{\ell-j}{\ell-i}\left(\sum_{k=0}^j \bbinom{\ell-k}{\ell-j}b_k\right)
        \\
        & = \sum_{k=0}^i \sum_{j=k}^i (-1)^{i-j}b^{\sigma_{i-j}}\bbinom{\ell-j}{\ell-i}\bbinom{\ell-k}{\ell-j}b_k
        \\
        & = \sum_{k=0}^i b_k \left(\sum_{s=\ell-i}^{\ell-k}(-1)^{i-\ell+s}b^{\sigma_{i-\ell+s}}\bbinom{s}{\ell-i}\bbinom{\ell-k}{s}\right)
        \\
        \overset{\eqref{equation:deltaijbs}}&{=} \sum_{k=0}^i b_k \delta_{ik}
        \\
        & = b_i
\end{align*}
as required.
\end{proof}
Before we write our next proposition, we shall explain a similar proposition presented by Schmidt \cite[Theorem 3]{KaiHermitian}. Theorem 3 states that if a code, $\mathscr{C}$, has minimum distance $d_{HR}$, and its dual, $\mathscr{C}^\perp$, has minimum distance at least $t-d_{HR}+1$, then the weight distribution is uniquely determined by its parameters. Moreoever, if $d_{HR}$ is odd and the code $\mathscr{C}$ meets the Singleton bound, i.e. $\vert\mathscr{C}\vert=q^{t(t-d_{HR}+1)}$ \eqref{SinglebuondHermitian} then, by \cite[Theorem 1]{KaiHermitian}, $\mathscr{C}^\perp$ has minimum distance at least $t-d_{HR}+2$ and the conditions for \cite{KaiHermitian} Theorem 3 are met. However, if $d_{HR}$ is even and $\mathscr{C}$ meets the Singleton bound, the weight distribution cannot necessarily be determined uniquely by its parameters and Schmidt provides a counterexample to show this. Consequently the following proposition looks specifically at codes which are MHRD, i.e. meets the Singleton bound, with minimum distance, $d_{HR}$, odd. We can then use \cite[Theorem 1]{KaiHermitian} and Corollary \ref{corrollary:simplificationpropbderiv} to derive the unique weight distribution of the code as a function of its parameters equivalent to \cite[Theorem 3]{KaiHermitian}.

\begin{prop}
    Let $\mathscr{C}\subseteq\mathscr{H}_{q,t}$ be a linear MHRD code with weight distribution $\boldsymbol{c}=(c_0,\ldots,c_t)$ and minimum distance $d_{HR}$ odd. Then we have $c_0=1$ and for $0\leq r \leq t-d_{HR}$,
\begin{equation*}
    c_{r+d_{HR}} = \sum_{i=0}^r (-1)^{r-i}b^{\sigma_{r-i}}\bbinom{d_{HR}+r}{d_{HR}+i}\bbinom{t}{d_{HR}+r}\left(\frac{\left(-b^{t}\right)^{d_{HR}+i}}{|\mathscr{C}^\perp|}-1\right).
\end{equation*}
\end{prop}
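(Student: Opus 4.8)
The plan is to combine the first‑moment identity of Corollary~\ref{corrollary:simplificationpropbderiv} with the inversion formula of Lemma~\ref{lemma:hermitiansequences}. First I would record the trivial inputs: since $\mathscr{C}$ is linear its only rank‑$0$ codeword is the zero matrix, so $c_0=1$, and since $d_{HR}$ is the minimum (weight) distance we have $c_1=\cdots=c_{d_{HR}-1}=0$. Because $\mathscr{C}$ meets the Singleton bound~\eqref{SinglebuondHermitian} with $d_{HR}$ odd, the discussion preceding the proposition (via \cite[Theorem~1]{KaiHermitian}) gives that $\mathscr{C}^\perp$ has minimum rank distance $d_R'\ge t-d_{HR}+2$; in particular $\varphi<d_R'$ for every $\varphi$ with $0\le\varphi\le t-d_{HR}$, so Corollary~\ref{corrollary:simplificationpropbderiv} applies throughout that range.

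Next I would evaluate Corollary~\ref{corrollary:simplificationpropbderiv} for $\varphi=0,1,\ldots,t-d_{HR}$. Discarding the vanishing coefficients and peeling off the $i=0$ term (which contributes $\bbinom{t}{\varphi}$ since $c_0=1$), each instance becomes
\[
\sum_{i=d_{HR}}^{t-\varphi}\bbinom{t-i}{\varphi}c_i=\bbinom{t}{\varphi}\!\left(\frac{(-b^{t})^{t-\varphi}}{|\mathscr{C}^\perp|}-1\right).
\]
Writing $\ell:=t-d_{HR}$, substituting $i=d_{HR}+k$ and $\varphi=\ell-j$, and using \eqref{equation:gaussianxx-k} in the form $\bbinom{t}{\varphi}=\bbinom{t}{t-\varphi}=\bbinom{t}{d_{HR}+j}$ together with $t-i=\ell-k$, this reads
\[
\sum_{k=0}^{j}\bbinom{\ell-k}{\ell-j}\,c_{d_{HR}+k}=a_j,\qquad a_j:=\bbinom{t}{d_{HR}+j}\!\left(\frac{(-b^{t})^{d_{HR}+j}}{|\mathscr{C}^\perp|}-1\right),
\]
for $0\le j\le\ell$, which is exactly the hypothesis of Lemma~\ref{lemma:hermitiansequences} with $b_k:=c_{d_{HR}+k}$. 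Applying the lemma then yields, for $0\le r\le\ell$,
\[
c_{d_{HR}+r}=\sum_{i=0}^{r}(-1)^{r-i}b^{\sigma_{r-i}}\bbinom{\ell-i}{\ell-r}\,a_i .
\]

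Finally I would reconcile this with the stated formula. Since $\ell-i=t-(d_{HR}+i)$ and $\ell-r=t-(d_{HR}+r)$, a short calculation with \eqref{equation:gaussianxx-k} and \eqref{equation:gaussianswapplaces} (equivalently, a direct manipulation of the product formula in Definition~\ref{definition:negativeqGaussianCoeff}) gives the trinomial‑type identity
\[
\bbinom{\ell-i}{\ell-r}\bbinom{t}{d_{HR}+i}=\bbinom{d_{HR}+r}{d_{HR}+i}\bbinom{t}{d_{HR}+r}.
\]
Substituting $a_i$ and this identity into the previous display produces precisely the claimed expression for $c_{r+d_{HR}}$, and the case $r=0$ together with $c_1=\cdots=c_{d_{HR}-1}=0$, $c_0=1$ completes the statement. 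The bulk of the work is the index bookkeeping in passing from Corollary~\ref{corrollary:simplificationpropbderiv} to the hypothesis of Lemma~\ref{lemma:hermitiansequences}; the one genuinely non‑mechanical point is recognising that the coefficient $\bbinom{\ell-i}{\ell-r}$ returned by the inversion must be rewritten, via the binomial identity above, into the shape $\bbinom{d_{HR}+r}{d_{HR}+i}$ with the factor $\bbinom{t}{d_{HR}+r}$ pulled out front — without this observation the answer looks different from the one in the statement.
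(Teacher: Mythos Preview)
Your proposal is correct and follows essentially the same route as the paper: invoke Corollary~\ref{corrollary:simplificationpropbderiv} across the range $0\le\varphi\le t-d_{HR}$ (justified by $d_R'\ge t-d_{HR}+2$ from \cite[Theorem~1]{KaiHermitian}), reindex to match the hypothesis of Lemma~\ref{lemma:hermitiansequences} with $\ell=t-d_{HR}$ and $b_k=c_{d_{HR}+k}$, invert, and then rewrite the resulting binomial product via \eqref{equation:gaussianxx-k} and \eqref{equation:gaussianswapplaces}. The paper's proof differs only in cosmetic indexing choices.
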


\begin{proof}
From Corollary \ref{corrollary:simplificationpropbderiv}, for $0 \leq \varphi < d_{HR}'$ we have
\begin{equation*}
    \sum_{i=0}^{t-\varphi}\bbinom{t-i}{\varphi}c_i = \frac{1}{|\mathscr{C}^\perp|}\left(-b^{t}\right)^{t-\varphi} \bbinom{t}{\varphi}.
\end{equation*}
    Now if a linear code $\mathscr{C}$ is MHRD, with minimum distance $d_{HR}$ odd, then $\mathscr{C}^\perp$ is also MHRD with minimum distance $d_{HR}'=t-d_{HR}+2$ by \cite[Theorem 1]{KaiHermitian}. So Corollary \ref{corrollary:simplificationpropbderiv} holds for $0\leq \varphi \leq t-d_{HR}=d_{HR}'-2$. We therefore have $c_0=1$ and $c_1=c_2=\ldots=c_{d_{HR}-1}=0$ and setting $\varphi=t-d_{HR}-j$ for $0\leq j\leq t-d_{HR}$ gives
    \begin{align*}
    \bbinom{t}{t-d_{HR}-j} + \sum_{i=d_{HR}}^{d_{HR}+j}\bbinom{t-i}{t-d_{HR}-j}c_i 
        & = \frac{1}{|\mathscr{C}^\perp|}\left(-b^{t}\right)^{d_{HR}+j} \bbinom{t}{t-d_{HR}-j}
        \\
    \sum_{r=0}^j \bbinom{t-d_{HR}-r}{t-d_{HR}-j}c_{r+d_{HR}} 
        & = \bbinom{t}{t-d_{HR}-j}\left(\frac{\left(-b^{t}\right)^{d_{HR}+j}}{|\mathscr{C}^\perp|}-1\right).
\end{align*}
Applying Lemma \ref{lemma:hermitiansequences} with $\ell = t-d_{HR}$ and $b_r = c_{r+d_{HR}}$ then setting 
\begin{equation*}
    a_j = {\bbinom{t}{t-d_{HR}-j}}\left(\frac{\left(-b^{t}\right)^{d_{HR}+j}}{|\mathscr{C}^\perp|}-1\right)
\end{equation*}
gives
\begin{equation*}
     \sum_{r=0}^j \bbinom{t-d_{HR}-r}{t-d_{HR}-j}b_r = a_j
\end{equation*}
and so
\begin{align*}
    b_r = c_{r+d_{HR}} 
        & = \sum_{i=0}^r (-1)^{r-i}b^{\sigma_{r-i}}\bbinom{t-d_{HR}-i}{t-d_{HR}-r}a_i
        \\
        & = \sum_{i=0}^r (-1)^{r-i}b^{\sigma_{r-i}}\bbinom{t-d_{HR}-i}{t-d_{HR}-r}\bbinom{t}{t-d_{HR}-i}\left(\frac{\left(-b^{t}\right)^{d_{HR}+i}}{|\mathscr{C}^\perp|}-1\right).
\end{align*}
But we have
\begin{align*}
    \bbinom{t-d_{HR}-i}{t-d_{HR}-r}\bbinom{t}{t-d_{HR}-i} 
        \overset{\eqref{equation:gaussianxx-k}}&{=} \bbinom{t-(d_{HR}+i)}{t-(d_{HR}+r)}\bbinom{t}{d_{HR}+i}
        \\
    \overset{\eqref{equation:gaussianswapplaces}}&{=}    
        \bbinom{d_{HR}+r}{d_{HR}+i}\bbinom{t}{t-(d_{HR}+r)}
        \\
    \overset{\eqref{equation:gaussianxx-k}}&{=}
        \bbinom{d_{HR}+r}{d_{HR}+i}\bbinom{t}{d_{HR}+r}.
\end{align*}
Therefore
\begin{equation*}
    c_{r+d_{HR}} = \sum_{i=0}^r (-1)^{r-i}b^{\sigma_{r-i}}\bbinom{d_{HR}+r}{d_{HR}+i}\bbinom{t}{d_{HR}+r}\left(\frac{\left(-b^{t}\right)^{d_{HR}+i}}{|\mathscr{C}^\perp|}-1\right)
\end{equation*}
as required.
\end{proof}

\newpage
\printbibliography

\end{document}